\documentclass[%
twocolumn,
superscriptaddress,
nofootinbib,
amsmath,amssymb,
aps,
prx
]{revtex4-2}

\usepackage{microtype}
\usepackage{graphicx}
\usepackage{booktabs}
\usepackage{multirow}
\usepackage{placeins}
\usepackage[T1]{fontenc}
\usepackage{ytableau}

\usepackage{hyperref}
\hypersetup{
  colorlinks=true,
  linkcolor=blue,
  citecolor=blue,
  urlcolor=blue
}

\usepackage{amsmath}
\usepackage{amssymb}
\usepackage{mathtools}
\usepackage{amsthm}

\makeatletter
\newsavebox{\@brx}
\newcommand{\llangle}[1][]{\savebox{\@brx}{\(\m@th{#1\langle}\)}%
  \mathopen{\copy\@brx\kern-0.5\wd\@brx\usebox{\@brx}}}
\newcommand{\rrangle}[1][]{\savebox{\@brx}{\(\m@th{#1\rangle}\)}%
  \mathclose{\copy\@brx\kern-0.5\wd\@brx\usebox{\@brx}}}
\makeatother

\usepackage[dvipsnames]{xcolor}

\newcommand{\ket}[1]{|#1\rangle}
\newcommand{\bra}[1]{\langle#1|}

\newcommand{\SO}{\mathrm{SO}}
\newcommand{\SU}{\mathrm{SU}}
\newcommand{\Sym}{\mathfrak{S}}

\DeclareMathOperator{\Tr}{Tr}
\DeclareMathOperator{\rank}{rank}

\usepackage[capitalize,noabbrev]{cleveref}

\theoremstyle{plain}
\newtheorem{theorem}{Theorem}
\newtheorem{proposition}[theorem]{Proposition}

\newtheorem{observation}[theorem]{Observation}
\theoremstyle{definition}

\theoremstyle{remark}

\begin{document}

\title{HyQuRP: Hybrid quantum-classical neural network with rotational and permutational equivariance}

\author{Semin Park}
\email{s.park.semin@gmail.com}
\affiliation{School of Integrated Technology, Yonsei University, Seoul 03722, Korea}

\author{Chae-Yeun Park}
\email{chae-yeun@yonsei.ac.kr}
\affiliation{School of Integrated Technology, Yonsei University, Seoul 03722, Korea}
\affiliation{Department of Quantum Information, Yonsei University, Seoul 03722, Korea}
\affiliation{BK21 Graduate Program in Intelligent Semiconductor Technology, Seoul 03722, Korea}

\date{\today}

\begin{abstract}
Group-equivariant quantum machine learning has emerged as a promising paradigm by incorporating symmetry into quantum models.
However, constructing models simultaneously equivariant to both rotational and permutational symmetries in a principled manner remains a bottleneck.
In this work, we develop a general framework for dual-equivariant gates under rotations and permutations and analyze the dimension of the resulting gate space using group representation theory. 
Based on this, we introduce HyQuRP, a hybrid quantum-classical neural network with dual equivariance. 
On 3D point cloud classification benchmarks in the sparse-point regime, HyQuRP outperforms strong classical and quantum baselines.
For example, when six subsampled points are used, HyQuRP ($\sim$1.5K parameters) achieves 76.13\% accuracy on the 5-class ModelNet benchmark, compared with 72.54\%, 71.09\%, and 71.03\% for Tensor Field Network, PointNet, and PointMamba with similar parameter counts.
These results highlight HyQuRP's strong data efficiency and suggest the potential of equivariant quantum machine learning approaches in symmetry-sensitive tasks.
\end{abstract}

\maketitle

\section{Introduction}

Embedding group equivariance into a neural network is one of the most successful principles for designing machine learning models that process data whose features are invariant under group transformations~\cite{cohen2016group}.
A convolutional layer~\cite{lecun1989backpropagation,krizhevsky2012imagenet} is a classic example that preserves translational symmetry in the input, which is relevant for image data whose characteristics are invariant under such transformations.
Recently, neural networks equivariant under more advanced symmetries, such as the symmetric group $\Sym_n$~\cite{zaheer2017deep,lee2019set}, Euclidean group $E(3)$~\cite{satorras2021n,du2022se,batzner20223}, the special orthogonal group $\SO(3)$~\cite{esteves2018learning,deng2021vector,kim2024continuous}, and the Lorentz group $\SO(1,3)$~\cite{bogatskiy2020lorentz}, have been proposed.
These studies report higher classification accuracy and better sample efficiency with fewer parameters across various datasets.

Quantum machine learning (QML) is an emerging area that combines quantum
computing with machine learning. Broadly, QML approaches can be grouped into parameterized quantum circuit models, which learn through trainable quantum gates~\citep{farhi2018qnn,mitarai2018qcl,schuld2020circuitcentric,abbas2021power}, and quantum kernel methods, which use quantum feature maps in kernel-based classification~\citep{havlicek2019qefs,schuld2021supervised}. However, across a range of standard classification tasks, existing QML models have not yet convincingly surpassed strong classical baselines. Several empirical studies report that their performance often lags behind well-tuned classical models~\citep{kuros2022trafficsign,bowles2024better,schnabel2025quantumkernel}, including simple architectures such as MLPs or CNNs, in capacity-controlled comparisons. This persistent gap suggests the need for stronger inductive biases in QML.

Equivariant QML is emerging as a particularly promising direction for narrowing this gap. Indeed, recent equivariant QML models have shown performance gains by incorporating symmetries such as permutations, reflections, $\SU(2)$ rotations, and Lorentz transformations~\cite{meyer2023exploiting,West2023ReflectionEquivariantQNN,le2025symmetry,larocca2022group,nguyen2024theory,jahin2025lorentz,east2026all}. Yet the broader challenge remains open: how to build QML models that accommodate multiple symmetries simultaneously without relying on ad hoc constructions.
This challenge becomes particularly acute in settings where rotational and permutational symmetries must be preserved at the same time, including 3D point clouds, molecular structures, and crystalline materials, among others.
However, in the standard qubit setting where these symmetries are imposed globally, their joint realization is obstructed by the Schur--Weyl duality, which makes such dual-equivariant gates essentially trivial. 
Addressing this obstruction calls for a general treatment of how both symmetries can be incorporated simultaneously and lays the foundation for an equivariant QML architecture that preserves both the rotational and permutational symmetries.

In this paper, we first examine the challenges of equivariant QML under both rotations and permutations. 
We then develop a general framework for constructing and analyzing gates that are equivariant under both symmetries. This framework overcomes the previous bottleneck by replacing qubit-wise permutations with block-wise permutations.
Based on this, we introduce HyQuRP, a hybrid quantum-classical neural network with rotational and permutational equivariance. HyQuRP consists of four quantum stages and a classical head, each carefully designed to preserve both the rotation and permutation of input data.
As a concrete testbed, we evaluate HyQuRP on standard 3D point cloud benchmarks in a sparse-point regime to assess its data efficiency. Across two benchmarks, HyQuRP outperforms classical and quantum state-of-the-art (SOTA) models that are tuned to have similar parameter counts.

The main contributions of this work are summarized as follows:

\begin{itemize}
    \item We introduce a general construction of dual-equivariant gates under rotations and permutations and characterize the dimension of the corresponding gate space, laying a principled foundation beyond ad hoc constructions.
    
    \item We propose HyQuRP, a hybrid quantum-classical architecture that performs classification invariant to global rigid rotations and permutations of the input points. All stages of HyQuRP are organized within a unified representation-theoretic framework.
    
    \item We conduct experiments against strong classical and quantum 3D point cloud baseline models and find that HyQuRP generally surpasses them.
\end{itemize}

\section{Preliminaries}
This section reviews mathematical foundations for the design of group-equivariant quantum circuits.
We only present several significant results here, while providing a pedagogical review in Appendix~\ref{app:group_theory}.

\subsection{Group equivariance and invariance}
Let $G$ be a group and let 
$\rho_X : G \to \mathrm{GL}(X)$ and $\rho_Y : G \to \mathrm{GL}(Y)$ 
be representations of $G$ on $X$ and $Y$, respectively.
A function $f : X \to Y$ is called $G$-invariant if
$f\bigl(\rho_X[g] \cdot x\bigr) = f(x)$ and $G$-equivariant if
$f\bigl(\rho_X[g] \cdot x\bigr) = \rho_Y[g] \cdot f(x)$, for all $g \in G$ and $x \in X$.

A common approach to constructing invariant networks is to compose $G$-equivariant layers with invariant aggregation operations:
\begin{equation}
f_{\mathrm{inv}}(x)
= \mathrm{Agg}\bigl(f_{\mathrm{eq}}(x)\bigr)
\end{equation}
where $f_{\mathrm{eq}}$ preserves group structure and $\mathrm{Agg}$ removes it, yielding overall $G$-invariance. 

For a quantum circuit $C$, the equivariance condition translates to
\begin{align}
R(g)C \ket{\psi} = CR(g) \ket{\psi},
\end{align}
where $R(g)$ is a group representation of $g \in G$ to a unitary operator, and $\ket{\psi}$ is a quantum state.
To satisfy this condition for arbitrary $\ket{\psi}$, we need
\begin{align}
R(g) C R(g)^\dagger = C
\end{align}
for all $g \in G$, which implies that $C$ is an element of the commutant algebra.

Typically, for QML models, one can further extract invariant outputs from a $G$-equivariant circuit using $G$-invariant input states and observables. We will discuss such details in Sec.~\ref{sec:Method}.

\subsection{Schur--Weyl duality}

Let \(V \cong \mathbb{C}^d\) and \(W := V^{\otimes n}\). There are two natural actions on \(W\): 
the action of \(\mathrm{GL}(V)\), given by 
\begin{align}
    \rho : \mathrm{GL}(V) \to \mathrm{GL}(W), \qquad \rho(T)=T^{\otimes n},
\end{align}  
and the action of the symmetric group $\mathfrak S_n$, given by 
$\Pi : \mathfrak S_n \to \mathrm{GL}(W)$,
where $\Pi(\sigma)$ permutes the tensor factors according to \(\sigma \in \mathfrak S_n\), i.e., 
it transforms a product state $\ket{i_1,i_2,\cdots,i_n}$ to 
\begin{align}
    \Pi(\sigma) \ket{i_1,i_2\cdots,i_n} = \ket{i_{\sigma^{-1}(1)},i_{\sigma^{-1}(2)},\cdots, i_{\sigma^{-1}(n)}}.
\end{align}
Thus, for any $T \in \mathrm{GL}(V)$ and $\sigma \in \Sym_n$, $T^{\otimes n} \Pi(\sigma) =\Pi(\sigma)T^{\otimes n}$.

Schur--Weyl duality states that \(W\) admits the simultaneous decomposition
\begin{align}
W
\cong
\bigoplus_{\lambda \vdash n,\; \ell(\lambda)\le d}
\mathcal{V}^{\lambda}\otimes \mathfrak{S}^{\lambda},
\end{align}
where \(\mathcal{V}^{\lambda}\) and \(\mathfrak{S}^{\lambda}\) denote the irreducible representations of \(\mathrm{GL}(V)\) and \(\mathfrak{S}_n\), respectively.
The direct sum is over all partitions of $n$, $\lambda \vdash n$, whose length $\ell(\lambda)$ is not greater than $d$.
For example, $\lambda \in \{(4), (3,1), (2,2), (2, 1, 1), (1,1,1,1)\}$ are valid partitions of $n=4$ whose lengths are $1, 2, 2, 3, 4$, respectively  
(see Appendices~\ref{app:symmetric_group} and \ref{app:schur_weyl_duality} for further details).

In the \(n\)-qubit setting, where \(W=(\mathbb{C}^2)^{\otimes n}\), the action \(\rho\) above can naturally be restricted to the subgroup \(\mathrm{SU(2)}\subset \mathrm{GL}(\mathbb{C}^2)\), which implements single-qubit state transformations. 

Consequently, the condition \(\ell(\lambda)\le 2\) restricts \(\lambda\) to two-part partitions of the form \((n-k,k)\)  and \(\mathcal{V}^{\lambda}\) is identified with the irreducible representation of $\mathrm{SU(2)}$ which is nothing more than a spin-$J$ space with total spin $J=n/2-k$. Hence,
\begin{align}
\label{eq:schur-weyl}
(\mathbb{C}^2)^{\otimes n}
\cong
\bigoplus_{k=0}^{\lfloor n/2\rfloor}
J^{(n/2-k)} \otimes \mathfrak{S}^{(n-k,k)},
\end{align}
where $J^{(n/2-k)}$ is a $(n-2k+1)$-dimensional vector space representing total spin $J=n/2-k$. 
This decomposition will be used in Sec.~\ref{sec:theory} to analyze the constraints on \(W=(\mathbb{C}^2)^{\otimes n}\) arising from these two natural actions.

\subsection{SU(2) equivariant gates}
\label{sec:su2-equiv-gates}

Single-qubit gates are naturally represented as elements of SU(2), which is a double cover of SO(3).
This fundamental relationship allows quantum circuits to be rotation-equivariant in 3D space through the covering map $\SU(2) \rightarrow \mathrm{SO}(3)$, enabling direct processing of geometric data while preserving rotational symmetries. 
As discussed above, symmetric group actions commute with $\SU(2)$ transformations.
This commutation property makes permutations inherently $\SU(2)$-equivariant, providing a foundation for constructing more general equivariant operations.

We extend this concept by introducing the space of generalized permutations, $\mathbb{C}[\Sym_n]$, defined as:
\begin{equation}
\mathbb{C}[\Sym_n] = \Bigl\{ \sum_{\sigma \in \Sym_n} c_\sigma \sigma \Bigr\},
\end{equation}
and its representation
\begin{align}
\Pi(\mathbb{C}[\Sym_n]) = \Bigl\{ \sum_{\sigma \in \Sym_n} c_\sigma \Pi(\sigma) \Bigr\},
\end{align}
where each $c_\sigma \in \mathbb{C}$ is a complex coefficient and the sum is over all permutations. 
For $P \in \Pi(\mathbb{C}[\Sym_n])$, its exponential map is defined as 
\begin{align}
\exp(P) = \sum_{k=0}^\infty \frac{1}{k!}\Bigl( \sum_{\sigma \in \Sym_n} c_\sigma \Pi(\sigma)  \Bigr)^k \in \Pi(\mathbb{C}[\Sym_n]),
\end{align}
since $\mathbb{C}[\Sym_n]$ is closed under addition and multiplication.

Remarkably, this construction is not merely sufficient but also necessary: leveraging Schur's lemma, Schur--Weyl duality, and various tools from representation theory, one can show that every $\mathrm{SU}(2)$-equivariant unitary operator admits such a representation (see Appendix~\ref{app:generalized_permutation} for details).

\subsection{Group twirling}
The twirling formula provides a systematic method for projecting arbitrary operators onto the subspace of group-equivariant operators through group averaging. For a group $G$ acting on a Hilbert space with unitary representation $V[g]$ and operator $A$, the twirling operation is defined as
\begin{equation}
\mathcal{T}_G[A] = \frac{1}{|G|} \sum_{g \in G} V[g] A V[g]^{\dagger},
\end{equation}
for a finite group and 
\begin{equation}
\mathcal{T}_G[A] = \int d\mu(g) \, V[g] A V[g]^{\dagger}
\end{equation}
for a continuous group (compact group) with the normalized Haar measure $\mu(g)$.

By construction, the twirled operators $T_G[A]$ commute with all representation matrices, i.e., commute with $V[g]$ for all $g \in G$, ensuring that they are group equivariant (see Appendix~\ref{app:group_twirling} for a proof).
This averaging procedure is fundamental for constructing group-equivariant neural network layers that respect the underlying symmetries of data.

\section{Theory of dual-equivariant quantum circuits from pair permutations}
\label{sec:theory}
Although models that achieve dual equivariance under both rotations and permutations are crucial across a range of geometric and physical settings, their principled construction has remained a bottleneck. In this section, we first explain why formulating dual-equivariant models is difficult under Schur--Weyl duality when we consider permutations between all qubits. 
We then show that meaningful dual-equivariant operators can be constructed when the permutation is restricted to a \textit{subgroup} of the symmetric group, and find a general form of such dual-equivariant operators.
Choosing the pair permutation as a subgroup, where each block contains two qubits, and those blocks are allowed to be permuted without changing the internal order, we quantify the expressive size of this gate class by computing the dimension of the corresponding generator space. 
Detailed proofs of the results given in this section are provided in Appendix~\ref{app:proof}.

\subsection{Constraints on dual equivariance}

Let \(\mathfrak{S}_n\) be the symmetric group on \(n\) symbols, and let \(\Pi : \mathfrak{S}_n \to \mathrm{GL}(W)\) be its permutation representation on
\(W = (\mathbb{C}^2)^{\otimes n}\). We also write \(\rho(U)=U^{\otimes n}\) for \(U\in \mathrm{SU}(2)\), and regard this as the global \(\mathrm{SU}(2)\)-action on \(W\).

By the Schur--Weyl decomposition in Eq.~\ref{eq:schur-weyl}, \(W\) decomposes into irreducible $\mathrm{SU(2)}$- and $\mathfrak{S}_n$-modules. Using this decomposition, we first show that an operator equivariant under any permutation and $\SU(2)$ group transformation must act trivially within a subspace. 
\begin{observation}
\label{obv:joint_su2_sn_equiv_gate}
If an operator $M$ satisfies
\begin{align}
[M,\rho(U)] = 0
\; \forall U \in \mathrm{SU}(2),\;
\text{and}\;
[M,\Pi(\sigma)] = 0
\; \forall \sigma \in \mathfrak{S}_n,
\label{eq:joint_equivariance_prop}
\end{align}
then
\begin{align}
M
=
\bigoplus_{k=0}^{\lfloor n/2\rfloor}
c_k\,
I_{J^{(n/2-k)}}
\otimes
I_{\Sym^{(n-k,k)}}
\;
\text{for some } c_k\in\mathbb{C}.
\label{eq:joint_equivariance_scalar_blocks}
\end{align}
In particular, if $M$ is unitary, then $c_k=e^{i\theta_k}$ for some $\theta_k\in\mathbb{R}$.
\end{observation}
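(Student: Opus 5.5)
The plan is to work in the Schur--Weyl basis of Eq.~\eqref{eq:schur-weyl} and apply Schur's lemma twice: once to the $\mathrm{SU}(2)$ factor and once to the $\Sym_n$ factor. Commutation with both actions then forces $M$ to be a scalar on each isotypic block.

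\textbf{Step 1: commutant of $\mathrm{SU}(2)$.} Fix the isomorphism
$W\cong\bigoplus_{k} J^{(n/2-k)}\otimes\Sym^{(n-k,k)}$. Under it, $\rho(U)$ acts as $\bigoplus_k D^{(n/2-k)}(U)\otimes I$ and $\Pi(\sigma)$ acts as $\bigoplus_k I\otimes \pi_k(\sigma)$, where $D$ and $\pi_k$ are the irreducible representations. The spins $n/2-k$ are pairwise distinct, so the summands $J^{(n/2-k)}$ are pairwise inequivalent irreps of $\mathrm{SU}(2)$.

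Write $M$ in block form $M_{kl}: J^{(n/2-l)}\otimes\Sym^{(n-l,l)}\to J^{(n/2-k)}\otimes\Sym^{(n-k,k)}$. Choose a basis $\{e_a\}$ of $\Sym^{(n-l,l)}$ and a basis $\{f_b\}$ of $\Sym^{(n-k,k)}$. Each matrix element $(I\otimes f_b^*)\,M_{kl}\,(I\otimes e_a)$ is then an $\mathrm{SU}(2)$-intertwiner $J^{(n/2-l)}\to J^{(n/2-k)}$. By Schur's lemma it vanishes for $k\neq l$ and is a scalar multiple of the identity for $k=l$. Hence $M=\bigoplus_k I_{J^{(n/2-k)}}\otimes A_k$ for some operators $A_k$ on $\Sym^{(n-k,k)}$.

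\textbf{Step 2: commutant of $\Sym_n$.} Commutation with $\Pi(\sigma)$ now reads $[A_k,\pi_k(\sigma)]=0$ for all $\sigma$ and all $k$. Here it suffices that each $J^{(n/2-k)}$ is nonzero, so the identity factor can be cancelled. Since $\Sym^{(n-k,k)}$ is an irreducible complex representation of $\Sym_n$, Schur's lemma over $\mathbb{C}$ gives $A_k=c_k I$. This yields Eq.~\eqref{eq:joint_equivariance_scalar_blocks}.

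\textbf{Step 3: the unitary case.} If $M$ is unitary, each block $c_k I$ is unitary, so $|c_k|=1$ and $c_k=e^{i\theta_k}$.

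\textbf{Main obstacle.} The only delicate point is the block-decomposition argument in Step~1. It must apply Schur's lemma correctly to the multiplicity spaces rather than to the whole tensor product, and it relies on the inequivalence of the $J^{(n/2-k)}$ for distinct $k$. An alternative route is to cite the double-commutant form of Schur--Weyl duality: the commutant of $\rho(\mathrm{SU}(2))$ is $\Pi(\mathbb{C}[\Sym_n])$, as in Sec.~\ref{sec:su2-equiv-gates}. On that route $M$ lies in $\Pi(\mathbb{C}[\Sym_n])$ and also commutes with it, so $M$ lies in its center, which is spanned by the isotypic projectors. This gives the same conclusion.
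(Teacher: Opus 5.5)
Your proof is correct and follows essentially the same route as the paper: Schur's lemma is applied to the $\mathrm{SU}(2)$ factor and to the $\mathfrak{S}_n$ factor of the Schur--Weyl decomposition, off-diagonal sectors vanish because the spins $n/2-k$ are inequivalent, and unitarity gives $|c_k|=1$. Your sequential version is slightly tidier than the paper's, which imposes both conditions blockwise in parallel and handles the off-diagonal components separately; you first reduce to $\bigoplus_k I\otimes A_k$ and then apply Schur's lemma to $A_k$, which avoids the implicit step $I\otimes B_k = C_k\otimes I \Rightarrow$ scalar.
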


\begin{proof}[Proof sketch]
For each \(k\), consider the summand
\(
J^{(n/2-k)}\otimes \Sym^{(n-k,k)}
\)
in the decomposition of \(W\).
By Schur's lemma, the condition \([M,\rho(U)]=0\) forces \(M\) to act as
\(I\otimes B_k\) on this block, whereas \([M,\Pi(\sigma)]=0\) forces it to act
as \(C_k\otimes I\). Hence, the block must be scalar. Summing over \(k\) gives
the claimed decomposition, and unitarity implies \(|c_k|=1\).
See Appendix~\ref{app:proof_joint_su2_sn_equiv_gate} for detailed proof.
\end{proof}

The above observation tells us that dual-equivariant gates are fully characterized by $\lfloor n/2 \rfloor + 1$ free parameters, thus occupy only an exponentially small subset of the full operator space acting on the Hilbert space, $W=(\mathbb{C}^2)^{\otimes n}$.

We next show that a vector that is invariant under any permutation and $\SU(2)$ group transformation must be a zero vector.

\begin{observation}
\label{obv:no_joint_fixed_vectors}
Define
\begin{equation}
W^{\mathrm{SU}(2)}
:=
\{v\in W : \rho(U)v=v \text{ for all } U\in \mathrm{SU}(2)\}
\end{equation}
and
\begin{equation}
W^{\mathfrak{S}_n}
:=
\{v\in W : \Pi(\sigma)v=v \text{ for all } \sigma\in \mathfrak{S}_n\}.
\end{equation}
Then
\begin{equation}
W^{\mathrm{SU}(2)} \cap W^{\mathfrak{S}_n} = \{0\}.
\end{equation}
\end{observation}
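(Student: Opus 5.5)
We will use the Schur--Weyl decomposition in Eq.~\eqref{eq:schur-weyl},
\(
W\cong\bigoplus_{k=0}^{\lfloor n/2\rfloor} J^{(n/2-k)}\otimes \Sym^{(n-k,k)},
\)
and identify the two fixed-point subspaces block by block. Since $\rho(U)$ acts as $\rho_k(U)\otimes I$ and $\Pi(\sigma)$ as $I\otimes\pi_k(\sigma)$ on the $k$-th summand, and the summands are invariant under both actions, a vector fixed by either group decomposes into fixed components in each summand.

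\textbf{Step 1: the $\SU(2)$-fixed space.} On the $k$-th summand, the $\SU(2)$-invariant vectors form $(J^{(n/2-k)})^{\SU(2)}\otimes \Sym^{(n-k,k)}$. Because $J^{(n/2-k)}$ is irreducible, its invariant subspace is either the whole space (when it is the trivial representation) or zero. The trivial representation occurs only for spin $0$, that is, for $k=n/2$, which requires $n$ to be even. Hence $W^{\SU(2)}$ is contained in the single summand $J^{(0)}\otimes\Sym^{(n/2,n/2)}$, and it is zero when $n$ is odd.

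\textbf{Step 2: the $\Sym_n$-fixed space.} In the same way, $\Pi$-invariant vectors in the $k$-th summand form $J^{(n/2-k)}\otimes(\Sym^{(n-k,k)})^{\Sym_n}$. This is nonzero only when $\Sym^{(n-k,k)}$ is the trivial irrep, which is $\lambda=(n)$, i.e.\ $k=0$. So $W^{\Sym_n}$ lies in the $k=0$ summand. This is just the symmetric subspace, of total spin $n/2$.

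\textbf{Step 3: intersect.} A vector in both subspaces must lie in the $k=n/2$ summand and in the $k=0$ summand. These summands are distinct whenever $n\ge1$, because $n/2\neq 0$. Since the decomposition is a direct sum, the vector is zero.

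The only delicate point is making sure that ``fixed'' components really decompose summand-wise. Each summand is a joint isotypic component, so the projections onto summands commute with both actions, and applying such a projection to a fixed vector gives a fixed vector. We will also assume $n\ge1$; for $n=0$, $W=\mathbb{C}$ and the statement fails trivially. As a sanity check, one can argue more directly that $W^{\Sym_n}$ is the spin-$n/2$ irrep, on which $\SU(2)$ acts nontrivially, so it contains no $\SU(2)$-fixed vector.
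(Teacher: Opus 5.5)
Your proposal is correct and follows essentially the same route as the paper: both use the Schur--Weyl decomposition to place $W^{\mathrm{SU}(2)}$ in the spin-$0$ ($k=n/2$) summand and $W^{\mathfrak{S}_n}$ in the $k=0$ summand, then conclude from $n/2\neq 0$ that the intersection is trivial. Your additional remarks also make explicit two points the paper leaves tacit: that fixed vectors decompose summand-wise via the isotypic projections, and that the argument needs $n\ge 1$.
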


\begin{proof}
Using the Schur-Weyl duality, a vector fixed by both
\(\mathrm{SU}(2)\) and \(\mathfrak S_n\) would have to lie in a summand whose
\(\mathrm{SU}(2)\)-factor and \(\mathfrak S_n\)-factor are both trivial.
However, the former occurs only for \(k=\frac n2\), whereas the latter occurs
only for \(k=0\). Hence, no such nonzero summand exists, so
\(
W^{\mathrm{SU}(2)} \cap W^{\mathfrak S_n}=\{0\}.
\)
See Appendix~\ref{app:proof_no_joint_fixed_vectors} for detailed proof.
\end{proof}
Obviously, since the zero vector is not normalizable, it cannot be a valid quantum state.

Taken together, the above two observations show that, under the permutation action of \(\mathfrak{S}_n\) and the global \(\mathrm{SU}(2)\)-action on \(W\), imposing both symmetries simultaneously is highly restrictive: jointly equivariant operators have severely limited expressive power, and no nontrivial jointly invariant pure state exists. Consequently, these two constraints constitute a bottleneck in constructing a model that is simultaneously invariant under both actions.

\subsection{General structure of dual-equivariant operators}
\label{sec:4.3.1}
In the previous subsection, we showed that an operator that commutes with both the $\Sym_n$ and $\SU(2)$ groups must act trivially on each subspace, thereby limiting its applicability.
Instead, we now consider an operator that commutes only with elements within $H \leq \Sym_n$.
The motivation for considering a subgroup will become clear in the next subsection.

We begin by defining the set of all representations of generalized permutations,
\begin{equation}
\mathcal{A}_n := \Pi\bigl(\mathbb{C}[\mathfrak{S}_n]\bigr).
\end{equation}
Here, $\mathbb{C}[\mathfrak{S}_n] = \{\sum_{\sigma \in \mathfrak{S}_n} c_\sigma \sigma \}$ is a group algebra generated by $\Sym_n$.
For a subgroup $H \le \mathfrak{S}_n$, let \begin{align}
\mathbb{C}[\mathfrak{S}_n]^H
:=
\left\{
X \in \mathbb{C}[\mathfrak{S}_n]
\ \middle|\
\sigma X \sigma^{-1}=X,
\,
\forall \sigma \in H
\right\}, 
\label{eq:group_alg_H_inv}
\end{align}
and
\begin{align}
\mathcal{A}_n^H
:=
\left\{
M \in \mathcal{A}_n
\;\middle|\;
\Pi(\sigma) M \Pi(\sigma)^\dagger = M,\, \forall \sigma \in H
\right\}.
\end{align}
For any operator $M$, the twirling map over $H$ is given by \begin{equation} \label{eq:general-twirl} \mathcal{T}_H[M] := \frac{1}{|H|} \sum_{\sigma \in H} \Pi(\sigma)\, M\, \Pi(\sigma)^\dagger. \end{equation}

We first show that twirling $\mathcal{A}_n$ over \(H\) projects precisely onto the \(H\)-invariant subspace $\mathcal{A}_n^H$.

\begin{proposition}
\label{pro:twir_equal}
For any subgroup $H \le \mathfrak{S}_n$, one has
\begin{equation}
\mathcal{T}_H[\mathcal{A}_n] = \mathcal{A}_n^H.
\end{equation}
\end{proposition}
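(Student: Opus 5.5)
We prove the two inclusions $\mathcal{T}_H[\mathcal{A}_n]\subseteq\mathcal{A}_n^H$ and $\mathcal{A}_n^H\subseteq\mathcal{T}_H[\mathcal{A}_n]$ separately. Both rest on two elementary facts. First, $\Pi$ is a group homomorphism, so $\Pi(\sigma)^\dagger=\Pi(\sigma)^{-1}=\Pi(\sigma^{-1})$ because permutation operators are unitary. Second, $\mathcal{A}_n=\Pi(\mathbb{C}[\mathfrak{S}_n])$ is a linear subspace, indeed a subalgebra, since $\mathbb{C}[\mathfrak{S}_n]$ is closed under addition and multiplication and $\Pi$ extends linearly.

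\emph{Inclusion $\subseteq$.} Take $M=\sum_{\tau}c_\tau\Pi(\tau)\in\mathcal{A}_n$. For each $\sigma\in H$,
\begin{equation}
\Pi(\sigma)M\Pi(\sigma)^\dagger=\sum_\tau c_\tau\Pi(\sigma\tau\sigma^{-1}),
\end{equation}
which is again an element of $\mathcal{A}_n$. Averaging over $\sigma\in H$ is a finite linear combination, so $\mathcal{T}_H[M]\in\mathcal{A}_n$. For $h\in H$, reindex the sum by $\sigma\mapsto h\sigma$, which is a bijection of $H$. This gives
\begin{equation}
\Pi(h)\mathcal{T}_H[M]\Pi(h)^\dagger=\frac{1}{|H|}\sum_{\sigma\in H}\Pi(h\sigma)M\Pi(h\sigma)^\dagger=\mathcal{T}_H[M].
\end{equation}
Hence $\mathcal{T}_H[M]\in\mathcal{A}_n^H$.

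\emph{Inclusion $\supseteq$.} Take $M\in\mathcal{A}_n^H$. Every term in the twirl equals $M$ by the invariance condition, so $\mathcal{T}_H[M]=M$. Since $M\in\mathcal{A}_n$, this shows $M\in\mathcal{T}_H[\mathcal{A}_n]$. This also shows that $\mathcal{T}_H$ restricted to $\mathcal{A}_n$ is an idempotent projection onto $\mathcal{A}_n^H$.

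\emph{Expected obstacle.} The main subtlety is making sure we never need $\Pi$ to be injective. The statement is phrased at the operator level, so a single $M$ may have several preimages in $\mathbb{C}[\mathfrak{S}_n]$. This non-injectivity really occurs for qubits: antisymmetrizers over three or more sites vanish because $\ell(\lambda)\le 2$. The argument above never chooses a preimage. It uses only closure of $\mathcal{A}_n$ under conjugation by $\Pi(\sigma)$, which follows from writing $M$ with any preimage. If the paper later wants the analogous statement $\mathcal{T}_H[\mathbb{C}[\mathfrak{S}_n]]=\mathbb{C}[\mathfrak{S}_n]^H$ at the group-algebra level, the same two-inclusion argument applies verbatim there. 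Applying $\Pi$ then gives $\Pi(\mathbb{C}[\mathfrak{S}_n]^H)\subseteq\mathcal{A}_n^H$, with equality following from the present proposition, since $\Pi\circ\mathcal{T}_H=\mathcal{T}_H\circ\Pi$.
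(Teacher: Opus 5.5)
Your proof is correct and follows the paper's argument: the inclusion $\mathcal{T}_H[\mathcal{A}_n]\subseteq\mathcal{A}_n^H$ via the reindexing $\sigma\mapsto h\sigma$, and the reverse inclusion from $\mathcal{T}_H[M]=M$ for $M\in\mathcal{A}_n^H$. You also make explicit a step the paper leaves implicit. The paper's cited twirling lemma only shows that $\mathcal{T}_H[M]$ commutes with each $\Pi(h)$, whereas your identity $\Pi(\sigma)\Pi(\tau)\Pi(\sigma)^\dagger=\Pi(\sigma\tau\sigma^{-1})$ is what guarantees that $\mathcal{T}_H[M]$ stays inside $\mathcal{A}_n$.
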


\begin{proof}
As shown in Appendix~\ref{app:group_twirling}, one always has \(\mathcal{T}_H[\mathcal{A}_n] \subseteq \mathcal{A}_n^H.\)

Conversely, if $M \in \mathcal{A}_n^H$, then
\begin{equation}
\mathcal{T}_H[M]
=
\frac{1}{|H|}
\sum_{\sigma \in H}
\Pi(\sigma)\, M\, \Pi(\sigma)^\dagger
=
\frac{1}{|H|}
\sum_{\sigma \in H} M
=
M.
\end{equation}
Hence $M = \mathcal{T}_H[M]$, which shows that $M$ lies in the image of $\mathcal{T}_H$. Therefore,
\begin{equation}
\mathcal{A}_n^H \subseteq \mathcal{T}_H[\mathcal{A}_n].
\end{equation}
It follows that
\begin{equation}
\mathcal{T}_H[\mathcal{A}_n] = \mathcal{A}_n^H.
\end{equation}
\end{proof}

Building on Proposition~\ref{pro:twir_equal}, we now introduce a generalized class of unitary operators that are equivariant under both global $\mathrm{SU}(2)$ rotations and the permutation action of a subgroup $H \le \mathfrak{S}_n$.

\begin{theorem}
\label{thm:twir_a}
A unitary operator $Q$ commutes with both the global $\mathrm{SU}(2)$ action and the permutation action of $H$ if and only if there exists a skew-Hermitian operator $A \in \mathcal{A}_n$ such that
\begin{equation}
\label{eq:dual-equiv-general-2}
Q = \exp(\mathcal{T}_H[A]).
\end{equation}
\end{theorem}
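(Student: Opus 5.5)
The proof splits into the two directions. For the ``if'' direction, take a skew-Hermitian $A\in\mathcal{A}_n$ and set $B:=\mathcal{T}_H[A]$. Since $\Pi(\sigma)$ is a permutation matrix, $\Pi(\sigma)^\dagger=\Pi(\sigma^{-1})$, so conjugation by $\Pi(\sigma)$ preserves skew-Hermiticity. Hence $B$ is skew-Hermitian and $Q=\exp(B)$ is unitary. By Proposition~\ref{pro:twir_equal}, $B\in\mathcal{A}_n^H$, so $B$ commutes with every $\Pi(\sigma)$, $\sigma\in H$. As an element of $\Pi(\mathbb{C}[\Sym_n])$, $B$ also commutes with $\rho(U)=U^{\otimes n}$. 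Both commutation relations pass to every power $B^k$, and therefore to the norm-convergent series $\exp(B)$. So $Q$ is dual-equivariant.

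For the ``only if'' direction, let $Q$ be unitary and commute with $\rho(U)$ for all $U\in\SU(2)$ and with $\Pi(\sigma)$ for all $\sigma\in H$. I would take a logarithm that stays inside the relevant algebra. The commutant $\mathcal{C}$ of $\{\rho(U)\}\cup\{\Pi(\sigma):\sigma\in H\}$ is a unital $*$-algebra of matrices. Since $Q$ is normal, write its spectral decomposition $Q=\sum_j e^{i\theta_j}P_j$ with distinct eigenvalues and $\theta_j\in(-\pi,\pi]$. Each spectral projector $P_j$ is a polynomial in $Q$ (Lagrange interpolation on the finite spectrum), so $P_j\in\mathcal{C}$. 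Then $B:=\sum_j i\theta_j P_j$ is skew-Hermitian, lies in $\mathcal{C}$, and satisfies $\exp(B)=Q$.

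It remains to place $B$ in the required form. Because $B$ commutes with $\rho(\SU(2))$, the characterization of $\SU(2)$-equivariant operators quoted in Section~\ref{sec:su2-equiv-gates} (Schur--Weyl duality plus the double commutant theorem: the commutant of $\{U^{\otimes n}\}$ equals $\Pi(\mathbb{C}[\Sym_n])$, using that $\SU(2)$ spans the same algebra as $\mathrm{GL}(\mathbb{C}^2)$ on $W$) gives $B\in\mathcal{A}_n$. Because $B$ also commutes with $\Pi(H)$, we get $B\in\mathcal{A}_n^H$. Set $A:=B$. Then $A$ is skew-Hermitian and lies in $\mathcal{A}_n$, and $\mathcal{T}_H[A]=A$ by the computation in the proof of Proposition~\ref{pro:twir_equal}. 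Hence $Q=\exp(\mathcal{T}_H[A])$.

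The main obstacle is the identification of the $\SU(2)$-commutant with $\Pi(\mathbb{C}[\Sym_n])$. This needs more than Schur--Weyl duality for $\mathrm{GL}$: one must argue that the linear span of $\{U^{\otimes n}:U\in\SU(2)\}$ coincides with that for $\mathrm{GL}(2)$. I would do this by noting that the $\SU(2)$-irreps $J^{(n/2-k)}$ in Eq.~\eqref{eq:schur-weyl} are irreducible and pairwise inequivalent. The span of $\rho(\SU(2))$ is then $\bigoplus_k \mathrm{End}(J^{(n/2-k)})\otimes I$, and its commutant is $\bigoplus_k I\otimes\mathrm{End}(\Sym^{(n-k,k)})$. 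This is exactly the image of $\mathbb{C}[\Sym_n]$, since $\Pi$ maps onto the full matrix algebra on each appearing Specht module $\Sym^{(n-k,k)}$ (the Artin--Wedderburn decomposition of $\mathbb{C}[\Sym_n]$). I would appeal to Appendix~\ref{app:generalized_permutation} for this step.
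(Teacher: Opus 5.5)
Your proof is correct and follows essentially the same route as the paper. You take the spectral logarithm $K=\sum_j i\theta_j P_j$ of $Q$ inside the invariant algebra $\mathcal{A}_n^H$, choose $A:=K$ using $\mathcal{T}_H[K]=K$, and derive the converse from $\mathcal{T}_H[A]\in\mathcal{A}_n^H$. Along the way you soundly supply details the paper leaves implicit: that the spectral projectors are polynomials in $Q$, that $\exp(\mathcal{T}_H[A])$ is unitary, and why the $\mathrm{SU}(2)$-commutant (not merely the $\mathrm{GL}(2)$-commutant) equals $\Pi(\mathbb{C}[\Sym_n])$.
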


\begin{proof}
Suppose first that $Q$ commutes with both the global $\mathrm{SU}(2)$ rotations and the permutation action of $H$.
By the characterization established in Sec.~\ref{sec:su2-equiv-gates}, commutation with the global $\mathrm{SU}(2)$ action implies that
$Q \in \exp(\mathcal{A}_n)\subset \mathcal{A}_n$.
Since moreover
\begin{align}
\Pi(\sigma)\,Q\,\Pi(\sigma)^\dagger = Q
\qquad (\forall \sigma \in H),
\end{align}
we have $Q \in \mathcal{A}_n^H$.
Since $Q$ is unitary, its spectral decomposition
$Q = \sum_j e^{i\theta_j} P_j$ has projections $P_j \in \mathcal{A}_n^H$.
Setting $K := \sum_j i\theta_j P_j$ gives a skew-Hermitian element
$K \in \mathcal{A}_n^H$ with $e^K = Q$.
By Proposition~\ref{pro:twir_equal}, $\mathcal{A}_n^H = \mathcal{T}_H[\mathcal{A}_n]$,
so there exists $A \in \mathcal{A}_n$ such that $K = \mathcal{T}_H[A]$; in fact, $A$ can be chosen to be skew-Hermitian (e.g., $A := K$).
Exponentiating gives
\begin{equation}
Q = \exp(\mathcal{T}_H[A]),
\end{equation}
which proves one direction.

Conversely, suppose that $Q = \exp(\mathcal{T}_H[A])$ for some skew-Hermitian $A \in \mathcal{A}_n$.
Since $\mathcal{T}_H[A] \in \mathcal{A}_n^H$, it commutes with both the global $\mathrm{SU}(2)$ action and the permutation action of $H$.
Hence so does its exponential $Q$, proving the converse.
\end{proof}

Theorem~\ref{thm:twir_a} can be viewed as a generalization of Observation~\ref{obv:joint_su2_sn_equiv_gate}, where the full symmetric group \(\mathfrak{S}_n\) is replaced by an arbitrary subgroup \(H\). From this viewpoint, taking \(H\) to be a proper subgroup of \(\mathfrak{S}_n\) relaxes the symmetry constraints and can therefore enlarge the expressive power of the resulting equivariant operators.

\subsection{The \(\Sym_{\mathrm{pair}}\)-invariant subspace and its dimension}
We now specialize the arbitrary subgroup $H$ to the subgroup of $\Sym_{2N}$ preserving a fixed block structure on the wire indices.
Specifically, we group the wire indices into disjoint pairs,
\begin{equation}
B_\ell := \{2\ell,\,2\ell+1\}, \qquad \ell=0,\dots,N{-}1.
\end{equation}
We then define the pair-permuting subgroup
\begin{equation}
\Sym_{\mathrm{pair}}
:= \bigl\{\sigma: \{B_\ell\} \rightarrow \{B_\ell\}| \text{$\sigma$ is bijective}\bigr\} \leq \Sym_{2N},
\end{equation}
which acts by permuting the pairs as rigid blocks while preserving the internal structure of each pair.

Since every operator in $\mathcal{A}_{2N}$ is already $\mathrm{SU}(2)$-equivariant, finding an operator that is also equivariant under $\Sym_{\rm pair}$ can be used to construct dual-equivariant gates with a proper encoding.

For each $\sigma \in \Sym_{\mathrm{pair}}$, there exists a unique permutation
$h_\sigma \in \mathfrak{S}_N$ such that
\begin{equation}
\sigma(B_\ell)=B_{h_\sigma(\ell)}
\qquad (\ell=0,\dots,N{-}1).
\end{equation}
Equivalently, each $h\in\mathfrak{S}_N$ determines a unique element $\widetilde{h}\in\Sym_{\mathrm{pair}}$ by permuting the pair blocks according to $h$ while preserving the internal ordering within each block.

Now let the cycle type of $h \in \mathfrak{S}_N$ be
\begin{equation}
\lambda
=
1^{m_1(\lambda)}2^{m_2(\lambda)}3^{m_3(\lambda)}\cdots \vdash N.
\end{equation}
Each $r$-cycle of $h$ permutes $r$ pair blocks, and therefore induces two
disjoint $r$-cycles on the $2N$ wire indices: one on the first wires of the
corresponding pairs and one on the second wires. Hence the cycle type of
$\widetilde{h}$ in $\mathfrak{S}_{2N}$ is
\begin{equation}
\widetilde{\lambda}
=
1^{2m_1(\lambda)}2^{2m_2(\lambda)}3^{2m_3(\lambda)}\cdots \vdash 2N.
\end{equation}

With this choice of subgroup, Theorem~\ref{thm:twir_a} shows that every dual-equivariant unitary operator can be represented in the form
\begin{equation}
Q = \exp\!\bigl(\mathcal{T}_{\mathfrak{S}_{\mathrm{pair}}}[A]\bigr),
\qquad
A \in \mathcal{A}_{2N}.
\end{equation}

We first record the dimension formula for the pair-symmetrized subspace in the group algebra.
\begin{proposition}
\label{pro:dim_group}
The dimension of $\mathbb{C}[\mathfrak{S}_{2N}]^{\mathfrak{S}_{\mathrm{pair}}}$ is
\begin{equation}
\label{eq:pair-dim-formula}
\sum_{\lambda \vdash N}
\prod_{r \ge 1}
r^{\,m_r(\lambda)}
\frac{(2m_r(\lambda))!}{m_r(\lambda)!}.
\end{equation}
\end{proposition}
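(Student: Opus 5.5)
The plan is to identify $\mathbb{C}[\mathfrak{S}_{2N}]^{\mathfrak{S}_{\mathrm{pair}}}$ as a space of class functions and count orbits. The group $\mathfrak{S}_{\mathrm{pair}}\cong\mathfrak{S}_N$ acts on the basis $\{\tau\}_{\tau\in\mathfrak{S}_{2N}}$ of $\mathbb{C}[\mathfrak{S}_{2N}]$ by conjugation $\tau\mapsto\sigma\tau\sigma^{-1}$. This is a permutation action on a basis, so an element $X=\sum_\tau c_\tau\tau$ is invariant if and only if $c_\tau$ is constant on conjugation orbits. The orbit sums therefore form a basis, and
\begin{equation*}
\dim \mathbb{C}[\mathfrak{S}_{2N}]^{\mathfrak{S}_{\mathrm{pair}}} = \#\{\text{orbits of } \mathfrak{S}_{\mathrm{pair}} \text{ on } \mathfrak{S}_{2N}\}.
\end{equation*}
(Equivalently, the twirl $\mathcal{T}_H$ at the group-algebra level is the projector onto this space.)

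I will count the orbits with Burnside's lemma:
\begin{equation*}
\#\text{orbits}=\frac{1}{|\mathfrak{S}_{\mathrm{pair}}|}\sum_{\sigma\in\mathfrak{S}_{\mathrm{pair}}}|C_{\mathfrak{S}_{2N}}(\sigma)|=\frac{1}{N!}\sum_{h\in\mathfrak{S}_N}|C_{\mathfrak{S}_{2N}}(\widetilde h)|,
\end{equation*}
since the fixed points of conjugation by $\sigma$ are exactly its centralizer. From the text preceding the statement, $\widetilde h$ has cycle type $\widetilde\lambda=\prod_r r^{2m_r(\lambda)}$ when $h$ has cycle type $\lambda$. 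The standard centralizer order for a permutation of cycle type $\mu$ is $\prod_r r^{m_r(\mu)}m_r(\mu)!$, so
\begin{equation*}
|C_{\mathfrak{S}_{2N}}(\widetilde h)|=\prod_r r^{2m_r(\lambda)}(2m_r(\lambda))!.
\end{equation*}

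Next I group the $h$ by cycle type $\lambda\vdash N$. The class of type $\lambda$ has size $N!/\prod_r r^{m_r(\lambda)}m_r(\lambda)!$, so
\begin{equation*}
\#\text{orbits}=\sum_{\lambda\vdash N}\frac{\prod_r r^{2m_r}(2m_r)!}{\prod_r r^{m_r}m_r!}=\sum_{\lambda\vdash N}\prod_{r\ge1}r^{m_r(\lambda)}\frac{(2m_r(\lambda))!}{m_r(\lambda)!},
\end{equation*}
which is Eq.~\eqref{eq:pair-dim-formula}.

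I expect the main point needing care to be the first step: justifying that the invariant subspace has dimension equal to the number of conjugation orbits. This holds because the action permutes a basis, so the averaging argument of Proposition~\ref{pro:twir_equal} applies to orbit sums. The step also requires $\mathfrak{S}_{\mathrm{pair}}$ to act faithfully as $\mathfrak{S}_N$, so that $|\mathfrak{S}_{\mathrm{pair}}|=N!$ and the map $h\mapsto\widetilde h$ is a bijection. After that, the cycle-type bookkeeping is routine, apart from checking that the two induced $r$-cycles on first and second wires are disjoint, which the preceding discussion already records.
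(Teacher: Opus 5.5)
Your proposal is correct and follows essentially the same route as the paper's proof. Both identify the dimension with the number of conjugation orbits of $\mathfrak{S}_{\mathrm{pair}}$ on the basis $\mathfrak{S}_{2N}$, count these via Burnside's lemma using the centralizer orders $\prod_r r^{2m_r(\lambda)}(2m_r(\lambda))!$ of the lifted permutations $\widetilde h$, and group by cycle type with class sizes $N!/z_\lambda$; your explicit remark that invariance means the coefficients are constant on orbits spells out the paper's first step slightly more carefully.
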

In Appendix~\ref{app:proof_dim_group}, we prove the Proposition using Burnside's lemma and counting the number of elements in $\Sym_{2N}$ that are invariant under $\Sym_{\rm pair}$.
We provide a detailed example for $N=2$ presenting all basis elements in the next subsection.

However, to quantify the expressive size of dual-equivariant gates, the relevant quantity is the dimension of the invariant operator space, rather than that of its group algebra counterpart. The two do not coincide because, for $W=(\mathbb{C}^2)^{\otimes 2N}$,
only $\mathfrak{S}_{2N}$-irreducible representations labeled by
two-row partitions $\mu=(2N-k,k)$ appear in the Schur--Weyl
decomposition; accordingly, components of
$\mathbb{C}[\mathfrak{S}_{2N}]^{\mathfrak{S}_{\mathrm{pair}}}$
lying in irreducible sectors with \(\ell(\mu)\ge 3\) vanish under the representation and therefore
do not contribute to $\mathcal{A}_{2N}^{\mathfrak{S}_{\mathrm{pair}}}$. We now make this distinction explicit by giving the dimension formula for $\mathcal{A}_{2N}^{\mathfrak{S}_{\mathrm{pair}}}$.

\begin{proposition}
\label{thm:dim_opr}
The dimension of $\mathcal{A}_{2N}^{\mathfrak{S}_{\mathrm{pair}}}$ is
\begin{equation}
\sum_{\lambda\vdash N}
\frac{1}{z_\lambda}
\sum_{k=0}^{N}
\chi^{(2N-k,k)}(\widetilde{\lambda})^2,
\end{equation}
where \(z_\lambda := \prod_{r\ge 1} r^{m_r(\lambda)} m_r(\lambda)!\), and $\chi^{(2N-k,k)}$ is the irreducible character of $\mathfrak S_{2N}$.
\end{proposition}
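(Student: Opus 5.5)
The approach is to compute the dimension of the fixed subspace of a finite-group action by averaging a trace over the group. By Proposition~\ref{pro:twir_equal}, $\mathcal{A}_{2N}^{\mathfrak{S}_{\mathrm{pair}}}$ is the image of the projector $\mathcal{T}_{\mathfrak{S}_{\mathrm{pair}}}$ restricted to $\mathcal{A}_{2N}$. It is indeed a projector, since it is idempotent and fixes $\mathcal{A}_{2N}^{\mathfrak{S}_{\mathrm{pair}}}$. Therefore
\begin{equation}
\dim \mathcal{A}_{2N}^{\mathfrak{S}_{\mathrm{pair}}} = \Tr\bigl(\mathcal{T}_{\mathfrak{S}_{\mathrm{pair}}}|_{\mathcal{A}_{2N}}\bigr) = \frac{1}{N!}\sum_{h\in\mathfrak{S}_N} \Tr\bigl(\mathrm{Ad}_{\Pi(\widetilde h)}|_{\mathcal{A}_{2N}}\bigr),
\end{equation}
where $\mathrm{Ad}_{X}(M)=XMX^\dagger$, and we use $|\mathfrak{S}_{\mathrm{pair}}|=N!$ together with the bijection $h\mapsto\widetilde h$.

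The key step is to identify $\mathcal{A}_{2N}$ as an $\mathfrak{S}_{2N}\times\mathfrak{S}_{2N}$-module. By the Schur--Weyl decomposition \eqref{eq:schur-weyl}, $\Pi$ is a direct sum of the irreps $\mu_k=(2N-k,k)$ for $k=0,\dots,N$, each occurring with multiplicity $2N-2k+1$. Its image is therefore, by the double commutant / Wedderburn theorem, the algebra
\begin{equation}
\mathcal{A}_{2N}\cong\bigoplus_{k=0}^{N}\mathrm{End}\bigl(\Sym^{\mu_k}\bigr).
\end{equation}
Here each $k$ appears because every two-row partition has nonzero multiplicity. Under conjugation by $\Pi(\sigma)$, the summand $\mathrm{End}(\Sym^{\mu_k})\cong\Sym^{\mu_k}\otimes(\Sym^{\mu_k})^{*}$ transforms by $\rho_{\mu_k}(\sigma)\,\cdot\,\rho_{\mu_k}(\sigma)^{-1}$. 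The trace of this action is $\chi^{\mu_k}(\sigma)\overline{\chi^{\mu_k}(\sigma)}=\chi^{\mu_k}(\sigma)^2$, since symmetric-group characters are real. Consequently,
\begin{equation}
\Tr\bigl(\mathrm{Ad}_{\Pi(\sigma)}|_{\mathcal{A}_{2N}}\bigr)=\sum_{k=0}^{N}\chi^{(2N-k,k)}(\sigma)^2 .
\end{equation}

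It remains to organize the average by cycle type. The character value depends only on the cycle type of $\widetilde h$ in $\mathfrak{S}_{2N}$, and this is $\widetilde\lambda$ whenever $h$ has cycle type $\lambda$, as computed just before Proposition~\ref{pro:dim_group}. The number of $h\in\mathfrak{S}_N$ of cycle type $\lambda$ is $N!/z_\lambda$. Substituting this count into the group average gives
\begin{equation}
\sum_{\lambda\vdash N}\frac{1}{z_\lambda}\sum_{k=0}^{N}\chi^{(2N-k,k)}(\widetilde\lambda)^2,
\end{equation}
which is the claimed formula.

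The main obstacle is the identification in the second step, $\mathcal{A}_{2N}\cong\bigoplus_k\mathrm{End}(\Sym^{\mu_k})$ as a conjugation module. I would justify it by noting that $\Pi$ extends to an algebra homomorphism $\mathbb{C}[\mathfrak{S}_{2N}]\to\mathrm{End}(W)$, with $\mathbb{C}[\mathfrak{S}_{2N}]\cong\bigoplus_{\mu\vdash 2N}\mathrm{End}(\Sym^{\mu})$. The kernel of this map is exactly the sum of the blocks with $\ell(\mu)\ge3$, because those irreps do not occur in $W$. Moreover, the map is injective on each surviving block, since it acts there as $\mathrm{End}(\Sym^{\mu_k})\otimes I_{J^{(N-k)}}$. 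Finally, the map is equivariant for conjugation by $\sigma$ versus $\mathrm{Ad}_{\Pi(\sigma)}$, so the conjugation-module structures on the two sides match.
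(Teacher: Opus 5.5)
Your proof is correct and follows essentially the same route as the paper. Both compute the trace of the twirling projector on $\mathcal{A}_{2N}$ as an average over $\mathfrak{S}_{\mathrm{pair}}\cong\mathfrak{S}_N$, evaluate it blockwise on $\mathcal{A}_{2N}\cong\bigoplus_{k}\mathrm{End}(\Sym^{(2N-k,k)})$ via $\mathrm{End}(V)\cong V\otimes V^{*}$, and then group by cycle type. The differences are minor: you get $\chi^2$ from the realness of symmetric-group characters, whereas the paper uses that $\widetilde h$ and $\widetilde h^{-1}$ have the same cycle type. You also justify the block identification (the kernel of $\Pi$ being the $\ell(\mu)\ge 3$ blocks, and conjugation-equivariance) more carefully than the paper, which takes it directly from Schur--Weyl duality.
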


In Appendix~\ref{app:proof_dim_opr}, we prove the Proposition by expressing the dimension as the trace of the twirling map and computing it blockwise under the Schur--Weyl decomposition.
The irreducible character $\chi^{(2N-k,k)}$ can be efficiently computed using, e.g., the Murnaghan–Nakayama rule.

Importantly, we do not introduce $\mathfrak{S}_{\mathrm{pair}}$ merely as a convenient subgroup of $\mathfrak{S}_{2N}$; rather, it is dictated by the pairwise architecture of HyQuRP, where symmetry is imposed at the level of fixed two-qubit blocks. From this perspective, this subsection establishes the general form of dual-equivariant gates relevant to HyQuRP and clarifies the extent of their achievable expressivity.

\subsection{Example}
\label{subsec:exam}
We illustrate Proposition~\ref{pro:dim_group} and Proposition~\ref{thm:dim_opr} through the smallest nontrivial case, taking \(N=2\).
We consider the space $W=(\mathbb{C}^2)^{\otimes 4}$
with pair blocks $B_0=\{0,1\}, \quad B_1=\{2,3\}$.
In this case, the pair-permuting subgroup is
\begin{align}
\mathfrak{S}_{\mathrm{pair}}=\{e,\tau\}\cong \mathfrak{S}_2,
\qquad
\tau := (0\,2)(1\,3),
\end{align}
which swaps the two pairs as rigid blocks. Therefore, for each $\pi \in \mathfrak{S}_4$, the $\mathfrak{S}_{\mathrm{pair}}$-orbit average under conjugation is
\(
\frac{1}{2}\bigl(\pi + \tau \pi \tau^{-1}\bigr).
\)

We now determine the \(\mathfrak{S}_{\mathrm{pair}}\)-orbits in \(\mathfrak{S}_4\). Since \(|\mathfrak{S}_{\mathrm{pair}}|=2\), each orbit has size either
\(1\) or \(2\). By Proposition~\ref{pro:dim_group}, we have
\begin{align}
\dim \mathbb{C}[\mathfrak{S}_4]^{\mathfrak{S}_{\mathrm{pair}}}
=
\sum_{\lambda\vdash 2}
\prod_{r\ge 1}
r^{m_r(\lambda)}
\frac{(2m_r(\lambda))!}{m_r(\lambda)!}.
\end{align}
For the two partitions of \(2\),
\begin{align}
\lambda=(1,1):\quad
1^2\frac{(2\cdot 2)!}{2!}=12,
\end{align}
and
\begin{align}
\lambda=(2):\quad
2^1\frac{(2\cdot 1)!}{1!}=4,
\end{align}
and therefore
\begin{equation}
\dim \mathbb{C}[\mathfrak{S}_4]^{\mathfrak{S}_{\mathrm{pair}}}=12+4=16.
\end{equation}
The \(24\) elements of \(\mathfrak{S}_4\) decompose into \(16\) \(\mathfrak{S}_{\mathrm{pair}}\)-orbits, which give rise to the corresponding basis elements listed in Table~\ref{tab:4qubit_spair_orbits}.
\begin{table}[htbp]
\centering
\small
\setlength{\tabcolsep}{2pt}
\renewcommand{\arraystretch}{1.3}
\resizebox{\columnwidth}{!}{%
\begin{tabular}{@{}c|c|c|c@{}}
\hline
\multicolumn{2}{c|}{Orbit size $1$} & \multicolumn{2}{c}{Orbit size $2$} \\
\hline
Orbit & Basis & Orbit & Basis \\
\hline
$\{e\}$ & $e$
& $\{(0\,1),(2\,3)\}$ & $\frac12\bigl((0\,1)+(2\,3)\bigr)$ \\

$\{(0\,2)\}$ & $(0\,2)$
& $\{(0\,3),(1\,2)\}$ & $\frac12\bigl((0\,3)+(1\,2)\bigr)$ \\

$\{(1\,3)\}$ & $(1\,3)$
& $\{(0\,1\,2),(0\,2\,3)\}$ & $\frac12\bigl((0\,1\,2)+(0\,2\,3)\bigr)$ \\

$\{(0\,1)(2\,3)\}$ & $(0\,1)(2\,3)$
& $\{(0\,2\,1),(0\,3\,2)\}$ & $\frac12\bigl((0\,2\,1)+(0\,3\,2)\bigr)$ \\

$\{(0\,2)(1\,3)\}$ & $(0\,2)(1\,3)$
& $\{(1\,2\,3),(0\,1\,3)\}$ & $\frac12\bigl((1\,2\,3)+(0\,1\,3)\bigr)$ \\

$\{(0\,3)(1\,2)\}$ & $(0\,3)(1\,2)$
& $\{(1\,3\,2),(0\,3\,1)\}$ & $\frac12\bigl((1\,3\,2)+(0\,3\,1)\bigr)$ \\

$\{(0\,1\,2\,3)\}$ & $(0\,1\,2\,3)$
& $\{(0\,1\,3\,2),(0\,2\,3\,1)\}$ & $\frac12\bigl((0\,1\,3\,2)+(0\,2\,3\,1)\bigr)$ \\

$\{(0\,3\,2\,1)\}$ & $(0\,3\,2\,1)$
& $\{(0\,2\,1\,3),(0\,3\,1\,2)\}$ & $\frac12\bigl((0\,2\,1\,3)+(0\,3\,1\,2)\bigr)$ \\
\hline
\end{tabular}
}
\caption{\textbf{Basis elements of $\mathbb{C}[\mathfrak{S}_4]^{\mathfrak{S}_{\mathrm{pair}}}$.}
The $16$ orbit-averaged basis elements are grouped by orbit size.}
\label{tab:4qubit_spair_orbits}
\end{table}

We now consider the images of the basis elements in
\(\mathbb{C}[\mathfrak{S}_4]^{\mathfrak{S}_{\mathrm{pair}}}\)
under the operator-algebra map
\begin{equation}
\mathcal{A}_4^{\mathfrak{S}_{\mathrm{pair}}}
\;=\;
\Pi\!\bigl(\mathbb{C}[\mathfrak{S}_4]^{\mathfrak{S}_{\mathrm{pair}}}\bigr)
\subseteq \mathrm{End}\!\bigl((\mathbb{C}^2)^{\otimes 4}\bigr).
\end{equation}
The dimension decreases from the group algebra level to the operator level. To make this concrete, let
\begin{equation}
X_{\mathcal O}
\;:=\;
\frac{1}{2}\bigl(\pi + \tau \pi \tau^{-1}\bigr),
\end{equation}
for each \(\mathfrak{S}_{\mathrm{pair}}\)-orbit \(\mathcal O\subset \mathfrak{S}_4\), and write
\begin{equation}
\widehat X_{\mathcal O}
:=
\Pi(X_{\mathcal O})
=
\mathcal{T}_{\mathfrak{S}_{\mathrm{pair}}}\!\bigl[\Pi(\pi)\bigr],
\qquad
\pi\in \mathcal O.
\end{equation}
We can then compute, for instance,
\begin{equation}
\label{eq:4qubit-linear-dependence-example}
\begin{aligned}
\widehat X_{\{(0\,3\,2\,1)\}}
&=
-\widehat X_{\{e\}}
+\widehat X_{\{(0\,2)\}}
+\widehat X_{\{(1\,3)\}}
+\widehat X_{\{(0\,1)(2\,3)\}} \\
&\quad
-\widehat X_{\{(0\,2)(1\,3)\}}
+\widehat X_{\{(0\,3)(1\,2)\}}
-\widehat X_{\{(0\,1\,2\,3)\}}
\end{aligned}
\end{equation}
which illustrates that linear independence at the level of the group algebra does not necessarily persist after passing to the operator level.
By Proposition~\ref{thm:dim_opr}, we have
\begin{equation}
\dim\!\bigl(\mathcal{A}_{4}^{\mathfrak{S}_{\mathrm{pair}}}\bigr)
=
\sum_{\lambda \vdash 2}
\frac{1}{z_\lambda}
\sum_{k=0}^{2}
\chi^{(4-k,k)}(\widetilde{\lambda})^2.
\end{equation}
For the two partitions of \(2\), we have
\begin{equation}
z_{(1,1)} = 1^2\,2! = 2,
\qquad
z_{(2)} = 2^1\,1! = 2.
\end{equation}
Moreover,
\begin{equation}
\widetilde{(1,1)}
=
1^{2\cdot 2}
=
1^4,
\qquad
\widetilde{(2)}
=
2^{2\cdot 1}
=
2^2.
\end{equation}
Since only the two-row partitions \((4)\), \((3,1)\), and \((2,2)\) occur, we obtain
\begin{equation}
\dim\!\bigl(\mathcal{A}_4^{\mathfrak{S}_{\mathrm{pair}}}\bigr)
=
\frac{1}{2}
\sum_{k=0}^{2}\chi^{(4-k,k)}(1^4)^2
+
\frac{1}{2}
\sum_{k=0}^{2}\chi^{(4-k,k)}(2^2)^2.
\end{equation}
Using the character values
\begin{equation}
\chi^{(4)}(1^4)=1,\qquad
\chi^{(3,1)}(1^4)=3,\qquad
\chi^{(2,2)}(1^4)=2,
\end{equation}
and
\begin{equation}
\chi^{(4)}(2^2)=1,\qquad
\chi^{(3,1)}(2^2)=-1,\qquad
\chi^{(2,2)}(2^2)=2,
\end{equation}
we find
\begin{equation}
\dim\!\bigl(\mathcal{A}_4^{\mathfrak{S}_{\mathrm{pair}}}\bigr)
=
\frac12(1^2+3^2+2^2)
+
\frac12(1^2+(-1)^2+2^2)
=
10.
\end{equation}
Hence the \(16\) orbit averaged basis elements collapse to a \(10\)-dimensional operator family, as shown in Table~\ref{tab:4qubit_spair_operator_basis}. The theoretical framework established in this section forms the basis for the HyQuRP architecture presented in the next section.

We also compute the dimension of the invariant algebra and its representation in Appendix~\ref{app:extension_to_arbitrary_block} when each block has size $b$ instead of the size $2$, as discussed in this section. We show that such an extension may provide better expressivity when $N$ is large enough.

\begin{table}[t]
\centering
\small
\renewcommand{\arraystretch}{1.3}
\setlength{\tabcolsep}{5pt}
\resizebox{\columnwidth}{!}{%
\begin{tabular}{@{}c|c|c|c@{}}
\hline
Basis op. &
Expression &
Basis op. &
Expression \\ \hline

\(\widehat X_1\)  & \(\Pi(e)\) 
& \(\widehat X_6\)  & \(\Pi((0\,3)(1\,2))\) \\

\(\widehat X_2\)  & \(\Pi((0\,2))\) 
& \(\widehat X_7\)  & \(\Pi((0\,1\,2\,3))\) \\

\(\widehat X_3\)  & \(\Pi((1\,3))\) 
& \(\widehat X_8\)  & \(\Pi\!\left(\tfrac12\bigl((0\,1)+(2\,3)\bigr)\right)\) \\

\(\widehat X_4\)  & \(\Pi((0\,1)(2\,3))\) 
& \(\widehat X_9\)  & \(\Pi\!\left(\tfrac12\bigl((0\,3)+(1\,2)\bigr)\right)\) \\

\(\widehat X_5\)  & \(\Pi((0\,2)(1\,3))\) 
& \(\widehat X_{10}\) & \(\Pi\!\left(\tfrac12\bigl((0\,1\,2)+(0\,2\,3)\bigr)\right)\) \\ \hline
\end{tabular}
}
\caption{\textbf{Basis elements of \(\mathcal{A}_4^{\mathfrak{S}_{\mathrm{pair}}}\).}
Explicit expressions for a basis of \(\mathcal{A}_4^{\mathfrak{S}_{\mathrm{pair}}}\).}
\label{tab:4qubit_spair_operator_basis}
\end{table}

\section{Method: HyQuRP}
\label{sec:Method}

\begin{figure*}[t]
\centering
\includegraphics[width=1.0\linewidth]{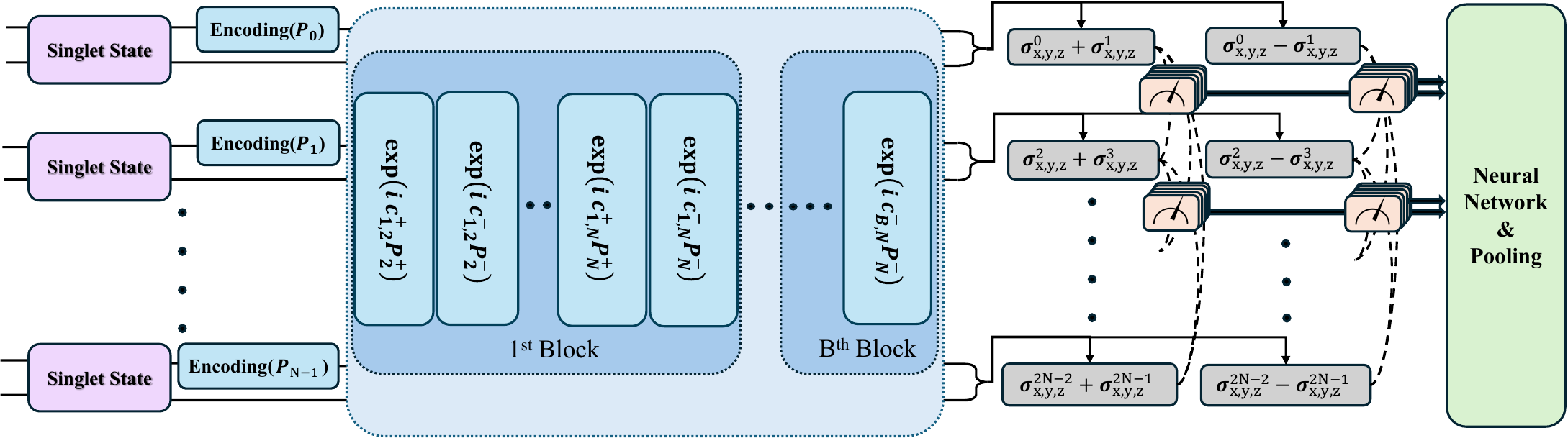}
\caption[HyQuRP Pipeline]{
\textbf{HyQuRP Pipeline.}
A hybrid pipeline maps a set of $N$ 3D points to a $K\times1$ score vector, where $K$ is the number of classes.
The quantum circuit is constructed as follows:
(i) Initialize each qubit pair in the singlet state.
(ii) Encode points on the even-indexed qubits (0, 2, 4, \ldots) while preserving the pair structure.
(iii) Apply twirling over cycles of length $2$ through $N$, under appropriate constraints, resulting in two effective gate types for each cycle length; we then repeat this block $B$ times to form a quantum network.
(iv) For every unordered pair $\langle i,j \rangle$, measure two types of Hamiltonians ($H^+_{\langle i, j\rangle}$ and $H^-_{\langle i, j\rangle}$), giving a total $2\tbinom{N}{2}$ expectation values.
(v) Send the features to a classical head (pre-pooling neural layers, permutation-invariant pooling, post-pooling classifier) to produce the score vector.
Each color indicates how each component behaves under group actions on the input point set: (Purple) rotation- and permutation-invariant, (Blue) rotation- and permutation-equivariant, (Gray) rotation-invariant and permutation-equivariant, (Green) permutation-invariant readout and classifier. Dashed rounded boxes indicate repetition by $B$.
}
\label{fig:hyqurp-overview}
\end{figure*}

Our framework is designed for 3D geometric data with underlying rotational and permutational symmetries. Specifically, we apply it to 3D point cloud classification. Given an input point-set (a row of a point cloud dataset) $\mathsf{P} = \{\mathbf{p}_{i} \in \mathbb{R}^3\}_{i=0}^{N-1}$, our method produces invariant class predictions via the HyQuRP network.
Here, $N$ denotes the number of points, and we set the quantum register size to $n = 2N$ qubits (two qubits per point).

As shown in Fig.~\ref{fig:hyqurp-overview}, we decompose HyQuRP into five stages: 
(i) singlet state initialization, 
(ii) selective geometric quantum encoding,
(iii) a quantum network with rotation and permutation equivariance,
(iv) Hamiltonian expectation value measurement, and
(v) a classical network.
At a high level, stages~(i)–(iv) form the quantum backbone and stage~(v) is a classical head; together they are mathematically constrained to respect the relevant group actions, so that the overall architecture implements a classifier that is invariant to global rigid rotations and permutations of the $N$ input points.
We describe this construction in the following subsections and provide all additional details and proofs for each component in Appendix~\ref{app:quantum-network}.

\subsection{Singlet state initialization}

The quantum processing initiates with the preparation of the $n$-qubit register in the generalized singlet configuration, implementing the tensor product state
\begin{equation}
\ket{\psi_{\text{initial}}} = \bigotimes_{i=0}^{N-1} \frac{1}{\sqrt{2}}(\ket{01}_{2i,2i+1} - \ket{10}_{2i,2i+1}),
\end{equation}
where each adjacent qubit pair is the entangled Bell singlet state. This initial state can be easily prepared on a quantum computer, and is $\SU(2)$ invariant as $U\otimes U (\ket{01} - \ket{10}) = \ket{01} - \ket{10}$ for any $U \in \SU(2)$.

\subsection{Selective geometric quantum encoding}

We encode each 3D point $\mathbf{p}\in\mathbb{R}^3$ using a unitary operator $E(\mathbf{p})$ defined as
\begin{equation}
E(\mathbf{p}) = \exp\left(i\,\frac{\mathbf{p}\cdot \vec{\sigma}}{\Theta} \right), \label{eq:encoding_gate}
\end{equation}
where $\vec{\sigma}=\{X,Y,Z\}$ is a vector of Pauli operators and $\Theta$ is a fixed hyperparameter introduced to improve numerical stability.
By construction, $E(\mathbf{p})$ depends on the vector $\mathbf{p}$ and is equivariant to 3D rotations: For any $R\in \mathrm{SO}(3)$ with corresponding $U_R\in \mathrm{SU}(2)$,
\begin{equation}
E(R\mathbf{p}) \;=\; U_R\,E(\mathbf{p})\,U_R^\dagger,
\end{equation}
from the standard $\mathrm{SU}(2)$–$\mathrm{SO}(3)$ correspondence. 

Our $n$-qubit singlet initialization is only invariant under pair-preserving permutations (not arbitrary permutations of the $n$ individual qubits).
Accordingly, we adopt a per-pair selective encoding where only the even-indexed qubits in each pair $(2j,2j+1)$ are geometrically encoded.
This encoding preserves the pairwise structure and makes the overall encoding pairwise permutation-equivariant. 

As shown in Observation~\ref{obv:no_joint_fixed_vectors}, any nontrivial pure state is not simultaneously invariant under global $\mathrm{SU}(2)$ rotations and all permutations of the qubits. Therefore, it is impossible to use all qubits for encoding while preserving both symmetries.
Consequently, our selective per-pair scheme is a minimal and effective design.

\subsection{Quantum network with rotation and permutation equivariance}
\label{sec:our_nwk}

Proposition~\ref{thm:dim_opr} indicates that a large number of independent generators are available. However, classical simulation cannot fully capture this expressivity, since both this dimension and the size of the corresponding matrix representation grow rapidly with the number of qubits. In this work, we restrict attention to a particular family determined by the sign structure within each pair. Specifically, since \(\mathfrak{S}_{\mathrm{pair}}\) never mixes the two wires inside a pair, we are free to choose the relative sign on the non-encoding partner wire in each pair.
Accordingly, we distinguish two cases, positive and negative, and perform the twirl separately for each sign to obtain the generators of the dual-equivariant gates.

For each $k\in\{2,\dots,N\}$ and the set of pairs
\begin{equation}
\mathcal{P}=\{(0,1),(2,3),\ldots,(n-2,n-1)\},
\end{equation}
let $\mathrm{Perm}(\mathcal{P},k)$ be the set of ordered $k$–tuples $(p_{1},\dots,p_{k})$ of distinct elements, where each $p_{m}=(2j_m, 2j_m+1) \in \mathcal{P}$. Given a $k$-tuple of pairs $\pi\in\mathrm{Perm}(\mathcal{P},k)$ and a selection vector $\mathbf{s}=(s_1,\dots,s_k)\in\{0,1\}^k$, we define $\tau_{\pi}^{\mathbf{s}}$ to
be the permutation matrix that acts as the $k$–cycle
\begin{equation}
(2j_1 + s_1,\ 2j_2 + s_2,\ \ldots,\ 2j_k + s_k),
\end{equation}
on the selected wires and leaves all other wires fixed. 

We define
\begin{align}
P_k^+
&:= \frac{1}{k!}
    \sum_{\pi \in \mathrm{Perm}(\mathcal{P}, k)}
    \sum_{\mathbf{s} \in \{0,1\}^k} \tau_{\pi}^{\mathbf{s}},
    \label{eq:Pk-plus-def-main} \\[4pt]
P_k^-
&:= \frac{1}{k!}
    \sum_{\pi \in \mathrm{Perm}(\mathcal{P}, k)}
    \sum_{\mathbf{s} \in \{0,1\}^k}
    (-1)^{\|\mathbf{s}\|_1}\,\tau_{\pi}^{\mathbf{s}},
    \label{eq:Pk-minus-def-main}
\end{align}
where $\|\mathbf{s}\|_1 = \sum_{\ell=1}^k s_\ell$ is the Hamming weight of $\mathbf{s}$.

The generators $P^{\pm}_k$ are special form of generators that are equivariant under pair–permuting action of $\Sym_{\mathrm{pair}}$ discussed in the previous section.
Still, we further added symmetric or antisymmetric property under the internal permutation, i.e., swap the qubits within each pair.
This is because we believe adding such constraints makes the overall circuit better trainable while maintaining the performance of the model (see Sec.~\ref{sec:hyqurp_ablation}).

Our dual-equivariant gate combines these matrices over all $k$ values:
\begin{equation}
G^l  = \prod_{k=2}^{N} e^{ic_{l,k}^+ P_{k}^+} e^{ic_{l,k}^- P_{k}^-}
,\end{equation}
and our quantum circuit is just a stack of $B$ blocks $G^\ell$, given as
\begin{equation}
C
=\prod_{l=1}^{B}\ G^l
,\end{equation}
where $c_{l,k}^{\pm}$ are the trainable parameters for block $\ell$ and cycle length $k$.

Each operator $P_k^\pm$ is a sum of $2^k \frac{N!}{(N-k)!}$ terms of the form $\tau_{\pi}^{\mathbf{s}}$, and this number grows rapidly with $k$.
In practice, it is sufficient to retain only a subset of the available orders.
When the largest retained order is $k_{\max} =O(1)$, one can implement each block $G^l$ using, e.g., the linear combination of unitaries (LCU)~\cite{childs2012hamiltonian}.
In this case, the gate count is upper bounded by $\widetilde{O}(N^{k_{\max}+1})$, so the
circuit cost remains polynomial in $N$ on a $2N$-qubit data register.
Further details on the construction of $P_k^\pm$ and the associated
gate complexity are given in Appendix~\ref{app:quantum_network_via_twirling}.

This construction yields a gate that is equivariant with respect to both the global $\mathrm{SU}(2)$ rotation and the pair-preserving permutation actions, allowing pointwise representations under these groups to flow consistently between adjacent components and thereby acting as a symmetry-preserving bridge within the network.

\subsection{Hamiltonian expectation value measurement}

The final quantum processing stage computes expectation values using pairwise
Heisenberg Hamiltonians:
\begin{align}
H_{\langle i,j\rangle}^{+} &= \sum_{\alpha \in \{X,Y,Z\}} (\sigma_{\alpha}^{2i} + \sigma_{\alpha}^{2i+1}) (\sigma_{\alpha}^{2j} + \sigma_{\alpha}^{2j+1})
, \\
H_{\langle i,j\rangle}^{-} &= \sum_{\alpha \in \{X,Y,Z\}}(\sigma_{\alpha}^{2i} - \sigma_{\alpha}^{2i+1}) (\sigma_{\alpha}^{2j} - \sigma_{\alpha}^{2j+1})
,\end{align}
where $\sigma_{\alpha}^{q}$ denotes the Pauli-$\alpha$ operator acting on the $q$-th qubit. 
Because both the initialization and encoding operate on fixed singlet pairs
$(2i,2i{+}1)$, permutation-equivariant operators must also act at the pair level, which consist of factors in the form of $(\sigma_{\alpha}^{2i}\!\pm\!\sigma_{\alpha}^{2i+1})$.

Moreover, we want $\mathrm{SU}(2)$-invariant Hamiltonians at the operator level, in the sense that applying any global $\mathrm{SU}(2)$ rotation acts on $H$ by conjugation and leaves it unchanged.
Thus, we have $H^{\pm}_{\langle i,j \rangle}$, which are of Heisenberg form, as the lowest-order Hamiltonians that are invariant under rigid rotations of the input.
Since $i$ and $j$ run over all distinct sites, and we have the sign choice ($\pm$), the total number of measurement outcomes is $2\binom{N}{2}$.

However, the measurement outcomes exhibit permutational equivariance rather than invariance---when points are permuted, the corresponding expectation values undergo the same permutation. This equivariance property requires subsequent permutation-invariant classical processing steps to convert the quantum measurements into rotation- and permutation-invariant class predictions.

\subsection{Classical network}
After the quantum stages, we process two types of Hamiltonian expectation values $\mathbf{x} \in \mathbb{R}^{\binom{N}{2} \times 2}$ with a classical head, which is invariant under permutations.
Our classical head, which we refer to as Set-MLP, has a structure similar to DeepSets~\cite{zaheer2017deep} and consists of two MLP blocks separated by a symmetric aggregation function. 
The first MLP block is applied to the last dimension, resulting in $\mathbf{y} \in \mathbb{R}^{\binom{N}{2}\times d}$.
We then apply symmetric aggregation functions to $\mathbf{y}$ over the Hamiltonian index:
\begin{gather}
\mathbf{y}_{\mathrm{agg}} = \operatorname{Concat}_{f \in \mathcal{F}} f(\mathbf{y}), \\
\mathcal{F} = \{ \text{mean}, \text{max}, \text{min}, \text{sum}, \text{var}, \text{std} \},
\end{gather} yielding $\mathbf{y}_{\mathrm{agg}} \in \mathbb{R}^{6 d}$. The aggregated representation is then passed through the second MLP block to output class logits, trained with cross-entropy.

\subsection{Group equivariance of overall architecture}

Combining all components, our complete quantum-classical framework achieves both rotational and permutational invariance through distinct mechanisms operating at different stages of the pipeline.

Our framework achieves rotation invariance through the quantum processing pipeline. When a 3D rotation $R \in \mathrm{SO}(3)$ is applied to the input point cloud, let $U_{R} \in \mathrm{SU}(2)$ denote the corresponding rotation on the
$n$-qubit Hilbert space.
Then, we first see that the encoding layer $\mathcal{E} = \bigotimes_{i=0}^{N-1} E(\mathbf{p}_i)\otimes I_2$ transforms as
\begin{align}
\mathcal{E} \mapsto \mathcal{E}' &= \bigotimes_{i=0}^{N-1} E(R\mathbf{p}_i)\otimes I_2 \\
&= \bigotimes_{i=0}^{N-1} U_R E(\mathbf{p}_i) U_R^\dagger\otimes I_2 = U_R^{\otimes n} \mathcal{E}  (U_R^\dagger)^{\otimes n}
\end{align}

\begin{table*}[!t]
\centering
\setlength{\tabcolsep}{4.6pt}
\renewcommand{\arraystretch}{0.45}
\def\resultcol#1{\makebox[2.05cm][c]{#1}}

{\scriptsize
\begin{tabular*}{0.93\textwidth}{@{\extracolsep{\fill}}lcccccc|c@{}}
\toprule
\multirow{3}{*}{Model}
& \multicolumn{6}{c|}{ModelNet}
& \resultcol{All Data} \\
\cmidrule(lr){2-7} \cmidrule(lr){8-8}
& \multicolumn{3}{c}{Light}
& \multicolumn{3}{c|}{Mid}
& \multirow{2}{*}{\resultcol{Avg. Rank}} \\
\cmidrule(lr){2-4} \cmidrule(lr){5-7}
& 4 & 5 & 6 & 4 & 5 & 6 & \\
\midrule
HyQuRP
& 71.19 $\pm$ 1.45 & 74.99 $\pm$ 2.18 & 76.13 $\pm$ 1.96
& 72.39 $\pm$ 1.02 & 74.67 $\pm$ 1.79 & 79.54 $\pm$ 0.75
& \resultcol{1.17} \\
TFN
& 65.29 $\pm$ 2.01 & 68.76 $\pm$ 1.56 & 72.54 $\pm$ 1.69
& 65.56 $\pm$ 1.49 & 70.21 $\pm$ 1.66 & 73.49 $\pm$ 0.74
& \resultcol{3.08} \\
PointNet
& 64.00 $\pm$ 2.86 & 67.66 $\pm$ 4.96 & 71.09 $\pm$ 1.73
& 68.53 $\pm$ 3.30 & 69.00 $\pm$ 3.19 & 73.24 $\pm$ 4.36
& \resultcol{3.58} \\
PointMamba
& 63.74 $\pm$ 2.03 & 68.34 $\pm$ 2.23 & 71.03 $\pm$ 1.36
& 66.33 $\pm$ 3.26 & 72.47 $\pm$ 1.90 & 72.40 $\pm$ 1.91
& \resultcol{3.58} \\
VN-PointNet
& 65.09 $\pm$ 0.99 & 67.60 $\pm$ 2.28 & 71.17 $\pm$ 1.88
& 65.63 $\pm$ 0.67 & 71.41 $\pm$ 1.74 & 72.93 $\pm$ 2.21
& \resultcol{3.67} \\
Point TF
& 60.49 $\pm$ 3.95 & 65.14 $\pm$ 3.46 & 71.03 $\pm$ 1.76
& 62.66 $\pm$ 0.48 & 66.03 $\pm$ 1.99 & 71.40 $\pm$ 0.57
& \resultcol{6.67} \\
Mamba3D
& 61.91 $\pm$ 2.18 & 61.83 $\pm$ 1.83 & 65.14 $\pm$ 2.63
& 60.24 $\pm$ 3.68 & 60.94 $\pm$ 1.34 & 63.97 $\pm$ 4.38
& \resultcol{7.25} \\
RP-EQGNN
& 58.60 $\pm$ 3.48 & 62.60 $\pm$ 2.81 & 65.23 $\pm$ 2.40
& 59.90 $\pm$ 2.55 & 61.84 $\pm$ 1.99 & 66.43 $\pm$ 2.24
& \resultcol{8.50} \\
DGCNN
& 56.07 $\pm$ 1.66 & 58.54 $\pm$ 1.43 & 62.29 $\pm$ 1.67
& 58.49 $\pm$ 1.22 & 62.51 $\pm$ 1.73 & 69.37 $\pm$ 1.58
& \resultcol{8.75} \\
Set-MLP
& 48.76 $\pm$ 3.38 & 49.81 $\pm$ 4.51 & 52.89 $\pm$ 5.31
& 47.93 $\pm$ 2.61 & 50.76 $\pm$ 3.34 & 55.79 $\pm$ 4.36
& \resultcol{10.00} \\
MLP
& 58.56 $\pm$ 1.99 & 55.27 $\pm$ 1.57 & 56.09 $\pm$ 5.53
& 59.03 $\pm$ 1.01 & 59.66 $\pm$ 1.11 & 62.56 $\pm$ 2.00
& \resultcol{10.25} \\
PointMLP
& 53.11 $\pm$ 1.12 & 55.26 $\pm$ 2.32 & 52.71 $\pm$ 2.25
& 57.81 $\pm$ 1.16 & 59.77 $\pm$ 1.61 & 57.19 $\pm$ 1.26
& \resultcol{11.50} \\

\midrule
\multirow{3}{*}{Model}
& \multicolumn{6}{c|}{ShapeNet}
& \resultcol{All Data} \\
\cmidrule(lr){2-7} \cmidrule(lr){8-8}
& \multicolumn{3}{c}{Light}
& \multicolumn{3}{c|}{Mid}
& \multirow{2}{*}{\resultcol{Avg. Acc}} \\
\cmidrule(lr){2-4} \cmidrule(lr){5-7}
& 4 & 5 & 6 & 4 & 5 & 6 & \\
\midrule
HyQuRP
& 71.90 $\pm$ 2.12 & 71.53 $\pm$ 2.42 & 76.89 $\pm$ 1.04
& 74.37 $\pm$ 1.36 & 75.06 $\pm$ 1.20 & 76.80 $\pm$ 2.00
& \resultcol{74.62 $\pm$ 2.90} \\
TFN
& 63.89 $\pm$ 1.20 & 66.94 $\pm$ 1.83 & 74.96 $\pm$ 1.33
& 64.34 $\pm$ 1.14 & 69.53 $\pm$ 1.57 & 77.51 $\pm$ 1.10
& \resultcol{69.42 $\pm$ 4.51} \\
PointNet
& 67.29 $\pm$ 1.42 & 66.01 $\pm$ 3.33 & 72.81 $\pm$ 2.57
& 69.43 $\pm$ 5.60 & 69.57 $\pm$ 3.24 & 73.07 $\pm$ 4.01
& \resultcol{69.31 $\pm$ 4.34} \\
PointMamba
& 67.00 $\pm$ 2.91 & 68.43 $\pm$ 2.31 & 70.03 $\pm$ 2.55
& 70.20 $\pm$ 3.61 & 69.64 $\pm$ 1.74 & 72.30 $\pm$ 2.45
& \resultcol{69.33 $\pm$ 3.45} \\
VN-PointNet
& 62.86 $\pm$ 1.43 & 64.81 $\pm$ 5.15 & 74.83 $\pm$ 1.30
& 64.73 $\pm$ 1.20 & 70.71 $\pm$ 1.50 & 76.89 $\pm$ 0.55
& \resultcol{69.05 $\pm$ 4.79} \\
Point TF
& 59.90 $\pm$ 0.72 & 61.41 $\pm$ 2.63 & 69.83 $\pm$ 2.42
& 61.01 $\pm$ 1.83 & 59.13 $\pm$ 5.60 & 71.41 $\pm$ 1.38
& \resultcol{64.95 $\pm$ 5.28} \\
Mamba3D
& 62.10 $\pm$ 1.43 & 64.97 $\pm$ 2.26 & 67.49 $\pm$ 2.12
& 61.99 $\pm$ 2.36 & 65.14 $\pm$ 1.59 & 67.70 $\pm$ 3.22
& \resultcol{63.62 $\pm$ 3.39} \\
RP-EQGNN
& 59.70 $\pm$ 2.28 & 60.14 $\pm$ 2.95 & 60.47 $\pm$ 2.28
& 57.34 $\pm$ 2.70 & 58.86 $\pm$ 2.66 & 61.76 $\pm$ 1.32
& \resultcol{61.07 $\pm$ 3.49} \\
DGCNN
& 57.54 $\pm$ 1.25 & 56.56 $\pm$ 2.45 & 67.13 $\pm$ 2.76
& 58.41 $\pm$ 1.70 & 60.37 $\pm$ 1.47 & 70.71 $\pm$ 2.07
& \resultcol{61.50 $\pm$ 5.13} \\
Set-MLP
& 59.67 $\pm$ 8.06 & 60.09 $\pm$ 8.62 & 57.09 $\pm$ 1.70
& 61.04 $\pm$ 9.59 & 66.54 $\pm$ 11.82 & 69.86 $\pm$ 11.38
& \resultcol{56.68 $\pm$ 9.44} \\
MLP
& 57.11 $\pm$ 1.68 & 51.67 $\pm$ 1.97 & 51.89 $\pm$ 3.02
& 60.53 $\pm$ 2.45 & 56.30 $\pm$ 2.29 & 57.46 $\pm$ 2.74
& \resultcol{57.18 $\pm$ 3.92} \\
PointMLP
& 52.17 $\pm$ 2.36 & 47.33 $\pm$ 2.46 & 49.09 $\pm$ 2.78
& 52.94 $\pm$ 2.62 & 51.51 $\pm$ 2.80 & 53.43 $\pm$ 2.13
& \resultcol{53.53 $\pm$ 3.97} \\
\bottomrule
\end{tabular*}
}

\caption[Overall Results]{\textbf{Overall Results.} Entries for ModelNet and ShapeNet report mean accuracy $\pm$ standard deviation over 7 seeds under the Light and Mid settings with 4, 5, and 6 points sampled per object. 
The Avg. Rank reports the average rank, where lower is better, and the Avg. Acc reports the mean accuracy $\pm$ standard deviation; both are aggregated across both datasets and all settings. Details are provided in Appendix~\ref{app:summary_metrics}.}
\label{tab:overall_results}
\end{table*}

Accordingly, the transformation of the expectation value is given as:
\begin{align}
&\langle\psi_{\mathrm{output}}|H|\psi_{\mathrm{output}}\rangle_{\mathrm{rotation}} \nonumber \\
&= \langle\psi_0|\mathcal{E}'^\dagger C^\dagger H C \mathcal{E}' |\psi_0\rangle \nonumber \\
&= \langle\psi_0|U_R^{\otimes 2N} \mathcal{E}^\dagger C^\dagger (U_R^\dagger)^{\otimes 2N} H U_R^{\otimes 2N} C \mathcal{E} (U_R^\dagger)^{\otimes 2N} |\psi_0\rangle \nonumber \\
&= \langle\psi_0|\mathcal{E}^\dagger C^\dagger H C \mathcal{E}|\psi_0\rangle \nonumber \\
&= \langle\psi_{\mathrm{output}}|H|\psi_{\text{output}}\rangle_{\mathrm{origin}}
,\end{align}
where $C$ is our $\SU(2)$-equivariant quantum circuit that satisfies $U^{\otimes 2N}C=CU^{\otimes 2N}$ for all single-qubit unitary operators $U \in \SU(2)$ and $|\psi_0\rangle$ denotes a series of $N$ singlet states.
The singlet state is inherently rotation-invariant, remaining unchanged under $\SU(2)$ transformations. 
The equivariant gates naturally commute with rotations, allowing the rotation operators $U_R$ and $U_R^\dagger$ to propagate through to the Hamiltonian. Since our Heisenberg Hamiltonians are $\SU(2)$-invariant, the $\SU(2)$ operators cancel out, demonstrating that the quantum expectation values remain unchanged under arbitrary 3D rotations of the input point cloud.

For permutation invariance, our framework operates through a two-stage process.
The quantum processing stage is permutation-equivariant end-to-end, so it propagates symmetric, pointwise representations up to the classical head.
When the input points undergo an index permutation $\Pi$, the Hamiltonian expectation value, computed in the
same manner as above, is given by: 
\begin{align}
&\Pi\left(\langle\psi_{\text{output}}|H_{\langle i ,j \rangle}^\pm|\psi_{\text{output}}\rangle_{\text{origin}}\right) \nonumber \\
&= \langle\psi_{\text{output}}|H_{\langle \Pi(i) ,\Pi(j) \rangle}^\pm|\psi_{\text{output}}\rangle_{\Pi}.
\end{align}
The classical neural network then passes the measurements through MLP blocks and a statistical pooling operation, which is inherently invariant to input permutations.

This two-stage approach---quantum equivariance followed by classical invariance---ensures that the final classification predictions remain unchanged regardless of the ordering of input points.

By enforcing rotation and permutation invariance at the architectural level, our framework enables more robust and data-efficient 3D point cloud classification.
This symmetry guarantee is grounded in the representation-theoretic interplay between quantum dynamics and the group structures of rotations and permutations, which we leverage to design quantum modules and classical readouts that preserve these invariants end-to-end. Consequently, we resolve a bottleneck in equivariant QML at the architectural level.

\subsection{Comparison to other QML models for 3D point clouds}
Several previous studies have proposed QML models for 3D point clouds. However, these models rely too much on classical preprocessing or do not fully guarantee the dual equivariance.
For example, a model introduced in \citet{li2024enforcing} computes the inner products between all pairs of input point vectors, generating $N(N+1)/2$ input data for the quantum network where $N$ is the number of points in each point set. 
However, this approach has the following limitations: (1) Since the inner product is invariant under the full orthogonal group $O(3)$ and thus cannot distinguish mirror-reflected point clouds, the model cannot be used for applications where such symmetry is essential, e.g., in molecular structures.
(2) The inner product encoding requires a total of $\Theta(N^2)$ qubits. 
Even for $N \approx 200$, a moderate number for a point cloud dataset, the model requires approximately 20K qubits, which is impractical. 
By contrast, HyQuRP operates directly on raw coordinates and uses only $2N$ qubits, enabled by a group-theoretic architecture. This design yields a more expressive and practical model than an inner-product-based model.

Most recently, RP-EQGNN was proposed, claiming higher performance than previous QML models on standard 3D point cloud classification benchmarks~\citep{liu2025rpeqgnn}.
According to the paper, RP-EQGNN exploits rotational and permutational symmetries by parameterizing its equivariant quantum gates using rotation-invariant scalars, such as pairwise distances, angles, and edge types.
However, we found that the design described in the paper and the released implementation are insufficient to guarantee permutation- and rotation-invariant outputs (see Appendix~\ref{app:baseline_arch}).
Moreover, RP-EQGNN primarily relies on scalar rotation-invariant features rather than representation-theoretic constructions for quantum dynamics, which may limit its expressivity and generalization compared to HyQuRP.
In particular, in our experimental setting, HyQuRP consistently outperforms an RP-EQGNN baseline.

\section{Experiments}

In this section, we evaluate point cloud classification under our experimental setting. We compare HyQuRP with various quantum and classical baseline models and conduct ablation studies to examine the contribution of key gate-design choices.
The main results are summarized in Tables~\ref{tab:overall_results} and~\ref{tab:rank}, while the ablation results are reported in Table~\ref{tab:generator_ablation}. Our code, processed data, and experimental results are available in the GitHub repository~\href{https://github.com/YonseiQC/equivariant_QML}{https://github.com/YonseiQC/equivariant\_QML}.

\subsection{Experimental setup}

We use small-class subsets of ModelNet and ShapeNet, where each subset contains 5 selected classes.
ModelNet~\citep{wu20153d} is a widely used collection of clean CAD meshes of man-made objects;
ShapeNetCore~\citep{chang2015shapenet} is a large-scale repository of 3D CAD models with a unified taxonomy and substantial intra-class geometric variation.
We enforce object-level disjointness between the test and the training/validation sets and subsample points via
farthest-point sampling (FPS) from shapes centered at the origin (zero-mean) and
scaled to unit maximum radius (i.e., $\max_i \lVert \mathbf{x}_i \rVert_2 = 1$).
We focus on a sparse-point regime with $N\in\{4,5,6\}$ to assess data efficiency, while keeping the corresponding quantum circuits classically simulable within a reasonable time.

In training, these subsampled points are augmented by applying three types of random transformations: SO(3) rotation, permutation, and jitter. Training uses Adam (batch size 35) for 1,000 epochs; for each model, its own fixed learning rate $\eta$ is used for all datasets and $N \in \{4,5,6\}$, which is optimized for the ModelNet dataset with $N=4$ over $\eta \in \{10^{-2},10^{-3},10^{-4}\}$. 
All results are reported as mean~$\pm$~std of test top-1 accuracy over 7 seeds; for each seed, the checkpoint with the highest validation accuracy is evaluated on the test set. Details of the dataset design and construction are provided in Appendix~\ref{app:c1}; implementation and experimental environments are in Appendix~\ref{app:c2}.

\subsection{Baselines}
Our goal in the sparse-point regime is to compare architectural designs under matched capacity.
Canonical point cloud classification models are typically instantiated with 1024--2048 input points and tens of output classes (e.g., ModelNet40), resulting in multi-million-parameter models.
In contrast, our setting operates on only a small number of points per object and a small number of classes, so directly reusing those standard configurations would introduce far more parameters per point than the data can support, obscuring the effect of the architectural design rather than the raw capacity.
We therefore define two parameter budgets, Light ($\sim$1.5K parameters) and Mid ($\sim$7K), and instantiate both our HyQuRP models and all baselines at these levels so that performance differences primarily reflect architectural choices rather than raw capacity.

Within this capacity-matched setting, we compare HyQuRP against the following model families:
a plain MLP, Set-MLP (our classical ablation: quantum modules removed, classical pipeline unchanged), PointNet~\citep{qi2017pointnet}, Tensor Field Network (TFN)~\cite{thomas2018tensor}, DGCNN~\citep{wang2019dynamic}, 
VN-PointNet~\cite{deng2021vector},
PointTransformer (Point TF)~\citep{zhao2021point},
PointMLP~\citep{ma2022rethinking}, PointMamba~\citep{liang2024pointmamba}, Mamba3D~\cite{han2024mamba3d} and RP-EQGNN~\citep{liu2025rpeqgnn}.
The selected baselines span a broad spectrum, ranging from simple baselines to SOTA models, and from symmetry-agnostic designs to strongly symmetry-aware ones. For each baseline, we preserve its canonical building blocks and heads while systematically reducing depth/width, with only minimal adjustments required by the sparse-point regime, until it fits the Light/Mid budgets.

We refer to Appendix~\ref{app:baseline_arch} for full architectural details of the baselines, including the original designs and our implementation differences.

\subsection{Results and analysis}
\label{subsec:res_ana}

As shown in Table~\ref{tab:overall_results}, across two datasets and both Light ($\sim$1.5K parameters) and Mid ($\sim$7K) regimes, HyQuRP achieves the best overall average rank ($1.17$) and the highest average accuracy ($74.62\%$).
In particular, on ModelNet with 6 points (Light), HyQuRP achieves $76.13\%$ test accuracy, exceeding the top-performing baselines—TFN ($72.54\%$), PointNet ($71.09\%$), and PointMamba ($71.03\%$)—by +$3.59$, +$5.04$, and +$5.10$ percentage points, respectively. A similar pattern holds on ShapeNet and in the Mid setting.
Overall, these results indicate a substantial improvement over a diverse set of strong baselines across datasets and model sizes, under our experimental protocol.

In our experimental setting, on ModelNet with 6 points (Light), Set-MLP achieves $52.89\%$ test accuracy, which is $23.24$ percentage points lower than HyQuRP (Table~\ref{tab:overall_results}). On average over all settings, HyQuRP outperforms Set-MLP by $17.94 \pm 3.03$ percentage points, even though the primary architectural addition is the quantum component with $\sim$100 parameters. Set-MLP is not rotation-invariant, whereas HyQuRP is rotation- and permutation-invariant (Table~\ref{tab:rank}); this invariance property can contribute to the observed gap. Moreover, HyQuRP's quantum blocks operate in an exponentially large Hilbert space, which may offer additional expressivity that is not reflected by the small parameter increase.

To assess whether invariance alone accounts for the gain, we further compare HyQuRP with TFN and VN-PointNet, both of which are invariant to rotations and permutations. The results in Tables~\ref{tab:overall_results} and ~\ref{tab:rank} show that HyQuRP broadly outperforms both TFN and VN-PointNet despite using fewer parameters. 
More specifically, when averaged across all settings, HyQuRP outperforms TFN and VN-PointNet by $5.20 \pm 3.33$ and $5.57 \pm 3.70$ percentage points, respectively (Table~\ref{tab:overall_results}).
This suggests that the observed gain cannot be attributed solely to the invariance property, but rather that the quantum model may offer an additional representational advantage over classical models in this regime.

\begin{table}[htbp]
\centering
\setlength{\tabcolsep}{2pt}
\renewcommand{\arraystretch}{0.98}
\begin{tabular}{lccc}
\toprule
Model & Rot Inv & Perm Inv & Params (L/M) \\
\midrule
HyQuRP            & Y  & Y  & 1453 / 7693 \\
TFN               & Y  & Y  & 1500 / 7900 \\
PointNet          & N & Y  & 1675 / 7818 \\
PointMamba        & N & Y  & 1720 / 7870 \\
VN-PointNet       & Y & Y  & 1612 / 7797 \\
Point TF & N & Y  & 1937 / 8105 \\
Mamba3D           & N & N & 2194 / 8416 \\
RP-EQGNN          & N & Y  & 1485 / 8781 \\
DGCNN             & N & Y  & 1476 / 7751 \\
Set-MLP           & N & Y  & 1365 / 7605 \\
MLP               & N & N & 1479 / 7040 \\
PointMLP          & N & Y  & 1743 / 7853 \\
\bottomrule
\end{tabular}
\caption[Model Characteristics]{\textbf{Model Characteristics.} The table reports the trainable parameter counts of the Light and Mid variants, and whether each model is rotation-invariant (Rot Inv) and permutation-invariant (Perm Inv). For detailed logit comparisons before and after rotations and permutations, see Appendix~\ref{app:logit_comparisons}}
\label{tab:rank}
\end{table}

\subsection{Ablation study} \label{sec:hyqurp_ablation}
To analyze the contribution of the components in HyQuRP’s dual-equivariant gates, we conduct ablation studies on ModelNet in the 4-point setting. Specifically, we examine the effects of cycle length and generator choice. The results are presented in Table~\ref{tab:generator_ablation}.

\begin{table}[htbp]
\centering
\begin{tabular}{@{}l@{\hspace{18pt}}l@{\hspace{18pt}}c@{}}
\toprule
Ablation factor & Setting & Accuracy (\%) \\
\midrule
\multirow{2}{*}{Cycle length}
& $2$       & 70.97 $\pm$ 3.34 \\
& $2,3$     & 71.10 $\pm$ 1.90 \\
& $2,3,4$ (HyQuRP)   & 71.19 $\pm$ 1.45 \\
\midrule
\multirow{2}{*}{Generator}
& $P_k^+$   & 63.20 $\pm$ 5.31 \\
& $P_k^-$   & 70.47 $\pm$ 1.27 \\
& $P_k^\pm$ (HyQuRP) & 71.19 $\pm$ 1.45 \\
& Full generators & 71.79 $\pm$ 3.79 \\
\bottomrule
\end{tabular}
\caption{\textbf{Ablation Study.} Ablation results on ModelNet under the Light setting with 4 input points. Results are reported as mean test accuracy $\pm$ standard deviation over 7 seeds.}
\label{tab:generator_ablation}
\end{table}

\paragraph{Cycle length.}
In the 4-point setting, HyQuRP uses all admissible cycle lengths $k \in \{2,3,4\}$.  To assess the impact of the maximum cycle length, we evaluate two reduced variants that retain only $k=2$ and $k \in \{2,3\}$. As shown in Table~\ref{tab:generator_ablation}, reducing the maximum cycle length leads to only a marginal drop in performance: $71.19\%$ at $k_{\max}=4$, $71.10\%$ at $k_{\max}=3$, and $70.97\%$ at $k_{\max}=2$.
This suggests that much of the performance of HyQuRP is already captured by short cycles. 
Importantly, a small $k_{\max}$ allows HyQuRP to retain polynomial per-block gate complexity, with a per-block two-qubit gate count upper bounded by $\widetilde{O}(N^{k_{\max}+1})$, on quantum hardware while preserving most of the performance.

\paragraph{Generator.}
HyQuRP uses the generator family $P_k^{\pm}$. To assess the effect of the generator choice, we compare HyQuRP with three variants: one using only $P_k^{+}$, one using only $P_k^{-}$, and a richer model using all 84 generators from the full-dimensional operator space in Appendix~\ref{app:extension_to_arbitrary_block}. For the same generator count, $P_k^{-}$ substantially outperforms $P_k^{+}$ by $7.27 \pm 4.66$ percentage points. This suggests that, in this setting, the antisymmetric component under the internal swap within each pair appears more informative than the symmetric one. 

Beyond this pairwise comparison, accuracy further improves as the generator family is enriched, from $70.47\%$ with $P_k^{-}$ to $71.19\%$ with HyQuRP and $71.79\%$ with the full-generator model. 
This empirical trend is consistent with the view that a richer generator family can improve the expressivity of the dual-equivariant gates, and underscores the significance of our representation-theoretic framework, which rigorously characterizes the admissible generator space and enables its principled expansion. 
Notably, the larger standard deviation of the full-generator model relative to HyQuRP suggests a potential expressivity–stability trade-off, which motivates further investigation.

\section{Conclusion and Discussion} \label{sec:conclusion}

In this work, we established the theoretical framework for dual-equivariant quantum circuits under rotational and permutational symmetries, resolving a bottleneck in equivariant QML. We formulated a general class of dual-equivariant gates and charted its expressive landscape through dimension formulas. Based on this, we designed HyQuRP, a quantum-classical hybrid machine learning model with rotation and permutation equivariance to enable group-invariant classification.

Across standard 3D point cloud benchmarks with a small number of classes and points, HyQuRP generally outperforms state-of-the-art baselines in our experimental setting (Tables~\ref{tab:overall_results} and ~\ref{tab:rank}). Notably, it also outperforms VN-PointNet and TFN, despite the fact that they are built to respect the same symmetries.
Taken together, these gains are not fully accounted for by symmetry alone, suggesting the potential benefit of quantum-enhanced representations in this regime.

Our study focuses on a sparse-point regime to assess data efficiency using classical simulation. Since classical simulation does not scale to large qubit counts, our present evaluation does not yet extend to dense point clouds. However, this 3D point cloud testbed should not be conflated with the scope of the framework itself. The dual-equivariant framework and HyQuRP may also extend to a broader range of 3D geometric problems, including molecular structures and crystalline materials, among others.

Our experiments conducted here avoided commonly used techniques such as data re-uploading, resampling, and neighborhood grouping, as they may break the mathematical symmetries.
Moreover, for simplicity, we only utilized a limited set of quantum gates by imposing additional mathematical constraints, leaving room for further performance improvements. We hope our work offers a new perspective on quantum machine learning and encourages further research in this direction.

\section*{Acknowledgements}
This research was supported (in part) by the Yonsei University Research Fund of 2025-12-0256.
This research was also supported (in part) by the BK21FOUR (Fostering Outstanding Universities for Research), funded by the Ministry of Education (MOE) of Korea and the National Research Foundation (NRF) of Korea; and the NRF grants funded by the Korean government (MSIT) (No. RS-2025-16066935, RS-2025-02316431, and  RS-2025-18362970).
This research used resources of the National Energy Research Scientific Computing Center, a DOE Office of Science User Facility
supported by the Office of Science of the U.S. Department of Energy under Contract No. DE-AC02-05CH11231 using NERSC award NERSC DDR-ERCAP0038779.

\bibliography{references}

\appendix
\onecolumngrid

\renewcommand{\thefigure}{A\arabic{figure}}
\setcounter{figure}{0}
\renewcommand{\thetable}{A\arabic{table}}
\setcounter{table}{0}

\section{Group Theory}
\label{app:group_theory}

\subsection{Basics of group and representation theory}
\label{app:basic_of_group}

\subsubsection{Group definition.}
A group $(G, \cdot)$ is a set $G$ equipped with a binary operation $\cdot: G \times G \to G$ satisfying the following axioms: (i) associativity: $(a \cdot b) \cdot c = a \cdot (b \cdot c)$ for all $a, b, c \in G$; (ii) existence of identity element: there exists $e \in G$ such that $e \cdot a = a \cdot e = a$ for all $a \in G$; (iii) existence of inverse: for each $a \in G$, there exists $a^{-1} \in G$ such that $a \cdot a^{-1} = a^{-1} \cdot a = e$.

\subsubsection{Group homomorphism.}
A map $\varphi:G\to H$ is a group homomorphism if $\varphi(g_1 g_2)=\varphi(g_1)\varphi(g_2)$ for all $g_1,g_2\in G$.
Its kernel and image are $\ker\varphi=\{g\in G:\varphi(g)=e_H\}$ and $\mathrm{im}\,\varphi=\varphi(G)$, where $e_G$ and $e_H$ denote the identity elements of $G$ and $H$.
The map $\varphi$ is injective if and only if $\ker\varphi=\{e_G\}$, and it is surjective if and only if $\mathrm{im}\,\varphi=H$.
We denote a set of all homomorphisms to itself, $\iota: G \to G$, by $\mathrm{End}(G)$.
A bijective homomorphism is an isomorphism. If there exists an isomorphism between $G$ and $H$, we write $G \cong H$.
An automorphism is an isomorphism $\alpha:G\to G$; the set of all automorphisms, $\mathrm{Aut}(G)$, forms a group under composition.

\subsubsection{Group representation.}
A (linear) representation of a group $G$ on a vector space $V$ over a field $\mathbb{K}$ is a homomorphism $\rho:G\to \mathrm{GL}(V)$.
Here $\mathrm{GL}(V)$ denotes the group of all invertible linear maps $V\to V$ under composition; if $\dim_{\mathbb{K}} V=d$, then, for a given basis of $V$, $\mathrm{GL}(V)\cong \mathrm{GL}_d(\mathbb{K})$. We write $\rho(g)v$ for the action of $g\in G$ on $v\in V$, which satisfies $\rho(e)=\mathrm{Id}_V$ and $\rho(g_1g_2)=\rho(g_1)\rho(g_2)$.
In Hilbert spaces one often uses a unitary representation $\rho:G\to \mathcal{U}(\mathcal{H})$ where $\mathcal{U}(\mathcal{H})$ is the unitary group on $\mathcal{H}$.

\subsubsection{Irreducible representation.}
Let $\rho:G\to \mathrm{GL}(V)$ be a representation. A subspace $W\subset V$ is $G$-invariant if $\rho(g)W\subseteq W$ for all $g\in G$.
The representation $\rho$ is called irreducible if it has no nontrivial $G$-invariant subspace, i.e., the only $G$-invariant subspaces are $\{0\}$ and $V$; otherwise it is reducible.

\subsubsection{Character.}
Let $\rho:G\to \mathrm{GL}(V)$ be a finite-dimensional representation of a group $G$ over $\mathbb{C}$. 
The character of $V$ is the map $\chi_V:G\to\mathbb{C}$ defined by $\chi_V(g):=\mathrm{Tr}(\rho(g))$ for $g\in G$, where $\mathrm{Tr}$ denotes the trace of the linear operator $\rho(g)$ on $V$. 
If $V$ is irreducible, then $\chi_V$ is called an irreducible character.

The character is a class function, i.e., it is constant on conjugacy classes: $\chi_V(hgh^{-1})=\chi_V(g)$ for all $g,h\in G$.

If $V$ and $W$ are finite-dimensional representations of $G$, then their characters satisfy $\chi_{V\oplus W}=\chi_V+\chi_W$, $\chi_{V\otimes W}=\chi_V\chi_W$, and $\chi_{V^*}(g)=\chi_V(g^{-1})$. 
If, in addition, $\rho$ is unitary, then $\chi_{V^*}(g)=\overline{\chi_V(g)}$.

\subsubsection{Lie algebra.}
A Lie algebra $\mathfrak g$ is (typically) realized as the tangent space at the identity of a Lie group $G$, equipped with a bilinear bracket. Concretely, for a Lie group $G$ the Lie algebra is
\begin{equation}
\mathfrak g := T_e G
= \big\{ \dot\gamma(0) : \gamma:(-\varepsilon,\varepsilon)\to G \text{ smooth},\ \gamma(0)=e \big\},
\end{equation}
together with a bilinear map $[\cdot,\cdot]:\mathfrak g\times\mathfrak g\to\mathfrak g$ (the Lie bracket) that is antisymmetric and satisfies the Jacobi identity
\begin{equation}
[X,Y] = -[Y,X], \qquad
[X,[Y,Z]] + [Y,[Z,X]] + [Z,[X,Y]] = 0.
\end{equation}
For a matrix Lie group $G\subseteq \mathrm{GL}_n(\mathbb C)$, one can identify
\begin{equation}
\mathfrak g = \{ X\in\mathbb C^{n\times n} : \exp(tX)\in G \text{ for all sufficiently small } t\},
\end{equation}
and in this case the Lie bracket is simply the matrix commutator
\begin{equation}
[X,Y] := XY - YX.
\end{equation}

\subsection{SU(2) Group}
\label{app:su2_group}

\subsubsection{$\mathrm{SU}(2)$ group definition.}
$\mathrm{SU}(2)=\{U\in\mathbb{C}^{2\times2}:U^\dagger U=I,\ \det(U)=1\}$, with group operation given by matrix multiplication (identity $I$, inverse $U^{-1}=U^\dagger$).
Every $U\in\mathrm{SU}(2)$ can be written as
\begin{equation}\label{eq:su2-param}
U=\begin{psmallmatrix}\alpha&-\bar\beta\\ \beta&\bar\alpha\end{psmallmatrix},\quad
\alpha,\beta\in\mathbb{C},\ |\alpha|^2+|\beta|^2=1 .
\end{equation}
The Lie algebra of $\mathrm{SU}(2)$ is given by
\begin{equation}\label{eq:su2-lie}
\mathfrak{su}(2)=\{A\in\mathbb{C}^{2\times2}:A^\dagger=-A,\ \mathrm{tr}\,A=0\},
\end{equation}
spanned (over $\mathbb{R}$) by $\{\,i\sigma_X,\ i\sigma_Y,\ i\sigma_Z\,\}$, where the Pauli matrices are
\begin{align}\label{eq:paulis}
\sigma_X = \begin{pmatrix}
0 & 1\\
1 & 0
\end{pmatrix},\quad
\sigma_Y=\begin{pmatrix}
0 & -i\\
i & 0
\end{pmatrix},\quad
\sigma_Z=\begin{pmatrix}
1 & 0\\
0 & -1
\end{pmatrix}.
\end{align}

They satisfy
\begin{equation}\label{eq:pauli-comm}
[\sigma_i,\sigma_j]=2i\,\epsilon_{ijk}\,\sigma_k .
\end{equation}
Equivalently, any Lie-algebra element $X\in\mathfrak{su}(2)$ can be written as
\begin{equation}\label{eq:axis-angle-algebra}
X=-\tfrac{i}{2}\,\theta\,\hat n\cdot\vec\sigma,
\end{equation}
and exponentiating yields a group element $\exp(X)\in\mathrm{SU}(2)$.

\subsubsection{Connection to 3D rotation.}
\begin{equation}
\Phi:\mathrm{SU}(2)\!\to\!\mathrm{SO}(3)\ \text{ is surjective with kernel }\{\pm I\},\ \ \mathrm{SO}(3)\cong \mathrm{SU}(2)/\{\pm I\}.
\end{equation}

\emph{Proof.}
The space of traceless Hermitian $2\times 2$ matrices is a vector space over $\mathbb{R}$ with a basis $\{\sigma_X,\sigma_Y,\sigma_Z\}$.
For $U\in\mathrm{SU}(2)$ and each $j$, unitary conjugation preserves the hermiticity and trace, so
\begin{equation}\label{eq:adjoint-R}
U\sigma_jU^\dagger=\sum_{i=1}^3 R_{ij}(U)\,\sigma_i,
\end{equation}
for $ R_{ij}(U) \in \mathbb{R}$.
Then, we define $\Phi$ as
\begin{align}
\Phi(U) = (R_{ij}(U))_{1\leq i,j \leq 3}.
\end{align}
One can show that $\Phi$ satisfies the desired properties as follows.

First, multiplying by $\sigma_k$ and taking the trace of both sides of Eq.~\eqref{eq:adjoint-R} gives
\begin{equation}
\operatorname{tr}\big(\sigma_k\,U\sigma_j U^\dagger\big) =\sum_i R_{ij}(U)\,\operatorname{tr}(\sigma_k\sigma_i).
\end{equation}
Hence, we obtain
\begin{equation}
R_{kj}(U)=\frac12\,\operatorname{tr}\!\big(\sigma_k\,U\sigma_jU^\dagger\big)
\quad\  (\;\because\; \operatorname{tr}(\sigma_k\sigma_i)=2\delta_{ki}).
\end{equation}

We then prove that $R_{ij}(U)\in \SO(3)$ by evaluating $\tfrac12\,\operatorname{tr}(U\sigma_iU^\dagger\,U\sigma_jU^\dagger)$ in two ways. First, 
\begin{equation}\label{eq:R-orthogonal}
\tfrac12\,\mathrm{tr}\!\big(U\sigma_iU^\dagger\,U\sigma_jU^\dagger\big)
=\tfrac12\,\mathrm{tr}(\sigma_i\sigma_j)=\delta_{ij}
\
\end{equation}

Second, expanding $U\sigma_iU^\dagger$ and $U\sigma_jU^\dagger$ in the Pauli basis via~\eqref{eq:adjoint-R}, we rewrite the left-hand side in terms of $R(U)$:
\begin{equation}
\begin{aligned}
\tfrac12\,\mathrm{tr}\!\big(U\sigma_iU^\dagger\,U\sigma_jU^\dagger\big)
&=\tfrac12\,\mathrm{tr}\!\Big(\Big(\sum_{k=1}^3 R_{ki}(U)\sigma_k\Big)\Big(\sum_{\ell=1}^3 R_{\ell j}(U)\sigma_\ell\Big)\Big) \\
&=\sum_{k,\ell} R_{ki}(U)R_{\ell j}(U)\,\tfrac12\,\mathrm{tr}(\sigma_k\sigma_\ell) \\
&=\sum_{k} R_{ki}(U)R_{k j}(U)
=(R(U)^\top R(U))_{ij}.
\end{aligned}
\end{equation}
Comparing the two expressions yields $(R(U)^\top R(U))_{ij}=\delta_{ij}$, i.e., $R(U)^\top R(U)=I$.
Moreover, by continuity and $R(I)=I$, we get $\det( R(U))=+1$, hence $R(U)\in\mathrm{SO}(3)$.

Next, we prove that $\Phi$ is a homomorphism.
By the linearity, for any $\vec v\in\mathbb{R}^3$,
\begin{equation}\label{eq:adjoint-vector}
U(\vec v\!\cdot\!\vec\sigma)U^\dagger=\sum_{i,j}R_{ij}(U)v_j\,\sigma_i=(R(U)\vec v)\!\cdot\!\vec\sigma.
\end{equation}
Thus the assignment $\Phi:\mathrm{SU}(2)\to\mathrm{SO}(3)$, $U\mapsto R(U)$, is well-defined and satisfies
\begin{equation}\label{eq:phi-hom}
(UV)(\vec v\!\cdot\!\vec\sigma)(UV)^\dagger
=U\!\left(V(\vec v\!\cdot\!\vec\sigma)V^\dagger\right)\!U^\dagger
\ \Rightarrow\ \Phi(UV)=\Phi(U)\Phi(V),
\end{equation}
(i.e., a group homomorphism).

If $\Phi(U)=I$, then $U\sigma_iU^\dagger=\sigma_i$ for all $i$, hence $U$ commutes with the matrix algebra generated by $\{\sigma_i\}$ and must be $\pm I$; therefore $\ker\Phi=\{\pm I\}$. Moreover, the differential $d\Phi_e:\mathfrak{su}(2)\to\mathfrak{so}(3)$ is an isomorphism, so the image is the connected Lie subgroup with Lie algebra $\mathfrak{so}(3)$, namely $\mathrm{SO}(3)$, 
hence $\operatorname{im}(\Phi)=\mathrm{SO}(3)$.
Since $\operatorname{im}(\Phi)=\mathrm{SO}(3)$ and $\ker\Phi=\{\pm I\}$, the First Isomorphism Theorem yields $\mathrm{SO}(3)\cong \mathrm{SU}(2)/\{\pm I\}$.

\subsubsection{Single-qubit gates.}
Single-qubit operations can be represented by $\mathrm{SU}(2)$. Standard rotations
\begin{equation}
R_X(\theta)=e^{-i\theta\sigma_x/2},\quad
R_Y(\theta)=e^{-i\theta\sigma_y/2},\quad
R_Z(\theta)=e^{-i\theta\sigma_z/2}
\end{equation}
generate arbitrary $U\in\mathrm{SU}(2)$ (e.g., ZYZ/Euler decompositions).

\subsubsection{$\mathrm{SU}(2)$ irreducible representations.}
All finite-dimensional unitary irreps of $\mathrm{SU}(2)$ are indexed by total angular momentum $j\in\{0,\tfrac12,1,\tfrac32,\dots\}$. The spin-$j$ irrep $V^{(j)}$ has dimension $2j+1$ with weight basis $\{|j,m\rangle: m=-j,-j+1,\dots,j\}$ on which the generators $J_x,J_y,J_z$ satisfy $[J_x,J_y]=iJ_z$ (cyclic), $J_z|j,m\rangle=m|j,m\rangle$, $J^2|j,m\rangle=j(j+1)|j,m\rangle$, and with $J_\pm:=J_x\pm iJ_y$,
\begin{equation}\label{eq:J-raising}
J_\pm|j,m\rangle=\sqrt{(j\mp m)(j\pm m+1)}\,|j,m\!\pm\!1\rangle.
\end{equation}
Group elements $U=\exp(-i\theta\,\hat n\!\cdot\!\vec\sigma/2)$ act in spin-$j$ as $D^{(j)}(U)=\exp(-i\theta\,\hat n\!\cdot\!\mathbf J^{(j)})$, with $\mathbf J^{(1/2)}=\tfrac12\vec\sigma$. Tensor products decompose by the Clebsch–Gordan rule $V^{(j_1)}\otimes V^{(j_2)}=\bigoplus_{j=|j_1-j_2|}^{\,j_1+j_2} V^{(j)}$; in particular 
\begin{equation}
V^{(1/2)} \otimes V^{(1/2)} = V^{(0)}\oplus V^{(1)},
\end{equation}
where the trivial one-dimensional $j=0$ block is the singlet. For simplicity, one often uses $j$ to denote $V^{(j)}$, e.g., $1/2 \otimes 1/2 = 0 \oplus 1$.

For \(n\)-qubit systems, within $(\mathbb C^2)^{\otimes n}$, the $\mathrm{SU}(2)$ irreducible sectors are indexed by partitions with at most two parts. Specifically, a partition
\(
\lambda=(\lambda_1,\lambda_2)
\)
corresponds to the $\mathrm{SU}(2)$ irrep of spin
\(
j=\frac{\lambda_1-\lambda_2}{2}.
\)
Thus, within $(\mathbb C^2)^{\otimes n}$, the partition $(n-k,k)$ labels the spin-\(\frac{n-2k}{2}\) sector in the Schur--Weyl decomposition (Appendix~\ref{app:schur_weyl_duality}).

\subsection{Symmetric group} 
\label{app:symmetric_group}

\subsubsection{Symmetric group definition.}
Let $[n]\!:=\!\{1,\dots,n\}$ and define the symmetric group as the set of all set–set bijections
\begin{equation}\label{eq:Sn-def}
\Sym_n \equiv \mathrm{Sym}([n]) := \{\sigma:[n]\to[n]\;|\;\sigma\ \text{bijective}\}.
\end{equation}
With composition $(\sigma, \sigma')\mapsto \sigma\circ\sigma'$, identity $\mathrm{id}_{[n]}$, and inverses $\sigma^{-1}$, we have $|\Sym_n|=n!$. 

On the $n$-qubit space $(\mathbb{C}^2)^{\otimes n}$ we use the action that permutes tensor factors. Define
\begin{equation}\label{eq:perm-rep}
\Pi: \Sym_n \longrightarrow \mathrm{U}\!\big((\mathbb{C}^2)^{\otimes n}\big),
\quad
\Pi(\sigma)\big(v_1\!\otimes\!\cdots\!\otimes\! v_n\big)
:= v_{\sigma^{-1}(1)}\!\otimes\!\cdots\!\otimes\! v_{\sigma^{-1}(n)} .
\end{equation}
On computational basis states $\ket{x_1\cdots x_n}$ with $x_i\in\{0,1\}$,
\begin{equation}\label{eq:perm-basis}
\Pi(\sigma)\ket{x_1\cdots x_n}
= \ket{x_{\sigma^{-1}(1)}\cdots x_{\sigma^{-1}(n)}} .
\end{equation}
The inverse of indices ensures homomorphism:
\begin{equation}\label{eq:perm-hom}
\Pi(\sigma)\Pi(\sigma')=\Pi(\sigma\sigma'), \qquad \Pi(\mathrm{id})=I .
\end{equation}

\subsubsection{Symmetric group irreducible representations.}
Conjugacy classes of $\Sym_n$ are classified by cycle type. 
Indeed, every permutation decomposes into disjoint cycles, uniquely up to reordering of the cycles, and conjugation preserves this cycle structure. 
Hence conjugacy classes of $\Sym_n$ are indexed by partitions of $n$.

A partition of $n$ is a weakly decreasing sequence
\begin{equation}\label{eq:partition-def-app}
\lambda=(\lambda_1,\dots,\lambda_\ell)\vdash n,
\qquad
\lambda_1\ge \cdots \ge \lambda_\ell >0,
\qquad
\sum_{i=1}^{\ell}\lambda_i=n.
\end{equation}
Over $\mathbb{C}$, the irreducible representations of $\Sym_n$ are likewise indexed by partitions:
\begin{equation}\label{eq:Sn-irrep-indexed-by-partitions}
\{\text{irreducible representations of }\Sym_n\}
=
\{\,\Sym^\lambda \mid \lambda\vdash n\,\}.
\end{equation}
A concrete construction of $\Sym^\lambda$ is obtained from Young symmetrizers, as described below.

To a partition $\lambda\vdash n$ we associate its Young diagram, namely the left-justified array with $\lambda_i$ boxes in row $i$. 
For example, the partition $\lambda=(4,2,1)\vdash 7$ is represented by
\[
\begin{array}{|c|c|c|c|}
\hline
\phantom{x}&\phantom{x}&\phantom{x}&\phantom{x}\\
\hline
\phantom{x}&\phantom{x}\\
\cline{1-2}
\phantom{x}\\
\cline{1-1}
\end{array}
\]
A Young tableau of shape $\lambda$ is a filling of these boxes with $1,\dots,n$. 
For example, a tableau of shape $(4,2,1)$ is
\[
\begin{array}{|c|c|c|c|}
\hline
1&2&3&4\\
\hline
5&6\\
\cline{1-2}
7\\
\cline{1-1}
\end{array}
\]

Fix a partition $\lambda\vdash n$ and a Young tableau $T$ of shape $\lambda$.
Let $R_T\le \Sym_n$ be the row stabilizer and $C_T\le \Sym_n$ the column stabilizer:
\begin{equation}\label{eq:row-col-stabilizer-app}
R_T:=\{\sigma\in\Sym_n \mid \sigma \text{ preserves each row of }T\},
\qquad
C_T:=\{\tau\in\Sym_n \mid \tau \text{ preserves each column of }T\}.
\end{equation}
Define
\begin{equation}\label{eq:young-symmetrizer-app}
a_T=\sum_{\sigma\in R_T}\sigma,
\qquad
b_T=\sum_{\tau\in C_T}\mathrm{sgn}(\tau)\,\tau,
\qquad
e_T=a_T b_T\in\mathbb{C}[\Sym_n].
\end{equation}
In general one has
\begin{equation}\label{eq:quasi-idempotent-app}
e_T^2=c_T\,e_T
\end{equation}
for some nonzero scalar $c_T$, and we set the primitive idempotent
\begin{equation}\label{eq:primitive-idempotent-app}
e_T^\circ:=c_T^{-1}e_T,
\qquad
(e_T^\circ)^2=e_T^\circ.
\end{equation}

The left ideal
\begin{equation}\label{eq:specht-module-app}
\mathbb{C}[\Sym_n]\,e_T
\;\cong\;
\mathbb{C}[\Sym_n]\,e_T^\circ
\end{equation}
is called the Specht module associated with $T$. 
Although the element $e_T$ depends on the labeling of the tableau $T$, the resulting representation depends only on the shape $\lambda$. 
More precisely, if $T$ and $T'$ have the same shape $\lambda$, then
\begin{equation}\label{eq:labeling-not-important}
\mathbb{C}[\Sym_n]\,e_T \cong \mathbb{C}[\Sym_n]\,e_{T'} \cong \Sym^\lambda.
\end{equation}
Thus the labeling is not essential; up to isomorphism, only the partition $\lambda$ matters. 
Accordingly, one also refers to this representation as the Specht module of shape $\lambda$.

The dimension of $\Sym^\lambda$ equals the number of standard Young tableaux of shape $\lambda$, and is given by the hook-length formula:
\begin{equation}\label{eq:hook-length-formula-app}
\dim \Sym^\lambda
=
f^\lambda
=
\frac{n!}{\prod_{b\in\lambda} h(b)}.
\end{equation}
Here the hook length $h(b)$ of a box $b$ is the number of boxes directly to the right of $b$, directly below $b$, plus the box $b$ itself.

Two basic examples are
\begin{equation}\label{eq:trivial-sign-app}
\Sym^{(n)} \cong \mathbf{1},
\qquad
\Sym^{(1^n)} \cong \mathrm{sgn}.
\end{equation}
More generally, the partition $(n-1,1)$ yields the standard representation of $\Sym_n$.

\subsubsection{Example: The irreducible representations of \texorpdfstring{$\Sym_3$}{S3}.}
The partitions of $3$ are
\begin{equation}
(3),\qquad (2,1),\qquad (1,1,1).
\end{equation}
Hence $\Sym_3$ has exactly three inequivalent irreducible representations:
\begin{equation}
\Sym^{(3)},\qquad \Sym^{(2,1)},\qquad \Sym^{(1,1,1)}.
\end{equation}

For the shape $(3)$, take the tableau
\[
T=
\begin{array}{|c|c|c|}
\hline
1&2&3\\
\hline
\end{array}.
\]
Then
\begin{equation}
R_T=\Sym_3,
\qquad
C_T=\{e\},
\end{equation}
so
\begin{equation}\label{eq:S3-trivial-idempotent}
e_T=\sum_{\sigma\in\Sym_3}\sigma.
\end{equation}
The corresponding module is the trivial representation:
\begin{equation}\label{eq:S3-trivial-rep}
\mathbb{C}[\Sym_3]\,e_T \cong \Sym^{(3)} \cong \mathbf{1}.
\end{equation}
Its dimension is obtained from the hook lengths
\[
\begin{array}{|c|c|c|}
\hline
\phantom{x}&\phantom{x}&\phantom{x}\\
\hline
\end{array}
\qquad \leadsto \qquad
\begin{array}{|c|c|c|}
\hline
3&2&1\\
\hline
\end{array}
\]
as
\begin{equation}\label{eq:S3-dim-3}
\dim \Sym^{(3)}
=
\frac{3!}{3\cdot 2\cdot 1}
=
1.
\end{equation}

For the shape $(2,1)$, take the tableau
\[
T=
\begin{array}{|c|c|}
\hline
1&2\\
\hline
3\\
\cline{1-1}
\end{array}.
\]
Then
\begin{equation}
R_T=\{e,(12)\},
\qquad
C_T=\{e,(13)\},
\end{equation}
and hence
\begin{equation}\label{eq:S3-standard-idempotent}
e_T=(e+(12))(e-(13)).
\end{equation}
This gives the two-dimensional irreducible representation
\begin{equation}\label{eq:S3-standard-rep}
\mathbb{C}[\Sym_3]\,e_T \cong \Sym^{(2,1)}.
\end{equation}
Its dimension follows from the hook lengths
\[
\begin{array}{|c|c|}
\hline
\phantom{x}&\phantom{x}\\
\hline
\phantom{x}\\
\cline{1-1}
\end{array}
\qquad \leadsto \qquad
\begin{array}{|c|c|}
\hline
3&1\\
\hline
1\\
\cline{1-1}
\end{array}
\]
as
\begin{equation}\label{eq:S3-dim-21}
\dim \Sym^{(2,1)}
=
\frac{3!}{3\cdot 1\cdot 1}
=
2.
\end{equation}

For the shape $(1,1,1)$, take the tableau
\[
T=
\begin{array}{|c|}
\hline
1\\
\hline
2\\
\hline
3\\
\hline
\end{array}.
\]
Then
\begin{equation}
R_T=\{e\},
\qquad
C_T=\Sym_3,
\end{equation}
so
\begin{equation}\label{eq:S3-sign-idempotent}
e_T=\sum_{\tau\in\Sym_3}\mathrm{sgn}(\tau)\,\tau.
\end{equation}
The corresponding module is the sign representation:
\begin{equation}\label{eq:S3-sign-rep}
\mathbb{C}[\Sym_3]\,e_T \cong \Sym^{(1,1,1)} \cong \mathrm{sgn}.
\end{equation}
Its dimension is obtained from the hook lengths
\[
\begin{array}{|c|}
\hline
\phantom{x}\\
\hline
\phantom{x}\\
\hline
\phantom{x}\\
\hline
\end{array}
\qquad \leadsto \qquad
\begin{array}{|c|}
\hline
3\\
\hline
2\\
\hline
1\\
\hline
\end{array}
\]
as
\begin{equation}\label{eq:S3-dim-111}
\dim \Sym^{(1,1,1)}
=
\frac{3!}{3\cdot 2\cdot 1}
=
1.
\end{equation}

Therefore, the irreducible representations of $\Sym_3$ are
\begin{equation}\label{eq:S3-summary}
\Sym^{(3)},\qquad \Sym^{(2,1)},\qquad \Sym^{(1,1,1)},
\end{equation}
with dimensions
\begin{equation}\label{eq:S3-dimensions}
\dim \Sym^{(3)}=1,\qquad
\dim \Sym^{(2,1)}=2,\qquad
\dim \Sym^{(1,1,1)}=1.
\end{equation}

\subsection{Schur's Lemma}
\label{app:schurs_lemma}

\subsubsection{Statement \& proof.}
Let $\rho_V: G \to \mathrm{GL}(V)$ and $\rho_W: G \to \mathrm{GL}(W)$ be irreducible $G$-representations over a field $K$,
and let $T:V\to W$ be a linear map such that $\rho_W(g)\,T = T\,\rho_V(g)$ for all $g\in G$ (i.e., $T$ is a $G$-intertwiner).
Then:
\begin{enumerate}
\item Either $T=0$ or $T$ is an isomorphism $V \xrightarrow{\ \cong\ } W$.
\item In particular, if $V=W$ and $\rho_V=\rho_W$ over an algebraically closed field, then $T=\lambda I_V$ for some $\lambda\in K$.
\end{enumerate}

\emph{Proof.}
Because $T$ intertwines the actions, both $\ker T\subseteq V$ and $\mathrm{Im}\,T\subseteq W$ are $G$-invariant:
For any $v\in\ker T$ and $g\in G$, $T(\rho_V(g)v)=\rho_W(g)T(v)=0$, hence $\rho_V(g)v\in\ker T$.
For any $u\in\mathrm{Im}\,T$, choose $v'\in V$ with $u=T(v')$; then
$\rho_W(g)u=\rho_W(g)T(v')=T(\rho_V(g)v')\in\mathrm{Im}\,T$.
Irreducibility gives $\ker T\in\{0,V\}$ and $\mathrm{Im}\,T\in\{0,W\}$.
If $T\neq 0$, then $\mathrm{Im}\,T=W$ and $\ker T=0$, hence $T$ is bijective.
Over an algebraically closed field, any $T\in\mathrm{End}_G(V)$ has an eigenvalue $\lambda$; then $T-\lambda I$ is an intertwiner with nontrivial kernel, so it must be zero, yielding $T=\lambda I$.

\subsubsection{Schur's Lemma application in $\mathrm{SU}(2)$.}
View $n$ qubits as $n$ copies of the spin-$\tfrac{1}{2}$ irrep of $\mathrm{SU}(2)$.
The tensor space decomposes as
\begin{equation}
(\mathbb{C}^{2})^{\otimes n}
\;\cong\;
\bigoplus_{j}\,\mathbb{C}^{m_{j}}\otimes J^{(j)} .
\end{equation}
where $J^{(j)}$ is the spin-$j$ irrep and $m_{j}$ is its multiplicity.
The Schur (spin) map $S_{n}$ implements this change of basis, sending the computational basis to the spin basis in which the $\mathrm{SU}(2)$ action is block diagonal by total spin $j$.

Let $U$ be any $\mathrm{SU}(2)$-equivariant linear operator on $(\mathbb{C}^{2})^{\otimes n}$ (equivalently, $U$ commutes with the group action).
Conjugating into the spin basis, let $U_{j,j'}$ be a block of $U$ between spin-$j$ and spin-$j'$ space, i.e., $U_{j,j'}:\mathbb{C}^{m_{j'}}\otimes J^{(j')} \to \mathbb{C}^{m_j}\otimes J^{(j)}$.
The commutation relation reads
\begin{equation}
\bigl(I_{m_j}\otimes \rho_j(g)\bigr)\,U_{j,j'}
\;=\;
U_{j,j'}\,\bigl(I_{m_{j'}}\otimes \rho_{j'}(g)\bigr),
\qquad \forall g\in SU(2),
\end{equation}
where $\rho_j:\mathrm{SU}(2)\to \mathrm{U}(J^{(j)})$ denotes the spin-$j$ irrep.
Each $U_{j,j'}$ is therefore an intertwiner between irreps.
By Schur's Lemma, $U_{j,j'}=0$ unless $j=j'$, so $U$ is block diagonal in the spin basis; moreover, on each irrep block,
\begin{equation}
U\big|_{\mathbb{C}^{m_j}\otimes J^{(j)}}
\;=\;
A_j \otimes I_{J^{(j)}},
\qquad A_j \in \mathrm{End}\!\bigl(\mathbb{C}^{m_j}\bigr),
\end{equation}
Thus, after transforming by $\Sym_{n}$, every $\mathrm{SU}(2)$-equivariant operator decomposes as 
$\bigoplus_j \bigl(A_j\otimes I_{J^{(j)}}\bigr)$: It acts as the identity on each spin irrep $J^{(j)}$, with freedom only on the multiplicity spaces.

\subsection{Schur--Weyl duality}
\label{app:schur_weyl_duality}

\subsubsection{Statement \& proof}
Let \(V\cong\mathbb{C}^d\) and \(W:=V^{\otimes n}\). There are commuting actions
\begin{equation}
\rho:\mathrm{GL}(V)\to \mathrm{GL}(W),\quad \rho(T)=T^{\otimes n},
\qquad
\Pi:\Sym_n\to \mathrm{GL}(W),\quad \Pi(\sigma)\ \text{permutes tensor factors},
\end{equation}
so that \(\rho(T)\Pi(\sigma)=\Pi(\sigma)\rho(T)\) for all \(T\in \mathrm{GL}(V),\sigma\in \Sym_n\).
Write
\begin{equation}
\mathcal{A} := {\Pi}(\mathbb{C}[\Sym_n]) \subseteq \mathrm{End}(W),\qquad
\mathcal{B} := \mathrm{alg}\langle T^{\otimes n}: T\in \mathrm{GL}(V)\rangle \subseteq \mathrm{End}(W).
\end{equation}
Then
\begin{equation}
\mathcal{A}=\mathrm{End}_{\mathrm{GL}(V)}(W),
\qquad
\mathcal{B}=\mathrm{End}_{\mathbb{C}[\Sym_n]}(W).
\end{equation}
Moreover, there is a canonical simultaneous decomposition

\begin{equation}
W
\;\cong\;
\bigoplus_{\lambda\vdash n,\;\ell(\lambda)\le d}
\mathcal{V}^{\lambda}\,\otimes\,\Sym^{\lambda},
\end{equation}
where \(\Sym^{\lambda}\) is the Specht module (irreducible \(\Sym_n\)-module) and \(\mathcal{V}^{\lambda}\) is the Schur functor applied to \(V\) (irreducible \(\mathrm{GL}(V)\)-module).
The sum is over all $\lambda$, the shape of a Young tableau with $n$ boxes, and $\ell(\lambda)$ is the number of rows in the Young tableau with shape $\lambda$.
We require $\ell(\lambda) \leq d$, which is the necessary and sufficient condition for \(\mathcal{V}^{\lambda}\neq 0\).

\emph{Proof.}
Since \(\rho\) and \(\Pi\) commute, \(\mathcal{A}\subseteq \mathrm{End}_{\mathrm{GL}(V)}(W)\) and \(\mathcal{B}\subseteq \mathrm{End}_{\mathbb{C}[\Sym_n]}(W)\).
Over \(\mathbb{C}\), \(W\) is semisimple for the commuting algebras \(\mathcal{A}\) and \(\mathcal{B}\).
By the double centralizer theorem, the inclusions are equalities and \(W\) decomposes as a direct sum of tensor products of irreducibles:
\begin{equation}
W\cong \bigoplus_\lambda \mathcal{V}^{\lambda}\otimes \Sym^{\lambda}.
\end{equation}
In the following paragraph, we characterize $\Sym_{\lambda}(V)$ inside $W$ using
Young symmetrizers.

\subsubsection{How to construct the block $\mathcal{V}^{\lambda}\otimes \Sym^\lambda$}
Fix a partition \(\lambda\vdash n\) and a Young tableau \(T\) of shape \(\lambda\).
Let \(R_T\le \Sym_n\) be the row stabilizer and \(C_T\le \Sym_n\) the column stabilizer.
Define
\begin{equation}
a_T=\sum_{\sigma\in R_T}\sigma,\qquad
b_T=\sum_{\tau\in C_T}\mathrm{sgn}(\tau)\,\tau,\qquad
e_T=a_T\,b_T\in\mathbb{C}[\Sym_n].
\end{equation}
In general \(e_T^2=c_T\,e_T\) for some nonzero scalar \(c_T\); set the primitive idempotent \(e_T^\circ:=c_T^{-1}e_T\) so that \((e_T^\circ)^2=e_T^\circ\).
Then
\begin{equation}
\Sym^{\lambda} \;\cong\; \mathbb{C}[\Sym_n]\cdot e_T^\circ,\qquad
\mathcal{V}^{\lambda} \;\cong\; \Pi(e_T^\circ)\,W.
\end{equation}
Choosing a different \(T\) of the same shape gives an isomorphic module.

A linear basis of \(\Sym^{\lambda}\) is given by polytabloids attached to the standard Young tableaux (SYT) of shape \(\lambda\):
if \(\{T\}\) denotes the tabloid of \(T\), the basis vectors are
\begin{equation}
E_T\ :=\ e_T^\circ\,\{T\}.
\end{equation}
A linear basis of \(\mathcal{V}^{\lambda}\) is indexed by semi-standard Young tableaux (SSYT) of shape \(\lambda\) filled with \(\{1,\dots,d\}\): rows weakly increasing and columns strictly increasing.
Writing \(V=\mathbb{C}^d\) with basis \(\{e_1,\dots,e_d\}\) and listing the SSYT entries as \((t_1,\dots,t_n)\) in box order, the corresponding basis vectors are
\begin{equation}
v_t\ :=\ \Pi(e_T^\circ)\big(e_{t_1}\otimes e_{t_2}\otimes\cdots\otimes e_{t_n}\big).
\end{equation}
(Here \(T\) is any fixed tableau of shape \(\lambda\); different choices produce canonically isomorphic bases.)

\subsubsection{Dimensions}
For \(\lambda\vdash n\) with boxes \(u\), hook length \(h(u)\) and content \(c(u)=\mathrm{col}(u)-\mathrm{row}(u)\),
\begin{equation}
\dim \Sym^{\lambda}=\frac{n!}{\prod_{u\in\lambda}h(u)},\qquad
\dim \mathcal{V}^{\lambda}=\prod_{u\in\lambda}\frac{d+c(u)}{h(u)}.
\end{equation}

\subsection{SU(2) equivariant Gate and generalized permutation}
\label{app:generalized_permutation}

\subsubsection{$\mathrm{SU}(2)$ Equivariant gate to generalized permutation.}

Let $W=(\mathbb{C}^{2})^{\otimes n}$ be a vector space with $\mathrm{SU}(2)$-action $\rho(U)=U^{\otimes n}$
and the permutation action $\Pi(\sigma)$ of $\sigma \in \Sym_n$. Then
\begin{equation}
\mathrm{End}_{\mathrm{SU}(2)}(W)=\Pi\!\big(\mathbb{C}[\Sym_n]\big),
\end{equation}
i.e., every operator commuting with $\rho(U)$ lies in the group algebra generated by
$\{\Pi(\sigma)\}_{\sigma\in \Sym_n}$.
In particular, any operator of the form
\begin{equation}
Q=\exp\!\Big(\sum_{j=1}^{m} c_j\,\Pi(\sigma_j)\Big),
\qquad c_j\in\mathbb{C},\ \sigma_j\in \Sym_n,
\end{equation}
is $\mathrm{SU}(2)$-equivariant.

\emph{Proof.}
Recall that $\Sym_n$ acts on $W$ by permuting tensor factors via
\begin{equation}
\Pi(\sigma)\big(\ket{v_1}\otimes\cdots\otimes\ket{v_n}\big)
=\ket{v_{\sigma^{-1}(1)}}\otimes\cdots\otimes\ket{v_{\sigma^{-1}(n)}}.
\end{equation}
Moreover, the diagonal $\mathrm{SU}(2)$-action $\rho(U)=U^{\otimes n}$ commutes with this permutation action:
\begin{equation}
\rho(U)\,\Pi(\sigma)=\Pi(\sigma)\,\rho(U),
\qquad \forall\,U\in\mathrm{SU}(2),\ \sigma\in \Sym_n.
\end{equation}
By Schur--Weyl duality specialized to $d=2$,
\begin{equation}
W
\cong
\bigoplus_{k=0}^{\lfloor n/2\rfloor}
J^{(n/2-k)} \otimes \mathfrak{S}^{(n-k,k)},
\end{equation}
where $J^{(n/2-k)}$ are the $\mathrm{SU}(2)$ irreps (total spin blocks) and $\Sym^{\lambda}$ are the Specht modules of $\Sym_n$.
With respect to this decomposition the $\mathrm{SU}(2)$–commutant is
\begin{equation}
\mathrm{End}_{\mathrm{SU}(2)}(W)
\;=\;\Pi\big(\mathbb{C}[\Sym_n]\big)
\;\cong\;\bigoplus_{k=0}^{\lfloor n/2\rfloor} \mathbf{1}_{J^{(n/2-k)}}\otimes \mathrm{End}(\Sym^{(n-k,k)}),
\end{equation}
i.e., operators commuting with all $\rho(U)$ act trivially on $J^{(n/2-k)}$ and arbitrarily on the multiplicity spaces $\Sym^{\lambda}$.

We call
\begin{equation}
Q:=\exp\!\Big(\sum_{j=1}^{m} c_j\,\Pi(\sigma_j)\Big),
\qquad c_j\in\mathbb{C},\ \sigma_j\in \Sym_n,
\end{equation}
a generalized-permutation operator. Since $\Pi(\mathbb{C}[\Sym_n])$ is an algebra contained in $\mathrm{End}_{\mathrm{SU}(2)}(W)$, any polynomial in
$\{\Pi(\sigma)\}$ (hence the exponential) is $\mathrm{SU}(2)$-equivariant. Additionally, if $\sum_{j} c_j\,\Pi(\sigma_j)$ is skew-Hermitian, then $Q$ is unitary and usable as a quantum gate.

\subsubsection{Generalized permutation to $\mathrm{SU}(2)$ equivariant gate.}
Let $W:=V^{\otimes n}$ and $V=\mathbb{C}^{2}$. We show that every invertible $\mathrm{SU}(2)$–equivariant operator
(in particular, every unitary gate) $T$ on $W$ is a form of generalized-permutation operator:
$T=\exp[\sum_{j=1}^m c_j\,\Pi(\sigma_j)]$.

\emph{Proof.}
By the Schur–Weyl duality (double-commutant form),
\begin{equation}
\mathrm{End}_{\mathrm{SU}(2)}(W)=\Pi(\mathbb{C}[\Sym_n]) =: \mathcal A,
\end{equation}
a finite-dimensional $\ast$-algebra (i.e., a finite-dimensional algebra equipped with an adjoint operation (conjugate-linear, reverses multiplication, and applying it twice returns the original element)). If $T$ is $\mathrm{SU}(2)$–equivariant, then $T\in\mathcal A$.
Since $T$ is invertible (unitary suffices), the holomorphic functional calculus in the
finite-dimensional algebra $\mathcal A$ provides a logarithm $\log T\in\mathcal A$ such that
$\exp(\log T)=T$.
Because $\mathcal A$ is the linear span of $\{\Pi(\sigma):\sigma\in \Sym_n\}$, there exist
$c_1,\dots,c_m\in\mathbb C$ and $\sigma_1,\dots,\sigma_m\in \Sym_n$ with
\begin{equation}
\log T=\sum_{j=1}^m c_j\,\Pi(\sigma_j).
\end{equation}

Exponentiating gives
\begin{equation}
T=\exp\!\Big(\sum_{j=1}^m c_j\,\Pi(\sigma_j)\Big).
\end{equation}

\subsection{Group twirling.}
\label{app:group_twirling}
We show that $\mathcal{T}_G[A]$ commutes with the group action, i.e., it is $G$–equivariant as an operator.
For any fixed $\tau\in G$,
\begin{equation}
\begin{aligned}
V[\tau]\,\mathcal{T}_G[A]
&= \frac{1}{|G|}\sum_{g\in G}V[\tau]\,V[g]\,A\,V[g]^{-1}
= \frac{1}{|G|}\sum_{g\in G}V[\tau g]\,A\,V[g]^{-1} \\
&\stackrel{g'=\tau g}{=}\frac{1}{|G|}\sum_{g'\in G}V[g']\,A\,V[\tau^{-1}g']^{-1}
= \frac{1}{|G|}\sum_{g'\in G}V[g']\,A\,V[g']^{-1}\,V[\tau] \\
&= \mathcal{T}_G[A]\,V[\tau],
\end{aligned}
\end{equation}
where we used $(\tau^{-1}g')^{-1} = g'^{-1}\tau$ and the homomorphism property of $V$.
Hence $\mathcal{T}_G[A]$ lies in the commutant
\begin{equation}
\{M\in~\mathrm{End}(W):\,M\,V[\tau]=V[\tau]\,M,\ \forall\tau\in G\},
\end{equation}
and is therefore invariant under the adjoint (conjugation) action of $G$. For a unitary representation, we can write $V[\tau]^{-1}$ as $V[\tau]^{\dagger}$.

\section{Proofs}
\label{app:proof}

\subsection{Proof of observation~\ref{obv:joint_su2_sn_equiv_gate}}
\label{app:proof_joint_su2_sn_equiv_gate}

If $M \in \mathrm{End}(W)$ satisfies
\begin{equation}
[M,\rho(U)] = 0
\; \forall U \in \mathrm{SU}(2),\;
\text{and}\;
[M,\Pi(\sigma)] = 0
\; \forall \sigma \in \mathfrak{S}_n,
\label{eq:joint_equivariance_prop}
\end{equation}
then
\begin{equation}
M
=
\bigoplus_{k=0}^{\lfloor n/2\rfloor}
c_k\,
I_{J^{(n/2-k)}}
\otimes
I_{\Sym^{(n-k,k)}}
\qquad
\text{for some } c_k\in\mathbb{C}.
\label{eq:joint_equivariance_scalar_blocks}
\end{equation}
In particular, if $M$ is unitary, then $c_k=e^{i\theta_k}$ for some $\theta_k\in\mathbb{R}$.

\begin{proof}
For each \(k\), consider the summand
\begin{align}
J^{(n/2-k)}\otimes \Sym^{(n-k,k)}
\end{align}
in the decomposition of \(W\). 
By Schur's lemma (Appendix~\ref{app:schurs_lemma}), the condition
\(
[M,\rho(U)] = 0
\)
for all \(U\in \mathrm{SU}(2)\) implies that
\begin{equation}
M|_{(n-k,k)}
=
I_{J^{(n/2-k)}}\otimes B_k
\end{equation}
for some \(B_k\in \mathrm{End}(\Sym^{(n-k,k)})\).
Likewise, the condition
\(
[M,\Pi(\sigma)] = 0
\)
for all \(\sigma\in \mathfrak{S}_n\) implies that
\begin{equation}
M|_{(n-k,k)}
=
C_k\otimes I_{\Sym^{(n-k,k)}}
\end{equation}
for some \(C_k\in \mathrm{End}(J^{(n/2-k)})\).
Therefore,
\begin{equation}
M|_{(n-k,k)}
=
c_k\,
I_{J^{(n/2-k)}}
\otimes
I_{\Sym^{(n-k,k)}}
\end{equation}
for some $c_k\in\mathbb{C}$. Moreover, by Schur's lemma
(Appendix~\ref{app:schurs_lemma}), \(M\) has no off-diagonal
components between distinct \(k\)-sectors. Hence
\begin{equation}
M
=
\bigoplus_{k=0}^{\lfloor n/2\rfloor}
c_k\,
I_{J^{(n/2-k)}}
\otimes
I_{\Sym^{(n-k,k)}}.
\end{equation}
If $M$ is unitary, then each block $M|_{(n-k,k)}$ is unitary. Since
\begin{equation}
M|_{(n-k,k)} = c_k I,
\end{equation}
this implies $|c_k|=1$, so $c_k=e^{i\theta_k}$ for some $\theta_k\in\mathbb{R}$.
\end{proof}

\subsection{Proof of observation~\ref{obv:no_joint_fixed_vectors}}
\label{app:proof_no_joint_fixed_vectors}
Define
\begin{equation}
W^{\mathrm{SU}(2)}
:=
\{v\in W : \rho(U)v=v \text{ for all } U\in \mathrm{SU}(2)\}
\end{equation}
and
\begin{equation}
W^{\mathfrak{S}_n}
:=
\{v\in W : \Pi(\sigma)v=v \text{ for all } \sigma\in \mathfrak{S}_n\}.
\end{equation}
Then
\begin{equation}
W^{\mathrm{SU}(2)} \cap W^{\mathfrak{S}_n} = \{0\}.
\end{equation}

\begin{proof}
Using the above decomposition of \(W\), a vector fixed by \(\mathrm{SU}(2)\) must lie in a summand for which \(J^{(n/2-k)}\) is the trivial \(\mathrm{SU}(2)\)-module. By Appendix~\ref{app:su2_group}, the trivial $\mathrm{SU}(2)$-module is the spin-0 sector
$J^{(0)}$, which corresponds under Schur--Weyl duality to the
partition $(n/2,n/2)$. This occurs only for $k=n/2$, and only
when $n$ is even.

On the other hand, a vector fixed by \(\mathfrak{S}_n\) must lie in a summand for which \(\Sym^{(n-k,k)}\) is the trivial \(\mathfrak{S}_n\)-module. By Appendix~\ref{app:symmetric_group}, the trivial \(\mathfrak{S}_n\)-module is \(\Sym^{(n)}\), this occurs only when \((n-k,k)=(n)\), namely when \(k=0\). Since \(\frac n2 \neq 0\), there is no summand carrying both the trivial \(\mathrm{SU}(2)\)-module and the trivial \(\mathfrak{S}_n\)-module. Hence \begin{equation} W^{\mathrm{SU}(2)} \cap W^{\mathfrak{S}_n}=\{0\}. \end{equation} 
\end{proof}

\subsection{Proof of proposition~\ref{pro:dim_group}} \label{app:proof_dim_group}
The dimension of $\mathbb{C}[\mathfrak{S}_{2N}]^{\mathfrak{S}_{\mathrm{pair}}}$ is
\begin{equation}
\label{eq:pair-dim-formula}
\sum_{\lambda \vdash N}
\prod_{r \ge 1}
r^{\,m_r(\lambda)}
\frac{(2m_r(\lambda))!}{m_r(\lambda)!}.
\end{equation}

\begin{proof}
Since $\mathbb{C}[\mathfrak{S}_{2N}]$ has basis $\mathfrak{S}_{2N}$, conjugation by
$\mathfrak{S}_{\mathrm{pair}}$ permutes this basis. Hence
\begin{equation}
\dim\!\bigl(\mathbb{C}[\mathfrak{S}_{2N}]^{\mathfrak{S}_{\mathrm{pair}}}\bigr)
\end{equation}
is exactly the number of $\mathfrak{S}_{\mathrm{pair}}$-orbits in $\mathfrak{S}_{2N}$
under conjugation. By Burnside's lemma,
\begin{equation}
\dim\!\bigl(\mathbb{C}[\mathfrak{S}_{2N}]^{\mathfrak{S}_{\mathrm{pair}}}\bigr)
=
\frac{1}{|\mathfrak{S}_{\mathrm{pair}}|}
\sum_{\sigma \in \mathfrak{S}_{\mathrm{pair}}}
\bigl|C_{\mathfrak{S}_{2N}}(\sigma)\bigr|,
\end{equation}
where
\begin{align}
C_{\mathfrak{S}_{2N}}(\sigma) = \{\pi \in \mathfrak{S}_{2N} : \sigma\pi\sigma^{-1}=\pi\},
\end{align}
is the centralizer of $\sigma$.
Using the identification $\mathfrak{S}_{\mathrm{pair}} \cong \mathfrak{S}_N$, this becomes
\begin{equation}
\dim\!\bigl(\mathbb{C}[\mathfrak{S}_{2N}]^{\mathfrak{S}_{\mathrm{pair}}}\bigr)
=
\frac{1}{N!}
\sum_{h \in \mathfrak{S}_N}
\bigl|C_{\mathfrak{S}_{2N}}(\widetilde{h})\bigr|.
\end{equation}
For $h \in \mathfrak{S}_N$ of cycle type $\lambda \vdash N$, the corresponding permutation $\widetilde{h}$ has centralizer size
\begin{equation}
\bigl|C_{\mathfrak{S}_{2N}}(\widetilde{h})\bigr|
=
\prod_{r \ge 1}
r^{\,2m_r(\lambda)} (2m_r(\lambda))!,
\end{equation}
since a permutation of cycle type $1^{a_1}2^{a_2}3^{a_3}\cdots$ has centralizer size
\begin{equation}
\prod_{r \ge 1} r^{a_r} a_r!.
\end{equation}
The number of elements of $\mathfrak{S}_N$ of cycle type $\lambda$ is
\begin{equation}
\frac{N!}{z_\lambda},
\qquad
z_\lambda := \prod_{r \ge 1} r^{\,m_r(\lambda)} m_r(\lambda)!.
\end{equation}
Grouping the Burnside sum by cycle type, we obtain
\begin{equation}
\dim\!\bigl(\mathbb{C}[\mathfrak{S}_{2N}]^{\mathfrak{S}_{\mathrm{pair}}}\bigr)
=
\sum_{\lambda \vdash N}
\frac{1}{z_\lambda}
\prod_{r \ge 1}
r^{\,2m_r(\lambda)} (2m_r(\lambda))!.
\end{equation}
Substituting the expression for $z_\lambda$ and simplifying yields
\begin{equation}
\dim\!\bigl(\mathbb{C}[\mathfrak{S}_{2N}]^{\mathfrak{S}_{\mathrm{pair}}}\bigr)
=
\sum_{\lambda \vdash N}
\prod_{r \ge 1}
r^{\,m_r(\lambda)}
\frac{(2m_r(\lambda))!}{m_r(\lambda)!},
\end{equation}
as claimed.
\end{proof}

\subsection{Proof of proposition~\ref{thm:dim_opr}}
\label{app:proof_dim_opr}
The dimension of $\mathcal{A}_{2N}^{\mathfrak{S}_{\mathrm{pair}}}$ is
\begin{equation}
\sum_{\lambda \vdash N}
\frac{1}{z_\lambda}
\sum_{k=0}^{N}
\chi^{(2N-k,k)}(\widetilde{\lambda})^2,
\end{equation}
where \(z_\lambda := \prod_{r\ge 1} r^{m_r(\lambda)} m_r(\lambda)!\), and $\chi^{(2N-k,k)}$ is the irreducible character of $\mathfrak S_{2N}$.

\begin{proof}
By Proposition~\ref{pro:twir_equal},
\begin{equation}
\mathcal{T}_{\mathfrak{S}_{\mathrm{pair}}}[\mathcal{A}_{2N}]
=
\mathcal{A}_{2N}^{\mathfrak{S}_{\mathrm{pair}}}.
\end{equation}
Since $\mathcal{T}_{\mathfrak{S}_{\mathrm{pair}}}$ is the averaging operator over a finite group, it is an idempotent projection onto $\mathcal{A}_{2N}^{\mathfrak{S}_{\mathrm{pair}}}$. Therefore,
\begin{equation}
\dim\!\bigl(\mathcal{A}_{2N}^{\mathfrak{S}_{\mathrm{pair}}}\bigr)
=
\mathrm{Tr}\!\left(
\mathcal{T}_{\mathfrak{S}_{\mathrm{pair}}}\big|_{\mathcal{A}_{2N}}
\right),
\end{equation}
where the trace is taken in the operator space.
In other words, by identifying an operator \begin{align}
    A=\sum_{ij} A_{ij} \ket{i}\bra{j} \longrightarrow  \sum_{i,j} A_{ij}|i,j \rrangle
\end{align}
the twirling map can be written as
\begin{align}
    \mathcal{T}_{\Sym_{\rm pair}} = \frac{1}{|\Sym_{\rm pair}|} \sum_{\sigma \in \Sym_{\rm pair}} \sum_{ij} | \sigma(i), \sigma(j) \rrangle  \llangle i,j|
\end{align}
and
\begin{align}
    \Tr\Bigl[\mathcal{T}_{\Sym_{\rm pair}} \big|_{\mathcal{A}_{2N}} \Bigr] = \sum_{|v\rangle \!\rangle} \llangle v | \mathcal{T}_{\Sym_{\rm pair}} |v \rrangle
\end{align}
where the sum is over all basis elements $|v\rrangle$ of $\mathcal{A}_{2N}$.

Using the identification $\mathfrak{S}_{\mathrm{pair}} \cong \mathfrak{S}_N$, we obtain
\begin{equation}
\dim\!\bigl(\mathcal{A}_{2N}^{\mathfrak{S}_{\mathrm{pair}}}\bigr)
=
\frac{1}{N!}
\sum_{h \in \mathfrak{S}_N}
\mathrm{Tr}\!\left(
\left(
M \mapsto \Pi(\widetilde{h}) M \Pi(\widetilde{h})^\dagger
\right)\big|_{\mathcal{A}_{2N}}
\right).
\end{equation}
By Schur--Weyl duality,
\begin{align}
W
\cong
\bigoplus_{k=0}^{N}
J^{(N-k)}\otimes \Sym^{(2N-k,k)}.
\end{align}
and
\begin{equation}
\mathcal{A}_{2N}
=
\Pi\bigl(\mathbb{C}[\mathfrak{S}_{2N}]\bigr)
\cong
\bigoplus_{k=0}^{N}
\mathrm{End}\!\bigl(\Sym^{(2N-k,k)}\bigr).
\end{equation}
On the $k$-th summand, the map
\begin{equation}
M \mapsto \Pi(\widetilde{h}) M \Pi(\widetilde{h})^\dagger
\end{equation}
acts as
\begin{equation}
T \longmapsto \rho_k(\widetilde{h})\, T\, \rho_k(\widetilde{h})^{-1},
\qquad
T \in \mathrm{End}\!\bigl(\Sym^{(2N-k,k)}\bigr),
\end{equation}
where $\rho_k$ is the representation of $\mathfrak{S}_{\mathrm{pair}}$ on $\Sym^{(2N-k,k)}$ obtained by restriction.

Using the natural identification $\mathrm{End}(V)\cong V\otimes V^*$,
it suffices to consider a decomposable tensor
\begin{align}
T = v\otimes \alpha,\qquad v\in V,\ \alpha\in V^* .
\end{align}
For $x\in V$, we have
\begin{align}
\bigl(\rho_k(\widetilde h)T\rho_k(\widetilde h)^{-1}\bigr)(x)
&=
\rho_k(\widetilde h)
\Bigl(T\bigl(\rho_k(\widetilde h)^{-1}x\bigr)\Bigr) \\
&=
\rho_k(\widetilde h)
\Bigl(\alpha\bigl(\rho_k(\widetilde h)^{-1}x\bigr)v\Bigr) \\
&=
\bigl(\rho_k(\widetilde h^{-1})^*\alpha\bigr)(x)\,
\rho_k(\widetilde h)v .
\end{align}
Thus
\begin{align}
\rho_k(\widetilde h)T\rho_k(\widetilde h)^{-1}
=
\rho_k(\widetilde h)v
\otimes
\rho_k(\widetilde h^{-1})^*\alpha .
\end{align}
Since decomposable tensors span $V\otimes V^*$, the conjugation action corresponds to
\begin{equation}
\rho_k(\widetilde h)\otimes \rho_k(\widetilde h^{-1})^* .
\end{equation}

Here, the superscript $*$ denotes taking the dual map; that is, $\rho_k(\widetilde{h}^{-1})^*$ is the dual map of $\rho_k(\widetilde{h}^{-1})$ acting on $V^*$.
Therefore,
\begin{equation}
\Tr\!\left(
T \mapsto \rho_k(\widetilde{h})\, T\, \rho_k(\widetilde{h})^{-1}
\right)
=
\chi^{(2N-k,k)}(\widetilde{h})\,
\chi^{(2N-k,k)}(\widetilde{h}^{-1})
=
\chi^{(2N-k,k)}(\widetilde{h})\,
\chi^{(2N-k,k)}(\widetilde{h}),
\end{equation}
since the trace of a dual map is equal to the trace of the original map, and $\widetilde{h}$ and $\widetilde{h}^{-1}$ have the same cycle type, so their character values are equal.
Summing over $k$ gives
\begin{equation}
\mathrm{Tr}\!\left(
\left(
M \mapsto \Pi(\widetilde{h}) M \Pi(\widetilde{h})^\dagger
\right)\big|_{\mathcal{A}_{2N}}
\right)
=
\sum_{k=0}^{N}
\chi^{(2N-k,k)}(\widetilde{h})^2.
\end{equation}
Hence
\begin{equation}
\dim\!\bigl(\mathcal{A}_{2N}^{\mathfrak{S}_{\mathrm{pair}}}\bigr)
=
\frac{1}{N!}
\sum_{h \in \mathfrak{S}_N}
\sum_{k=0}^{N}
\chi^{(2N-k,k)}(\widetilde{h})^2.
\end{equation}
Finally, group the sum by cycle type exactly as in the proof of Proposition~\ref{pro:dim_group}.
If $h \in \mathfrak{S}_N$ has cycle type $\lambda \vdash N$, then $\widetilde{h} \in \mathfrak{S}_{\mathrm{pair}} \subseteq \mathfrak{S}_{2N}$
has cycle type $\widetilde{\lambda}$, and the number of such $h$ is $N!/z_\lambda$. Therefore,
\begin{equation}
\dim\!\bigl(\mathcal{A}_{2N}^{\mathfrak{S}_{\mathrm{pair}}}\bigr)
=
\sum_{\lambda \vdash N}
\frac{1}{z_\lambda}
\sum_{k=0}^{N}
\chi^{(2N-k,k)}(\widetilde{\lambda})^2,
\end{equation}
as claimed.
\end{proof}

\subsection{Extension to arbitrary block size} \label{app:extension_to_arbitrary_block}

The construction for pair blocks can be extended to blocks of arbitrary size $b$. Let
\(
W^{(b)} \coloneqq (\mathbb{C}^{2})^{\otimes bN},
\)
and for each $\ell=0,\dots,N-1$, define the $b$-element block
\begin{align}
B_\ell^{(b)} \coloneqq \{b\ell,\, b\ell+1,\, \dots,\, b\ell+b-1\}.
\end{align}
For each $h \in \mathfrak{S}_N$, define its block lift $\widetilde h^{(b)} \in \mathfrak{S}_{bN}$ by
\begin{align}
\widetilde h^{(b)}(b\ell+j)=bh(\ell)+j,
\qquad
(0\le \ell\le N-1,\; 0\le j\le b-1).
\label{eq:block_lift}
\end{align}
Let
\begin{align}
\mathfrak{S}_{\mathrm{block}}^{(b)}
:=
\bigl\{
\widetilde h^{(b)} : h \in \mathfrak{S}_N
\bigr\}
\le \mathfrak{S}_{bN}.
\label{eq:block_subgroup}
\end{align}

Now let the cycle type of $h \in \mathfrak{S}_N$ be
\begin{align}
\lambda
=
1^{m_1(\lambda)}2^{m_2(\lambda)}3^{m_3(\lambda)}\cdots
\vdash N.
\label{eq:block_lambda}
\end{align}
Each $r$-cycle of $h$ gives rise to $b$ disjoint $r$-cycles on the $bN$ indices, so the cycle type of $\widetilde h^{(b)}$ in $\mathfrak{S}_{bN}$ is
\begin{align}
\widetilde{\lambda}^{(b)}
:=
1^{b m_1(\lambda)}2^{b m_2(\lambda)}3^{b m_3(\lambda)}\cdots
\vdash bN.
\label{eq:block_lifted_cycle}
\end{align}

\begin{theorem}
\label{thm:block_dim}
For the block-permuting subgroup $\mathfrak{S}_{\mathrm{block}}^{(b)}$, the following hold:
\begin{align}
\dim\!\Bigl(
\mathbb{C}[\mathfrak{S}_{bN}]^{\mathfrak{S}_{\mathrm{block}}^{(b)}}
\Bigr)
&=
\sum_{\lambda \vdash N}
\prod_{r\ge 1}
r^{(b-1)m_r(\lambda)}
\frac{(b\,m_r(\lambda))!}{m_r(\lambda)!},
\label{eq:block_group_dim}
\end{align}
and
\begin{align}
\dim\!\Bigl(
\mathcal{A}_{bN}^{\mathfrak{S}_{\mathrm{block}}^{(b)}}
\Bigr)
&=
\sum_{\lambda \vdash N}
\frac{1}{z_\lambda}
\sum_{k=0}^{\lfloor bN/2 \rfloor}
\chi^{(bN-k,k)}\!\bigl(\widetilde{\lambda}^{(b)}\bigr)^2,
\label{eq:block_operator_dim}
\end{align}
where $\chi^{(bN-k,k)}$ is the irreducible character of $\mathfrak{S}_{bN}$.
In particular, when $b=2$, these reduce to
Proposition~\ref{pro:dim_group} and Proposition~\ref{thm:dim_opr}.
\end{theorem}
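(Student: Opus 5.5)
The plan is to repeat the two arguments used for Proposition~\ref{pro:dim_group} and Proposition~\ref{thm:dim_opr}, replacing the pair lift $\widetilde h$ by the block lift $\widetilde h^{(b)}$ of Eq.~\eqref{eq:block_lift}. The map $h\mapsto \widetilde h^{(b)}$ is an injective homomorphism $\mathfrak{S}_N\to\mathfrak{S}_{bN}$, since $\widetilde{h}^{(b)}\widetilde{g}^{(b)}(b\ell+j)=bh(g(\ell))+j$. Hence $\mathfrak{S}^{(b)}_{\mathrm{block}}\cong\mathfrak{S}_N$ and $|\mathfrak{S}^{(b)}_{\mathrm{block}}|=N!$. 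Averages over the subgroup can therefore be rewritten as $\frac{1}{N!}\sum_{h\in\mathfrak{S}_N}$ and then grouped by the cycle type $\lambda$ of $h$, with $N!/z_\lambda$ elements in each class. For the cycle structure of the lift, each $r$-cycle $(\ell_1\,\cdots\,\ell_r)$ of $h$ produces the $b$ disjoint cycles $(b\ell_1+j\,\cdots\,b\ell_r+j)$ for $j=0,\dots,b-1$. This gives the cycle type $\widetilde\lambda^{(b)}$ of Eq.~\eqref{eq:block_lifted_cycle}.

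For Eq.~\eqref{eq:block_group_dim}, note that conjugation by $\mathfrak{S}^{(b)}_{\mathrm{block}}$ permutes the basis $\mathfrak{S}_{bN}$ of $\mathbb{C}[\mathfrak{S}_{bN}]$. The invariant subspace therefore has dimension equal to the number of conjugation orbits, and each orbit sum gives one basis vector. Burnside's lemma turns this count into $\frac{1}{N!}\sum_h |C_{\mathfrak{S}_{bN}}(\widetilde h^{(b)})|$. A permutation with $a_r$ cycles of length $r$ has centralizer of order $\prod_r r^{a_r}a_r!$, and here $a_r=b\,m_r(\lambda)$. So the centralizer has order $\prod_r r^{bm_r}(bm_r)!$. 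Multiplying by $1/z_\lambda=\prod_r r^{-m_r}/m_r!$ gives $\prod_r r^{(b-1)m_r}(bm_r)!/m_r!$, summed over $\lambda\vdash N$.

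For Eq.~\eqref{eq:block_operator_dim}, the argument of Proposition~\ref{pro:twir_equal} applies to any finite subgroup. The twirl $\mathcal{T}_{\mathfrak{S}^{(b)}_{\mathrm{block}}}$ is therefore an idempotent on $\mathcal{A}_{bN}$ with image $\mathcal{A}_{bN}^{\mathfrak{S}^{(b)}_{\mathrm{block}}}$, and the dimension of that image equals the trace of the twirl restricted to $\mathcal{A}_{bN}$. Schur--Weyl duality for $(\mathbb{C}^2)^{\otimes bN}$ gives
\begin{equation}
\mathcal{A}_{bN}\cong\bigoplus_{k=0}^{\lfloor bN/2\rfloor}\mathrm{End}\bigl(\Sym^{(bN-k,k)}\bigr).
\end{equation}
This is the step I expect to need the most care: the identification requires that $\Pi$ restricted to each isotypic block is faithful onto $\mathrm{End}(\Sym^{(bN-k,k)})$. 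That follows because every $\mathcal{V}^{(bN-k,k)}$ with $\ell\le2$ is nonzero when $d=2$, so each of these blocks genuinely appears, and the double-centralizer statement gives the rest.

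On each block, conjugation by $\Pi(\widetilde h^{(b)})$ acts as $\rho_k\otimes\rho_k^{*}$, exactly as in the proof of Proposition~\ref{thm:dim_opr}. Its trace is $\chi^{(bN-k,k)}(\widetilde h^{(b)})\,\chi^{(bN-k,k)}\bigl((\widetilde h^{(b)})^{-1}\bigr)=\chi^{(bN-k,k)}(\widetilde\lambda^{(b)})^2$, using that characters of $\mathfrak{S}_{bN}$ are real and class functions. Summing over $k$, averaging over $h$, and grouping by $\lambda$ gives the formula. Setting $b=2$ makes both formulas reduce visibly to the earlier propositions, since $r^{(2-1)m_r}=r^{m_r}$ and $\lfloor 2N/2\rfloor=N$.
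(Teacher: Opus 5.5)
Your proposal is correct and follows the paper's proof essentially step for step. You use Burnside's lemma with the centralizer order $\prod_r r^{b m_r}(b m_r)!$ of the lifted cycle type $\widetilde\lambda^{(b)}$ for the group-algebra count, and you compute the trace of the idempotent twirl blockwise under Schur--Weyl duality as $\rho_k\otimes\rho_k^{*}$ for the operator count. Your extra remarks only make explicit points the paper uses implicitly: that the block lift is an injective homomorphism, and that $\Pi(\mathbb{C}[\mathfrak{S}_{bN}])$ is exactly $\bigoplus_k \mathrm{End}(\Sym^{(bN-k,k)})$.
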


\begin{proof}

The proof is identical to those of Proposition~\ref{pro:dim_group} and
Proposition~\ref{thm:dim_opr}, with $\mathfrak{S}_{\mathrm{pair}}$ replaced by
$\mathfrak{S}_{\mathrm{block}}^{(b)}$.
If $h \in \mathfrak{S}_N$ has cycle type $\lambda$, then $\widetilde h^{(b)}$ has cycle type
$\widetilde{\lambda}^{(b)}$, and hence
\begin{align}
\bigl|C_{\mathfrak{S}_{bN}}(\widetilde h^{(b)})\bigr|
=
\prod_{r\ge 1} r^{b m_r(\lambda)} (b\,m_r(\lambda))!.
\end{align}
Applying Burnside's lemma and grouping by cycle type gives \eqref{eq:block_group_dim}.
Likewise, twirling over $\mathfrak{S}_{\mathrm{block}}^{(b)}$ is the projection onto
$A_{bN}^{\mathfrak{S}_{\mathrm{block}}^{(b)}}$, and the same Schur--Weyl/character argument as before gives \eqref{eq:block_operator_dim}. For readers who have followed the previous two proofs, the argument above already establishes the claim.
For completeness, we now give a full proof.

We prove the two identities in turn. First, since $\mathbb{C}[\mathfrak{S}_{bN}]$ has basis $\mathfrak{S}_{bN}$, conjugation by
$\mathfrak{S}_{\mathrm{block}}^{(b)}$ permutes this basis. Hence
\begin{equation}
\dim\!\bigl(\mathbb{C}[\mathfrak{S}_{bN}]^{\mathfrak{S}_{\mathrm{block}}^{(b)}}\bigr)
\end{equation}
is exactly the number of $\mathfrak{S}_{\mathrm{block}}^{(b)}$-orbits in $\mathfrak{S}_{bN}$
under conjugation. By Burnside's lemma,
\begin{equation}
\dim\!\bigl(\mathbb{C}[\mathfrak{S}_{bN}]^{\mathfrak{S}_{\mathrm{block}}^{(b)}}\bigr)
=
\frac{1}{|\mathfrak{S}_{\mathrm{block}}^{(b)}|}
\sum_{\sigma \in \mathfrak{S}_{\mathrm{block}}^{(b)}}
\bigl|C_{\mathfrak{S}_{bN}}(\sigma)\bigr|,
\end{equation}
where
\begin{equation}
C_{\mathfrak{S}_{bN}}(\sigma)
=
\{\pi \in \mathfrak{S}_{bN} : \sigma \pi \sigma^{-1} = \pi\}
\end{equation}
is the centralizer of $\sigma$. Using the identification $\mathfrak{S}_{\mathrm{block}}^{(b)} \cong \mathfrak{S}_N$, this becomes
\begin{equation}
\dim\!\bigl(\mathbb{C}[\mathfrak{S}_{bN}]^{\mathfrak{S}_{\mathrm{block}}^{(b)}}\bigr)
=
\frac{1}{N!}
\sum_{h \in \mathfrak{S}_N}
\bigl|C_{\mathfrak{S}_{bN}}(\widetilde{h}^{(b)})\bigr|.
\end{equation}
For $h \in \mathfrak{S}_N$ of cycle type $\lambda \vdash N$, the corresponding block lift
$\widetilde{h}^{(b)}$ has cycle type
\begin{equation}
\widetilde{\lambda}^{(b)}
=
1^{b m_1(\lambda)}
2^{b m_2(\lambda)}
3^{b m_3(\lambda)}
\cdots \vdash bN.
\end{equation}
Therefore, since a permutation of cycle type $1^{a_1}2^{a_2}3^{a_3}\cdots$ has centralizer size
\begin{equation}
\prod_{r \ge 1} r^{a_r} a_r!,
\end{equation}
we obtain
\begin{equation}
\bigl|C_{\mathfrak{S}_{bN}}(\widetilde{h}^{(b)})\bigr|
=
\prod_{r \ge 1}
r^{\,b m_r(\lambda)} (b\,m_r(\lambda))!.
\end{equation} 
The number of elements of $\mathfrak{S}_N$ of cycle type $\lambda$ is
\begin{equation}
\frac{N!}{z_\lambda},
\qquad
z_\lambda := \prod_{r \ge 1} r^{\,m_r(\lambda)} m_r(\lambda)!.
\end{equation}
Grouping the Burnside sum by cycle type, we obtain
\begin{equation}
\dim\!\bigl(\mathbb{C}[\mathfrak{S}_{bN}]^{\mathfrak{S}_{\mathrm{block}}^{(b)}}\bigr)
=
\sum_{\lambda \vdash N}
\frac{1}{z_\lambda}
\prod_{r \ge 1}
r^{\,b m_r(\lambda)} (b\,m_r(\lambda))!.
\end{equation}
Substituting the expression for $z_\lambda$ and simplifying yields
\begin{equation}
\dim\!\bigl(\mathbb{C}[\mathfrak{S}_{bN}]^{\mathfrak{S}_{\mathrm{block}}^{(b)}}\bigr)
=
\sum_{\lambda \vdash N}
\prod_{r \ge 1}
r^{(b-1)m_r(\lambda)}
\frac{(b\,m_r(\lambda))!}{m_r(\lambda)!},
\end{equation}
which proves the first identity.

For the second identity, let
\begin{equation}
W^{(b)} := (\mathbb{C}^2)^{\otimes bN}
\qquad \text{and} \qquad
\mathcal{A}_{bN} := \Pi\bigl(\mathbb{C}[\mathfrak{S}_{bN}]\bigr) \subseteq \mathrm{End}(W^{(b)}).
\end{equation}
By Proposition~\ref{pro:twir_equal},
\begin{equation}
\mathcal{T}_{\mathfrak{S}_{\mathrm{block}}^{(b)}}[\mathcal{A}_{bN}]
=
\mathcal{A}_{bN}^{\mathfrak{S}_{\mathrm{block}}^{(b)}}.
\end{equation}
Since $\mathcal{T}_{\mathfrak{S}_{\mathrm{block}}^{(b)}}$ is the averaging operator over a finite group, it is an idempotent projection onto $\mathcal{A}_{bN}^{\mathfrak{S}_{\mathrm{block}}^{(b)}}$. Therefore,
\begin{equation}
\dim\!\bigl(\mathcal{A}_{bN}^{\mathfrak{S}_{\mathrm{block}}^{(b)}}\bigr)
=
\Tr\!\left(
\mathcal{T}_{\mathfrak{S}_{\mathrm{block}}^{(b)}}\big|_{\mathcal{A}_{bN}}
\right).
\end{equation}
Using the identification $\mathfrak{S}_{\mathrm{block}}^{(b)} \cong \mathfrak{S}_N$, we obtain
\begin{equation}
\dim\!\bigl(\mathcal{A}_{bN}^{\mathfrak{S}_{\mathrm{block}}^{(b)}}\bigr)
=
\frac{1}{N!}
\sum_{h \in \mathfrak{S}_N}
\Tr\!\left(
\left(
M \mapsto \Pi(\widetilde{h}^{(b)}) M \Pi(\widetilde{h}^{(b)})^\dagger
\right)\big|_{\mathcal{A}_{bN}}
\right).
\end{equation}
By Schur--Weyl duality,
\begin{equation}
W^{(b)}
\cong
\bigoplus_{k=0}^{\lfloor bN/2 \rfloor}
J^{(bN/2-k)}(\mathbb{C}^2)\otimes \Sym^{(bN-k,k)},
\end{equation}
and hence
\begin{equation}
\mathcal{A}_{bN}
=
\Pi\bigl(\mathbb{C}[\mathfrak{S}_{bN}]\bigr)
\cong
\bigoplus_{k=0}^{\lfloor bN/2 \rfloor}
\mathrm{End}\!\bigl(\Sym^{(bN-k,k)}\bigr).
\end{equation}
On the $k$-th summand, the map
\begin{equation}
M \mapsto \Pi(\widetilde{h}^{(b)}) M \Pi(\widetilde{h}^{(b)})^\dagger
\end{equation}
acts as
\begin{equation}
T \longmapsto \rho_k(\widetilde{h}^{(b)})\, T\, \rho_k(\widetilde{h}^{(b)})^{-1},
\qquad
T \in \mathrm{End}\!\bigl(\Sym^{(bN-k,k)}\bigr),
\end{equation}
where $\rho_k$ is the representation of $\mathfrak{S}_{\mathrm{block}}^{(b)}$ on $\Sym^{(bN-k,k)}$ obtained by restriction.

Using the natural identification $\mathrm{End}(V)\cong V\otimes V^*$,
it suffices to consider a decomposable tensor
\begin{align}
T = v\otimes \alpha,\qquad v\in V,\ \alpha\in V^* .
\end{align}
For $x\in V$, we have
\begin{align}
\bigl(\rho_k(\widetilde h^{(b)})T\rho_k(\widetilde h^{(b)})^{-1}\bigr)(x)
&=
\rho_k(\widetilde h^{(b)})
\Bigl(T\bigl(\rho_k(\widetilde h^{(b)})^{-1}x\bigr)\Bigr) \\
&=
\rho_k(\widetilde h^{(b)})
\Bigl(\alpha\bigl(\rho_k(\widetilde h^{(b)})^{-1}x\bigr)v\Bigr) \\
&=
\bigl(\rho_k((\widetilde h^{(b)})^{-1})^*\alpha\bigr)(x)\,
\rho_k(\widetilde h^{(b)})v .
\end{align}
Thus
\begin{align}
\rho_k(\widetilde h^{(b)})T\rho_k(\widetilde h^{(b)})^{-1}
=
\rho_k(\widetilde h^{(b)})v
\otimes
\rho_k((\widetilde h^{(b)})^{-1})^*\alpha .
\end{align}
Since decomposable tensors span $V\otimes V^*$, the conjugation action corresponds to
\begin{equation}
\rho_k(\widetilde h^{(b)})\otimes \rho_k((\widetilde h^{(b)})^{-1})^* .
\end{equation}

Here, the superscript $*$ denotes taking the dual map; that is, $\rho_k\!\bigl((\widetilde{h}^{(b)})^{-1}\bigr)^*$ is the dual map of $\rho_k\!\bigl((\widetilde{h}^{(b)})^{-1}\bigr)$ acting on $V^*$. Therefore,
\begin{equation}
\Tr\!\left(
T \mapsto \rho_k(\widetilde{h}^{(b)})\, T\, \rho_k(\widetilde{h}^{(b)})^{-1}
\right)
=
\chi^{(bN-k,k)}\!\bigl(\widetilde{h}^{(b)}\bigr)\,
\chi^{(bN-k,k)}\!\bigl(\widetilde{h}^{(b)}\bigr),
\end{equation} 
since the trace of a dual map is equal to the trace of the original map, and $\widetilde{h}^{(b)}$ and $(\widetilde{h}^{(b)})^{-1}$ have the same cycle type, so their character values are equal.

Summing over $k$ gives
\begin{equation}
\Tr\!\left(
\left(
M \mapsto \Pi(\widetilde{h}^{(b)}) M \Pi(\widetilde{h}^{(b)})^\dagger
\right)\big|_{\mathcal{A}_{bN}}
\right)
=
\sum_{k=0}^{\lfloor bN/2 \rfloor}
\chi^{(bN-k,k)}\!\bigl(\widetilde{h}^{(b)}\bigr)^2.
\end{equation}
Hence
\begin{equation}
\dim\!\bigl(\mathcal{A}_{bN}^{\mathfrak{S}_{\mathrm{block}}^{(b)}}\bigr)
=
\frac{1}{N!}
\sum_{h \in \mathfrak{S}_N}
\sum_{k=0}^{\lfloor bN/2 \rfloor}
\chi^{(bN-k,k)}\!\bigl(\widetilde{h}^{(b)}\bigr)^2.
\end{equation}
Finally, group the sum by cycle type exactly as above. If $h \in \mathfrak{S}_N$ has cycle type $\lambda \vdash N$, then $\widetilde{h}^{(b)} \in \mathfrak{S}_{\mathrm{block}}^{(b)} \subseteq \mathfrak{S}_{bN}$ has cycle type $\widetilde{\lambda}^{(b)}$, and the number of such $h$ is $N!/z_\lambda$. Therefore,
\begin{equation}
\dim\!\bigl(\mathcal{A}_{bN}^{\mathfrak{S}_{\mathrm{block}}^{(b)}}\bigr)
=
\sum_{\lambda \vdash N}
\frac{1}{z_\lambda}
\sum_{k=0}^{\lfloor bN/2 \rfloor}
\chi^{(bN-k,k)}\!\bigl(\widetilde{\lambda}^{(b)}\bigr)^2,
\end{equation}
as claimed.
\end{proof}

Thus, Theorem~\ref{thm:block_dim} extends the pair-block dimension formulas to arbitrary block size $b$. 

The values in Table~\ref{tab:block-operator-dim} show that, whenever \(N>1\), the dimension of the admissible invariant operator space grows rapidly with $b$. Notably, even the minimal nontrivial choice $b=2$ yields a substantially larger invariant operator space than the standard case $b=1$, and the gap widens as $N$ increases. Consequently, this construction circumvents the representation-theoretic obstruction at the operator level and thereby opens a concrete route to dual-equivariant QML models.

\begin{table}[htbp]
\centering
\small
\setlength{\tabcolsep}{6pt}
\renewcommand{\arraystretch}{1}
\begin{tabular}{c|ccccc}
\hline
& \multicolumn{5}{c}{Block size $b$} \\
\cline{2-6}
$N$ & $1$ & $2$ & $3$ & $4$ & $5$ \\
\hline
$2$ & $2$ & $10$ & $76$ & $750$ & $8524$ \\
$3$ & $2$ & $26$ & $834$ & $3.49\times 10^{4}$ & $1.62\times 10^{6}$ \\
$4$ & $3$ & $84$ & $9226$ & $1.49\times 10^{6}$ & $2.74\times 10^{8}$ \\
$5$ & $3$ & $206$ & $8.85\times 10^{4}$ & $5.60\times 10^{7}$ & $4.08\times 10^{10}$ \\
\hline
\end{tabular}
\caption{\textbf{Dual-Equivariant Operator Space Dimension.} For small $N$ and $b$, we list the dimension of dual-equivariant operators, \(\dim \mathcal{A}_{bN}^{\mathfrak{S}_{\mathrm{block}}^{(b)}}\).}
\label{tab:block-operator-dim}
\end{table}

\section{Quantum Network}
\label{app:quantum-network}

\subsection{Singlet state}

\subsubsection{Singlet state definition}
The two-qubit singlet
\begin{equation}
\ket{\psi^-}\;:=\;\frac{\ket{01}-\ket{10}}{\sqrt{2}}
\end{equation}
is the unique spin-$0$ ($j=0$) irreducible component inside
\begin{equation}
\tfrac{1}{2}\!\otimes\!\tfrac{1}{2}=0\oplus 1.
\end{equation}
Equivalently, it spans the one-dimensional $j=0$ irrep.

\subsubsection{$\mathrm{SU}(2)$-invariance.}
Let $\rho(U)=U^{\otimes 2}$ denote the diagonal action of $U\in\mathrm{SU}(2)$ on two qubits.
Then
\begin{equation}
\rho(U)\ket{\psi^-}=\ket{\psi^-}\qquad(\forall\,U\in\mathrm{SU}(2)).
\end{equation}

\emph{Proof.}
Let $T=\mathrm{SWAP}$ on $\mathbb{C}^2\otimes\mathbb{C}^2$. Since $T$ is an involution ($T^2=I$) with eigenvalues $\pm1$,
we have the eigenspace decomposition
\begin{equation}
\mathbb{C}^2\otimes\mathbb{C}^2=\mathrm{Sym}^2(\mathbb{C}^2)\oplus \wedge^2(\mathbb{C}^2),
\qquad \dim\mathrm{Sym}^2(\mathbb{C}^2)=3,\ \dim\wedge^2(\mathbb{C}^2)=1,
\end{equation}
where $E_{+1}(T)=\mathrm{Sym}^2(\mathbb{C}^2)$ and $E_{-1}(T)=\wedge^2(\mathbb{C}^2)$

The vector $\ket{\psi^-}$ is antisymmetric under the qubit swap (i.e., $\mathrm{SWAP}\,\ket{\psi^-}=-\ket{\psi^-}$),
so $\ket{\psi^-}\in\wedge^2(\mathbb{C}^2)$; since $\dim\wedge^2(\mathbb{C}^2)=1$, it follows that
\begin{equation}
\wedge^2(\mathbb{C}^2)=\mathrm{span}\{\ket{\psi^-}\}.
\end{equation}

Now consider $U\in \mathrm{SU}(2)$ and the diagonal action $\rho(U)=U^{\otimes 2}$.
Since $\mathrm{SWAP}$ permutes tensor factors, it commutes with $U\otimes U$, and hence $\wedge^2(\mathbb{C}^2)$ is $\rho(U)$-invariant, up to a phase factor.
We can further show that the phase factor must be $1$ from the following argument.
Since $\dim\wedge^2(\mathbb{C}^2)=1$, there exists a phase $\chi(U)\in \mathrm{U}(1)$ such that
$(U\otimes U)\ket{\psi^-}=\chi(U)\ket{\psi^-}$.
But the induced action on $\wedge^2(\mathbb{C}^2)$ is $\wedge^2 U=\det(U)\,I$, so $\chi(U)=\det(U)=1$ for $U\in \mathrm{SU}(2)$; hence $(U\otimes U)\ket{\psi^-}=\ket{\psi^-}$.

\subsection{Geometric encoding} \label{app:geometric_encoding}

\subsubsection{From $\mathrm{SU}(2)$ to 3D rotations.}
As shown in Appendix~\ref{app:su2_group}, unitary conjugation by $U_{R}\!\in\!\mathrm{SU}(2)$ induces a 3D rotation
on Pauli vectors:
\begin{equation}\label{eq:U-Pauli-rotation}
U_{R}(\vec v\!\cdot\!\vec\sigma)U_{R}^\dagger
=\sum_{i,j}R_{ij}(U_{R})v_j\,\sigma_i
=(R(U_{R})\vec v)\!\cdot\!\vec\sigma,
\end{equation}
where $\Phi:U\mapsto R(U)$ is a group homomorphism $\mathrm{SU}(2)\!\to\!\mathrm{SO}(3)$.
This lets us convert rotations of 3D points into unitary transformations of qubits.
\subsubsection{Geometric encoding's rotation equivariance.}
Let $\Phi:\mathrm{SU}(2)\!\to\!\mathrm{SO}(3)$ be the covering homomorphism. 
For any $R\!\in\!\mathrm{SO}(3)$ pick $U_{R}\!\in\!\mathrm{SU}(2)$ with $\Phi(U_{R})=R$. 
Since $\|{\mathbf p}\|$ is rotation-invariant and $R\mathbf p=\mathbf p'$, we have
\begin{align}
E(\mathbf p')
&=\exp\!\big(i\,\frac{\|\mathbf{p}\|}{\Theta}\,(R\hat{\mathbf p})\!\cdot\!\vec\sigma\big) \\
&=\exp\!\big(i\,\frac{\|\mathbf{p}\|}{\Theta}\,U_{R}(\hat{\mathbf p}\!\cdot\!\vec\sigma)U_{R}^\dagger\big) 
\quad \text{(by \eqref{eq:U-Pauli-rotation})}
\\
&=U_{R}\,\exp\!\big(i\,\frac{\|\mathbf{p}\|}{\Theta}\,\hat{\mathbf p}\!\cdot\!\vec\sigma\big)\,U_{R}^\dagger \\
&=U_{R}\,E(\mathbf p)\,U_{R}^\dagger.
\end{align} 

Hence the encoder is exactly rotation equivariant (via $\mathrm{SU}(2)$ adjoint action):
\begin{equation}\label{eq:equivariance-Ep}
E(R\mathbf{p}) \;=\; U_{R}\,E(\mathbf{p})\,U_{R}^\dagger,\qquad \Phi(U_{R})=R.
\end{equation}

\subsubsection{Quantum gate expression.}
For practical quantum circuit implementation, encodings must be decomposed into elementary single qubit gates supported by current hardware. We encode each 3D point $\mathbf p_i\in\mathbb R^3$ by the unitary
\begin{equation}\label{eq:exp-encode}
E(\mathbf p_i)
=\exp\!\left(i\,\frac{\|\mathbf p_i\|}{\Theta}\,\hat{\mathbf p}_i\!\cdot\!\vec{\sigma}\right)
=R_Z(\alpha_i)\,R_Y(\beta_i)\,R_Z(\gamma_i),
\end{equation}
i.e., we exactly realize the exponential as a Z–Y–Z sequence on each encoding qubit $i$.
Here $\Theta>0$ is a fixed geometric scale, $\hat{\mathbf p}_i=\mathbf p_i/\|\mathbf p_i\|$ and $\vec{\sigma}=(\sigma_X,\sigma_Y,\sigma_Z)$.

Given the preprocessed point cloud, we first normalize by $\Theta$:
\begin{equation}
\tilde{\mathbf p}_i=\mathbf p_i/\Theta,\qquad
\phi_i=\|\tilde{\mathbf p}_i\|,\qquad
\mathbf n_i=\tilde{\mathbf p}_i/\|\tilde{\mathbf p}_i\|=(n_x,n_y,n_z).
\end{equation}
The Z–Y–Z angles are
\begin{align}
\alpha_i \;&=\; \arctan\!\big(-\,n_z \tan\phi_i\big)\;+\;\arctan\!\!\left(\frac{-\,n_x}{\,n_y}\right),\\[2pt]
\gamma_i \;&=\; \arctan\!\big(-\,n_z \tan\phi_i\big)\;-\;\arctan\!\!\left(\frac{-\,n_x}{\,n_y}\right),\\[2pt]
\beta_i \;&=\; 2\,\arcsin\!\left(\frac{\sin\phi_i\; n_x}{\sin\!\big((\alpha_i-\gamma_i)/2\big)}\right).
\end{align}

In practice, we use $\operatorname{arctan2}$ for robust angle-branch (quadrant) selection, which remains well-defined even when $n_y=0$.
Axis-aligned degenerate cases (e.g., $n_x=n_y=0$) are handled separately.

\subsubsection{On the scale $\Theta$.}
The angle formulas involve $\tan(\phi_i)$ and an $\arcsin(\cdot)$ in $\beta_i$, both of which are numerically sensitive: $\tan(\cdot)$ becomes ill-conditioned near $(k+\tfrac{1}{2})\pi$, and $\arcsin(\cdot)$ requires inputs in $[-1,1]$ with margins to avoid floating-point clipping. To improve stability, we set the geometric scale $\Theta$ larger than the maximum point cloud radius, so that $\phi_i=\|\mathbf p_i\|/\Theta$ stays well within numerically benign regions and the $\arcsin$ argument remains bounded away from $\pm1$. We then performed hyperparameter tuning over several $\Theta$ values under this constraint and found that a single fixed $\Theta = 1.7$ works well for most instances, which we also used for all reported experiments.

\subsubsection{Pair selective encoding and permutation.}
To obtain a pairwise permutation-equivariant quantum part while preserving the singlet structure, we encode
one qubit per pair $(2j,2j{+}1)$:
\begin{equation}\label{eq:pair-select-enc}
\ket{\Psi_{\mathrm{enc}}}
=\bigotimes_{j=0}^{N-1}\!\big(E(\mathbf p_{j})\otimes I\big)\,\ket{\psi^-}_{(2j,\,2j+1)},
\qquad
\ket{\psi^-}=\tfrac{1}{\sqrt{2}}(\ket{01}-\ket{10}).
\end{equation}
Here $E(\mathbf p)$ is the single–qubit encoding unitary.

We used per-pair encoding to make the effect of permutation consistent for any $\sigma \in \Sym_N$.
If two distinct points are fed to one pair,
\begin{equation}
\ket{\Psi(\mathbf p_a,\mathbf p_b)}
=\big(E(\mathbf p_a)\!\otimes\!E(\mathbf p_b)\big)\ket{\psi^-},
\end{equation}
then swapping the two inputs gives
\begin{equation}
S\,\ket{\Psi(\mathbf p_a,\mathbf p_b)}
=\big(E(\mathbf p_b)\!\otimes\!E(\mathbf p_a)\big)\,S\ket{\psi^-}
= -\,\big(E(\mathbf p_b)\!\otimes\!E(\mathbf p_a)\big)\ket{\psi^-}
= -\,\ket{\Psi(\mathbf p_b,\mathbf p_a)},
\end{equation}
which implies that the within-pair wire swap changes only the global phase, and since the global phase does not affect any physical outcomes, this effect cannot be observed.
In contrast, consider four qubits initialized as two singlet pairs,
$\ket{\psi^-}_{01}\otimes\ket{\psi^-}_{23}$, and let $\Pi_{(0,3)}$ denote the SWAP between wires $0$ and $3$.
Define $E_i:=E(\mathbf p_i)$ and
\begin{equation}\label{eq:psi-enc-4q}
\ket{\Psi(\mathbf p_0,\mathbf p_1,\mathbf p_2,\mathbf p_3)}
:= (E_0\!\otimes\!E_1)\ket{\psi^-}_{01}\;\otimes\;(E_2\!\otimes\!E_3)\ket{\psi^-}_{23}.
\end{equation}

For the point-permutation $\sigma=(0\,3)$ (acting on inputs only),
\begin{equation}\label{eq:psi-point-perm}
\ket{\Psi(\mathbf p_3,\mathbf p_1,\mathbf p_2,\mathbf p_0)}
= (E_3\!\otimes\!E_1)\ket{\psi^-}_{01}\;\otimes\;(E_2\!\otimes\!E_0)\ket{\psi^-}_{23}.
\end{equation}
This state is product across $(01)\,|\,(23)$, hence the reduced state on wires $(0,1)$ is pure:
\begin{equation}\label{eq:rho-pi-rank}
\rho^{(\sigma)}_{01}
:=\Tr_{23}\!\Big[\ket{\Psi(\mathbf p_3,\mathbf p_1,\mathbf p_2,\mathbf p_0)}\bra{\Psi(\mathbf p_3,\mathbf p_1,\mathbf p_2,\mathbf p_0)}\Big],
\qquad
\rank\!\big(\rho^{(\sigma)}_{01}\big)=1.
\end{equation}

Now let $\Pi_{(0,3)}$ swap wires $0$ and $3$. Then
\begin{equation}\label{eq:wire-swap-state}
\Pi_{(0,3)}\ket{\Psi(\mathbf p_0,\mathbf p_1,\mathbf p_2,\mathbf p_3)}
=(E_3\!\otimes\!E_1\!\otimes\!E_2\!\otimes\!E_0)\,
\big(\ket{\psi^-}_{31}\otimes\ket{\psi^-}_{20}\big).
\end{equation}
Taking the reduced state on wires $(0,1)$ and using $\Tr_{q}\ket{\psi^-}\bra{\psi^-}=\tfrac{I}{2}$ gives
\begin{equation}\label{eq:rho-swap-rank}
\rho^{(\Pi_{(0,3)})}_{01}
:=\Tr_{23}\!\Big[\Pi_{(0,3)}\ket{\Psi(\mathbf p_0,\mathbf p_1,\mathbf p_2,\mathbf p_3)}\bra{\Psi(\mathbf p_0,\mathbf p_1,\mathbf p_2,\mathbf p_3)}\Pi_{(0,3)}^\dagger\Big]
=\frac{I_{01}}{4},
\qquad
\rank\!\big(\rho^{(\Pi_{(0,3)})}_{01}\big)=4,
\end{equation}
Therefore, the two states cannot be proportional (even up to a global phase):
\begin{equation}\label{eq:not-proportional}
\Pi_{(0,3)}\ket{\Psi(\mathbf p_0,\mathbf p_1,\mathbf p_2,\mathbf p_3)}
\not\propto
\ket{\Psi(\mathbf p_3,\mathbf p_1,\mathbf p_2,\mathbf p_0)},
\end{equation}
Thus, a two-point-per-pair encoding is not necessarily permutation equivariant.
The pair selective scheme \eqref{eq:pair-select-enc} is the  effective compromise used in our experiments:
with $n$ qubits arranged into $n/2$ singlet pairs, one can encode $n/2$ points while maintaining permutation equivariance.

\subsection{Our quantum network via twirling}
\label{app:quantum_network_via_twirling}
\subsubsection{Derivation of the pair-symmetrized blocks $P_k^{\pm}$.}

We recall the notations from Sec.~\ref{sec:our_nwk}. We denote by $\Pi$ the permutation representation of $\Sym_n$ on the $n$ wires.
For any finite subgroup $H \subseteq \Sym_n$, we define the (operator) twirl on
$M \in \mathbb{C}^{2^n\times 2^n}$ by 
\begin{equation}
\mathcal{T}_H[M]
:= \frac{1}{|H|}\sum_{h\in H} \Pi(h)\,M\,\Pi(h)^\dagger.
\label{eq:twirl-def-appendix}
\end{equation}
In particular, in our quantum circuit architecture, we use $H = \Sym_{\mathrm{pair}}$.

Fix $2\le k\le N$ and consider all ordered selections of $k$ distinct pairs
from the $N$ available pairs. Let
\begin{equation}
\mathcal{P}=\{(0,1),(2,3),\ldots,(n-2,n-1)\}
\end{equation}
be the set of pairs, and write $\mathrm{Perm}(\mathcal{P},k)$ for the set of
ordered $k$–tuples.
For a given $\pi=(p_1,\cdots,p_k)\in\mathrm{Perm}(\mathcal{P},k)$ where
$p_{m}=(2j_m,2j_{m}+1)$ and a selection vector
$\mathbf{s}=(s_1,\dots,s_k)\in\{0,1\}^k$, we define $\tau_{\pi}^{\mathbf{s}}\in \mathrm{GL}(\mathcal H)\subset \mathbb{C}^{2^n\times 2^n}$ to be the permutation matrix representing the $k$--cycle
\begin{equation}
(2j_1 + s_1,\ 2j_2 + s_2,\ \ldots,\ 2j_k + s_k).
\end{equation}
on the selected wires and leaves all other wires fixed. Intuitively, for each
pair $B_{i_\ell}$ we choose either its first wire ($s_\ell=0$) or second wire
($s_\ell=1$) to participate in the cycle, while the partner wire in the same
pair stays put.

For a fixed interaction order $k$ and an ordered $k$–tuple of pairs
$\pi\in\mathrm{Perm}(\mathcal{P},k)$, we define a class of generalized $k$–cycle generators supported on
these pairs by 
\begin{equation}
\label{eq:general-weighted-combo}
M_{k,\pi}^{(w)}
:= \sum_{\mathbf{s}\in\{0,1\}^k} w(\mathbf{s})\,\tau_{\pi}^{\mathbf{s}},
\end{equation}
where $w:\{0,1\}^k\to\mathbb{C}$ determines the coefficient for $2^k$ choices of within–pair selections. 
Our goal is to endow these already rotation-equivariant operators with $\Sym_{\mathrm{pair}}$-equivariance, which can be achieved in principle by twirling a generator in $\Pi(\mathbb{C}[\Sym_n])$ under $\Sym_{\mathrm{pair}}$.

While twirling can be applied to arbitrary operators, we restrict the operators to those with the coefficients defined as
\begin{equation}
\label{eq:weight-two-patterns}
w^{(+)}(\mathbf{s}) := 1,
\qquad
w^{(-)}(\mathbf{s}) := (-1)^{\|\mathbf{s}\|_1},
\end{equation}
which are natural choices from the structure of $\Sym_{\mathrm{pair}}$, where $\|\mathbf{s}\|_1$ is the Hamming weight of $\mathbf{s}$.
The corresponding operators are given as
\begin{equation}
\label{eq:Mpiplusminus-def}
M_{k,\pi}^{(+)}
:= \sum_{\mathbf{s}\in\{0,1\}^k}\tau_{\pi}^{\mathbf{s}},
\qquad
M_{k,\pi}^{(-)}
:= \sum_{\mathbf{s}\in\{0,1\}^k}(-1)^{\|\mathbf{s}\|_1}\,\tau_{\pi}^{\mathbf{s}}.
\end{equation}

We now compute the pair–permuting twirl of these two special combinations and
show that they are proportional to $P_k^+$ and $P_k^-$, respectively.

\emph{Uniform pattern.}
Using the definition of the pair–permuting twirl and the conjugation rule
$\Pi(\sigma)\,\tau_{\pi}^{\mathbf{s}}\,\Pi(\sigma)^\dagger
= \tau_{\sigma\!\cdot\!\pi}^{\mathbf{s}}$ for $\sigma\in \Sym_{\mathrm{pair}}$, we get
\begin{equation}
\label{eq:twirl-uniform-expansion}
\begin{aligned}
\mathcal{T}_{\Sym_{\mathrm{pair}}}\big[M_{k,\pi}^{(+)}\big]
&= \frac{1}{|\Sym_{\mathrm{pair}}|}\sum_{\sigma\in \Sym_{\mathrm{pair}}}
   \Pi(\sigma)\Big(\sum_{\mathbf{s}}\tau_{\pi}^{\mathbf{s}}\Big)\Pi(\sigma)^\dagger \\[2pt]
&= \frac{1}{|\Sym_{\mathrm{pair}}|}\sum_{\sigma\in \Sym_{\mathrm{pair}}}
   \sum_{\mathbf{s}}\tau_{\sigma\!\cdot\!\pi}^{\mathbf{s}}.
\end{aligned}
\end{equation}
The induced action of $S_{\mathrm{pair}}\cong S_N$ on ordered $k$--tuples of
distinct pairs is transitive: for any
$\pi,\pi' \in \mathrm{Perm}(\mathcal{P},k)$ there exists
$\sigma \in \Sym_{\mathrm{pair}}$ such that $\sigma\!\cdot\!\pi = \pi'$.
For a fixed $\pi$, let 
\begin{equation}
\mathrm{Stab}(\pi)
:= \{\sigma \in \Sym_{\mathrm{pair}} : \sigma\!\cdot\!\pi = \pi\}
\end{equation}
be its stabilizer. In our setting, keeping the ordered list of $k$ active
pairs $(B_{i_1},\dots,B_{i_k})$ fixed forces $\sigma$ to permute only the
remaining $N-k$ inactive pairs, so $|\mathrm{Stab}(\pi)| = (N-k)!$ depends
only on $N$ and $k$, but not on the particular choice of $\pi$. By the
orbit--stabilizer theorem,
\begin{equation}
|\Sym_{\mathrm{pair}}|
= |\mathrm{Orbit}(\pi)| \cdot |\mathrm{Stab}(\pi)|
= |\mathrm{Perm}(\mathcal{P},k)| \cdot |\mathrm{Stab}(\pi)|,
\end{equation}
and for every $\pi' \in \mathrm{Perm}(\mathcal{P},k)$ the number of
$\sigma \in \Sym_{\mathrm{pair}}$ with $\sigma\!\cdot\!\pi = \pi'$ is exactly
$|\mathrm{Stab}(\pi)|$. Thus each $\pi'$ appears equally often in the multiset
$\{\sigma\!\cdot\!\pi : \sigma \in \Sym_{\mathrm{pair}}\}$, and the average over
$\sigma$ is proportional to the uniform average over all ordered $k$--tuples: 
\begin{equation}
\frac{1}{|\Sym_{\mathrm{pair}}|}\sum_{\sigma\in \Sym_{\mathrm{pair}}}
\sum_{\mathbf{s}\in\{0,1\}^k}
\,\tau_{\sigma\!\cdot\!\pi}^{\mathbf{s}}
\;\propto\;
\frac{1}{k!}\sum_{\pi'\in\mathrm{Perm}(\mathcal{P},k)}
\sum_{\mathbf{s}\in\{0,1\}^k}
\tau_{\pi'}^{\mathbf{s}}.
\end{equation}

In particular, for the uniform pattern
$M_{k,\pi}^{(+)} := \sum_{\mathbf{s}\in\{0,1\}^k}\tau_{\pi}^{\mathbf{s}}$, we obtain
\begin{equation}
\label{eq:twirl-uniform-to-Pkplus-final}
\mathcal{T}_{\Sym_{\mathrm{pair}}}\big[M_{k,\pi}^{(+)}\big]
\;\propto\;
\frac{1}{k!}\sum_{\pi'\in\mathrm{Perm}(\mathcal{P},k)}
\sum_{\mathbf{s}\in\{0,1\}^k}\tau_{\pi'}^{\mathbf{s}}
\;=\; P_k^+.
\end{equation}
\emph{Parity pattern.}
The computation for the parity–weighted pattern is identical, except that the
factor $(-1)^{\|\mathbf{s}\|_1}$ does not depend on the pair indices and is
therefore unchanged by the $\Sym_{\mathrm{pair}}$ action:
\begin{equation}
\label{eq:twirl-parity-expansion}
\begin{aligned}
\mathcal{T}_{\Sym_{\mathrm{pair}}}\big[M_{k,\pi}^{(-)}\big]
&= \frac{1}{|\Sym_{\mathrm{pair}}|}\sum_{\sigma\in \Sym_{\mathrm{pair}}}
   \Pi(\sigma)\Big(\sum_{\mathbf{s}}(-1)^{\|\mathbf{s}\|_1}
   \tau_{\pi}^{\mathbf{s}}\Big)\Pi(\sigma)^\dagger \\[2pt]
&= \frac{1}{|\Sym_{\mathrm{pair}}|}\sum_{\sigma\in \Sym_{\mathrm{pair}}}
   \sum_{\mathbf{s}}(-1)^{\|\mathbf{s}\|_1}\tau_{\sigma\!\cdot\!\pi}^{\mathbf{s}} \\[2pt]
&\;\propto\;
\frac{1}{k!}\sum_{\pi'\in\mathrm{Perm}(\mathcal{P},k)}
\sum_{\mathbf{s}\in\{0,1\}^k}(-1)^{\|\mathbf{s}\|_1}\tau_{\pi'}^{\mathbf{s}}
\;=\; P_k^-.
\end{aligned}
\end{equation}

\subsubsection{Unitarity of the quantum network.}
On a quantum device, every implemented gate must be unitary.
In particular, each phase factor of the form $e^{i c_k^\pm P_k^\pm}$ used in our network must be unitary.
We verify this by showing that the operators $P_k^\pm$ are Hermitian, and then invoking a standard fact
about exponentials of Hermitian operators.

By definition in the main text,
$P_k^+$ and $P_k^-$ are (up to normalization) finite linear combinations of the permutation matrices
$\tau_{\pi'}^{\mathbf t}$:
\begin{equation}
P_k^+ \;\propto\; \sum_{\pi'\in\mathrm{Perm}(\mathcal{P},k)} \ \sum_{\mathbf t\in\{0,1\}^k} \tau_{\pi'}^{\mathbf t},
\qquad
P_k^- \;\propto\; \sum_{\pi'\in\mathrm{Perm}(\mathcal{P},k)} \ \sum_{\mathbf t\in\{0,1\}^k} (-1)^{\|\mathbf t\|_1}\,\tau_{\pi'}^{\mathbf t}.
\end{equation}
Each $\tau_{\pi'}^{\mathbf t}$ is a permutation matrix and hence unitary, so
\begin{equation}
\big(\tau_{\pi'}^{\mathbf t}\big)^\dagger
= \big(\tau_{\pi'}^{\mathbf t}\big)^{-1}.
\end{equation}
The inverse of the $k$–cycle implemented by $\tau_{\pi'}^{\mathbf t}$ is the same cycle with the
order of its $k$ support wires reversed, which is again of the form $\tau_{\pi''}^{\mathbf t'}$ for
some ordered $k$–tuple $\pi''$. Since we sum over all ordered $k$–tuples of distinct pairs,
the index set $\{(\pi',\mathbf t)\}$ is closed under inversion: for every $(\pi',\mathbf t)$ there
is $(\pi'',\mathbf t')$ with
\begin{equation}
\big(\tau_{\pi'}^{\mathbf t}\big)^{-1} = \tau_{\pi''}^{\mathbf t'}.
\end{equation}

Using this closure under inversion, we obtain
\begin{equation}
\begin{aligned}
(P_k^+)^\dagger
&\;\propto\; \sum_{\pi',\mathbf t} \big(\tau_{\pi'}^{\mathbf t}\big)^\dagger
= \sum_{\pi',\mathbf t} \big(\tau_{\pi'}^{\mathbf t}\big)^{-1}
= \sum_{\pi'',\mathbf t'} \tau_{\pi''}^{\mathbf t'}
\;\propto\; P_k^+,
\end{aligned}
\end{equation}
where the last step is just a relabeling of the summation index $(\pi',\mathbf t)\mapsto(\pi'',\mathbf t')$.
Similarly,
\begin{equation}
\begin{aligned}
(P_k^-)^\dagger
&\;\propto\; \sum_{\pi',\mathbf t} (-1)^{\|\mathbf t\|_1}\big(\tau_{\pi'}^{\mathbf t}\big)^\dagger
= \sum_{\pi',\mathbf t} (-1)^{\|\mathbf t\|_1}\big(\tau_{\pi'}^{\mathbf t}\big)^{-1}
= \sum_{\pi'',\mathbf t'} (-1)^{\|\mathbf t'\|_1}(\tau_{\pi''}^{\mathbf t'})
\;\propto\; P_k^-,
\end{aligned}
\end{equation}
since inversion may permute the coordinates of the selection vector $\mathbf t$ (i.e., $\mathbf t'=\rho\cdot \mathbf t$ for some $\rho\in \Sym_k$), but it preserves the Hamming weight $\|\mathbf t\|_1$, hence $(-1)^{\|\mathbf t'\|_1}=(-1)^{\|\mathbf t\|_1}$.
After fixing normalizations, this shows that
\begin{equation}
(P_k^\pm)^\dagger = P_k^\pm,
\end{equation}
i.e., both $P_k^+$ and $P_k^-$ are Hermitian.

To conclude that
\begin{equation}
U_k^\pm \;:=\; e^{i c_k^\pm P_k^\pm}
\end{equation}
is unitary for every real coefficient $c_k^\pm \in \mathbb{R}$, it suffices to recall a standard fact: whenever $H$ is Hermitian and $c\in\mathbb{R}$, the exponential
\begin{equation}
U := e^{i c H}
\end{equation}
is unitary. 

\subsubsection{Gate complexity.}

To estimate the circuit cost of
\begin{equation}
G^\ell = \prod_{k=2}^{k_{\max}} e^{ic^+_{\ell,k}P_k^+}\,e^{ic^-_{\ell,k}P_k^-},
\end{equation}
we implement each factor $e^{ic_{\ell,k}^\pm P_k^\pm}$ by constructing a block-encoding of $P_k^\pm$ from its linear-combination of unitaries (LCU) decomposition~\cite{childs2012hamiltonian} and then applying standard qubitization or QSVT-based Hamiltonian simulation~\cite{low2019hamiltonian,gilyen2019quantum}. For each factor $e^{ic_{\ell,k}^\pm P_k^\pm}$, let $B_k^\pm$ denote the two-qubit gate cost per block-encoding query, $Q_{\ell,k}^\pm$ the query count, and $C_{\ell,k}^\pm$ the resulting total two-qubit gate count.

Recall from Sec.~\ref{sec:our_nwk} that
\begin{equation}
P_k^\pm
=
\frac{1}{k!}
\sum_{\pi \in \mathrm{Perm}(P,k)}
\sum_{\mathbf{s}\in\{0,1\}^k}
(\pm 1)^{\|\mathbf{s}\|_1}\,
\tau_\pi^{\mathbf{s}}.
\label{eq:lcu_pk_pm}
\end{equation}
Thus $P_k^\pm$ admits an LCU decomposition with
\(
L_k = 2^k\,\frac{N!}{(N-k)!}
\label{eq:num_terms_pk}
\)
labeled summands, each having coefficient magnitude $1/k!$. Accordingly, we write
\begin{equation}
P_k^\pm = \sum_{j=1}^{L_k} a_{k,j}^\pm U_{k,j}^\pm,
\qquad
|a_{k,j}^\pm|=\frac{1}{k!},
\label{eq:pk_lcu_labeled}
\end{equation}
where the labeling need not identify pairwise distinct permutation unitaries. The corresponding LCU weight is
\begin{equation}
\lambda_k
:=
\sum_{j=1}^{L_k}|a_{k,j}^\pm|
=
\frac{L_k}{k!}
=
2^k \binom{N}{k}.
\label{eq:lambda_k}
\end{equation} Using the standard LCU block-encoding construction, we define the PREPARE and SELECT oracles by
\begin{equation}
\mathrm{PREP}_k^\pm \ket{0}^{\otimes m_k}
=
\frac{1}{\sqrt{\lambda_k}}
\sum_{j=1}^{L_k}
\sqrt{|a_{k,j}^\pm|}\,\ket{j},
\qquad
\mathrm{SELECT}_k^\pm
=
\sum_{j=1}^{L_k}\ket{j}\!\bra{j}\otimes \widetilde U_{k,j}^\pm +
\sum_{j=L_k+1}^{2^{m_k}}\ket{j}\!\bra{j}\otimes I,
\label{eq:prep_select_pk}
\end{equation}
where $m_k=\left\lceil \log_2 L_k \right\rceil$ and $\widetilde U_{k,j}^\pm := \operatorname{sgn}(a_{k,j}^\pm)\,U_{k,j}^\pm$ is still unitary. Then
\begin{equation}
W_k^\pm
:=
(\mathrm{PREP}_k^\pm{}^\dagger \otimes I)\,
\mathrm{SELECT}_k^\pm\,
(\mathrm{PREP}_k^\pm \otimes I)
\label{eq:block_encoding_pk}
\end{equation}
is a block-encoding of $P_k^\pm/\lambda_k$, namely
\begin{equation}
(\bra{0}^{\otimes m_k}\otimes I)\,W_k^\pm\,(\ket{0}^{\otimes m_k}\otimes I)
=
\frac{P_k^\pm}{\lambda_k}.
\label{eq:block_encoding_identity}
\end{equation}
Equation~\eqref{eq:block_encoding_identity} shows that $W_k^\pm$ is a block-encoding of the normalized operator $P_k^\pm/\lambda_k$. More precisely, if the ancilla register, consisting of $m_k$ qubits, is initialized in $\ket{0}^{\otimes m_k}$ and is projected back onto $\ket{0}^{\otimes m_k}$ after applying $W_k^\pm$, then the induced operation on the $2N$-qubit data register is exactly $P_k^\pm/\lambda_k$. We then implement $e^{ic_{\ell,k}^\pm P_k^\pm}$ with error $\varepsilon_{\ell,k}^\pm$ using qubitization or QSVT. 

To estimate the cost of this implementation, we define
\begin{equation}
B_k^\pm
:=
2\mathrm{Cost}(\mathrm{PREP}_k^\pm)
+
\mathrm{Cost}(\mathrm{SELECT}_k^\pm),
\label{eq:bk_def}
\end{equation}
namely, the two-qubit gate cost of one block-encoding query. Applying qubitization to the block-encoding in Eq.~\eqref{eq:block_encoding_pk} then yields query complexity 
\begin{equation}
Q_{\ell,k}^\pm
=
O\!\left(
\lambda_k\,|c_{\ell,k}^\pm|
+
\log\!\frac{1}{\varepsilon_{\ell,k}^\pm}
\right)
=
O\!\left(
2^k\binom{N}{k}|c_{\ell,k}^\pm|
+
\log\!\frac{1}{\varepsilon_{\ell,k}^\pm}
\right).
\label{eq:query_complexity_pk}
\end{equation}
The same asymptotic dependence can also be obtained in the QSVT framework. Hence the total two-qubit gate cost for implementing $e^{ic_{\ell,k}^\pm P_k^\pm}$ is
\begin{equation}
C_{\ell,k}^\pm
=
O\!\left(
B_k^\pm
\left(
2^k\binom{N}{k}|c_{\ell,k}^\pm|
+
\log\!\frac{1}{\varepsilon_{\ell,k}^\pm}
\right)
\right).
\label{eq:ck_from_bkqk}
\end{equation} 
To make the oracle cost explicit, we fix a concrete compilation model in which the $2N$ data qubits are arranged on a one-dimensional nearest-neighbor line, and the branch label $j$ is stored in a form from which the selected wires in $\widetilde U_{k,j}^\pm$ can be decoded with polylogarithmic overhead. In this model, $\widetilde U_{k,j}^\pm$ acts as a $k$-cycle on $k$ selected wires. Decomposing this $k$-cycle into $k-1$ transpositions and routing each transposition along the line yields the implementation-dependent upper bound
\begin{align}
 \mathrm{Cost}(\mathrm{SELECT}_k^\pm) = \widetilde O(kN)
\label{eq:bk_line_routing}
\end{align}
for fixed $k$.

By definition of $\mathrm{PREP}_k^\pm$ and using
$|a_{k,j}^\pm|=1/k!$ and $\lambda_k=L_k/k!$, we obtain
\begin{align}
\mathrm{PREP}_k^\pm \ket{0}^{\otimes m_k}
&=
\frac{1}{\sqrt{\lambda_k}}
\sum_{j=1}^{L_k}\sqrt{|a_{k,j}^\pm|}\ket{j}
=
\frac{1}{\sqrt{L_k}}\sum_{j=1}^{L_k}\ket{j}.
\end{align}
Therefore, $\mathrm{PREP}_k^\pm$ reduces to preparing the uniform superposition over the valid labels $j=1,\dots,L_k$ in the $m_k$-qubit label register. This can be prepared by a
standard recursive interval-state preparation routine. Define
\begin{align}
|\mathrm{Unif}_{m_k}(L_k)\rangle
=
\frac{1}{\sqrt{L_k}}\sum_{j=1}^{L_k}|j\rangle .
\end{align}
Writing \(q=2^{m_k-1}\), \(L_{k,0}=\min(L_k,q)\), and
\(L_{k,1}=\max(0,L_k-q)\), one has
\begin{align}
|\mathrm{Unif}_{m_k}(L_k)\rangle
=
\sqrt{\frac{L_{k,0}}{L_k}}
|0\rangle|\mathrm{Unif}_{m_k-1}(L_{k,0})\rangle
+
\sqrt{\frac{L_{k,1}}{L_k}}
|1\rangle|\mathrm{Unif}_{m_k-1}(L_{k,1})\rangle .
\end{align}
Thus, by recursively applying classically precomputed controlled rotations, one
prepares the uniform state on the \(L_k\) valid labels with zero amplitude on the
remaining \(2^{m_k}-L_k\) computational basis states, using
\(O(\operatorname{poly}(m_k))\) gates up to standard rotation-synthesis precision
factors. This implies that
\begin{align}
\label{eq:cost_prep}
\mathrm{Cost}(\mathrm{PREP}_k^\pm)=O(\operatorname{poly}(m_k))
=O(\operatorname{polylog}(L_k))
=O(\operatorname{polylog}(N))
\qquad (k\ \text{fixed}).
\end{align}

Thus, for fixed $k$ and bounded coefficients $c_{\ell,k}^\pm = O(1)$, Eqs.~\eqref{eq:ck_from_bkqk},~\eqref{eq:bk_line_routing}, and~\eqref{eq:cost_prep} imply
\begin{equation}
C_{\ell,k}^\pm
=
\widetilde O\!\left(
kN \cdot 2^k\binom{N}{k}
\right)
=
\widetilde O(N^{k+1}).
\label{eq:ck_line_routing}
\end{equation}
Indeed, for fixed $k$ one has $2^k\binom{N}{k}=\Theta(N^k)$, so the additional factor of $N$ comes from the line-routing cost in Eq.~\eqref{eq:bk_line_routing}.
If only orders up to $k_{\max}$ are retained, then the per-block gate count satisfies
\begin{equation}
C_{\mathrm{block}}
=
\widetilde O\!\left(
\sum_{k=2}^{k_{\max}} N^{k+1}
\right)
=
\widetilde O(N^{k_{\max}+1})
\qquad
(k_{\max}\ \text{fixed}).
\label{eq:cblock_line_routing}
\end{equation}
Hence, under this compilation model, the per-block circuit cost remains polynomial in $N$ on a $2N$-qubit data register, together with a logarithmic-size label ancilla register.

We emphasize that this is an oracle-dependent upper bound for a particular LCU block-encoding strategy together with a specific nearest-neighbor compilation model, not an optimized synthesis result for the full exponentials $e^{ic_{\ell,k}^\pm P_k^\pm}$. A more structured, symmetry-adapted implementation of the PREPARE and SELECT oracles could reduce the gate count further.

\subsection{Heisenberg Hamiltonian}
\subsubsection{Hamiltonian and expectation value.}
A Hamiltonian $H$ is a Hermitian operator acting on the system Hilbert space $\mathcal{H}$. Its physical prediction is the energy expectation
$\langle H\rangle_\rho := \mathrm{Tr}(\rho H)$ for a density operator $\rho$, which reduces to $\langle \psi|H|\psi\rangle$ for a pure state $|\psi\rangle$. We consider $n$ spin-$\tfrac12$ degrees of freedom arranged on a graph $G=(V,E)$ where edges indicate pairs that interact. At site $i\in V$, let $\vec S_i=(S_i^X,S_i^Y,S_i^Z)$ denote the spin operators with the standard normalization $S_i^\alpha=\tfrac\hbar2\,\sigma_i^\alpha$, where $\sigma_i^\alpha$ are Pauli matrices acting nontrivially only on site $i$.

\subsubsection{Heisenberg Hamiltonian.}
Let $G=(V,E)$ be an undirected interaction graph. The (isotropic) Heisenberg Hamiltonian is
\begin{equation}
H \;=\; J_{ex} \sum_{\langle i,j\rangle \in E} \vec S_i \!\cdot\! \vec S_j
\;=\; J_{ex} \sum_{\langle i,j\rangle \in E}\sum_{\alpha\in\{X,Y,Z\}} S_i^\alpha S_j^\alpha
\;=\; \frac{J_{ex}\,\hbar^2}{4}\sum_{\langle i,j\rangle \in E}\sum_{\alpha\in\{X,Y,Z\}}
\sigma_i^\alpha \sigma_j^\alpha,
\end{equation}
where $J_{ex}\in\mathbb{R}$ is the exchange coupling (ferromagnetic $J_{ex}<0$, antiferromagnetic $J_{ex}>0$), and $S_i^\alpha=\tfrac{\hbar}{2}\sigma_i^\alpha$.

\subsubsection{$\mathrm{SU}(2)$ invariance.}
Let $U\in\mathrm{SU}(2)$ act globally as $U^{\otimes n}$. The adjoint action rotates spin components: there exists $R(U)\in\mathrm{SO}(3)$ such that
\begin{equation}
U\,S_i^\alpha\,U^\dagger \;=\; \sum_{\beta\in\{X,Y,Z\}} R_{\alpha\beta}(U)\,S_i^\beta,
\qquad \text{or equivalently,}\qquad
U\,(\vec S_i\!\cdot\!\vec v)\,U^\dagger \;=\; \vec S_i\!\cdot\!\big(R(U)\vec v\big).
\end{equation}

Hence, each pairwise scalar product is invariant:
\begin{align}
U^{\otimes 2}\big(\vec S_i\!\cdot\!\vec S_j\big)(U^\dagger)^{\otimes 2}
&= \sum_{\alpha}\!\big(U^{\otimes 2} S_i^\alpha (U^\dagger)^{\otimes 2}\big)\big(U^{\otimes 2} S_j^\alpha (U^\dagger)^{\otimes 2}\big)
= \sum_{\alpha,\beta,\gamma}\! R_{\alpha\beta}(U)R_{\alpha\gamma}(U)\,S_i^\beta S_j^\gamma \\
&= \sum_{\beta,\gamma}\!\big(R(U)^\top R(U)\big)_{\beta\gamma}\,S_i^\beta S_j^\gamma
= \sum_{\beta} S_i^\beta S_j^\beta
= \vec S_i\!\cdot\!\vec S_j,
\end{align}
using orthogonality $R(U)^\top R(U)=I$. Summing over $(i,j)\in E$ yields
\begin{equation}
U^{\otimes n} H \big(U^{\otimes n}\big)^\dagger=H,
\end{equation}
i.e., the Heisenberg Hamiltonian is globally $\mathrm{SU}(2)$–invariant.

\section{Additional Experimental Details}

\subsection{Data construction}\label{app:c1}

We evaluate on two object-level datasets using small-class subsets: ModelNet (bottle, bowl, cup, lamp, stool) and ShapeNet (birdhouse, bottle, bowl, bus, cap). The original taxonomies contain ModelNet40 (40 classes) and ShapeNetCore (55 classes). Because each object is represented by only a few points in our sparse-point regime ($N \in \{4,5,6\}$), the available geometric information is severely limited. Discrimination over the full 40/55-class taxonomies, however, requires much finer geometric cues. We therefore use fixed five-class subsets for the clean CAD datasets to keep comparisons across methods more interpretable and stable.

For each normalized object, we first form a dense candidate set of surface points (typically $\sim$2{,}048 candidates).
Candidates are drawn from the underlying mesh surface representation to cover the shape broadly.
Objects are partitioned into training, validation, and test sets. For ModelNet, we follow the official train/test split and reserve a portion of the training data for validation; for ShapeNet, splits are constructed using disjoint sets of objects. Given a point budget of $N$, we maintain class-balanced splits with a 7:1:2 ratio, resulting in $700N$, $100N$, and $200N$ points per class for training, validation, and testing, respectively. For $N\in\{4,5,6\}$, points are selected via farthest point sampling (FPS) with a random initial seed. This iterative process selects the candidate furthest from the current set to maximize spatial coverage and minimize redundancy. To ensure sample diversity, points are sampled without replacement.

We summarize data preprocessing for each dataset as follows:
\begin{itemize}
    \item ModelNet: We use the widely adopted HDF5 release (\texttt{modelnet40\_ply\_hdf5\_2048}), which provides 2,048 pre-sampled surface points per object derived from the official meshes. We run the FPS on the provided candidate set to select $N$ points. We do not apply any additional centering or unit max-radius scaling, since the released data are already centered and unit max-radius scaled. Additionally, compared to ShapeNet, intra-class variation is smaller, and the models are cleaner.
    
    \item ShapeNet: We draw a dense candidate set by area-weighted uniform sampling over the mesh surface. We then center and unit max-radius scale the candidates and apply FPS to obtain $N$ points.
\end{itemize}

\subsection{Implementation settings}\label{app:c2}

Hardware mappings are listed in Table~\ref{tab:env-by-group}. Classical baselines can be run on local hardware, whereas quantum baselines typically require substantially more memory due to the exponential growth of the Hilbert space with the number of qubits. For example, HyQuRP caches $2(N-1)$ unitary matrices of size $2^{2N}\times 2^{2N}$ (from the eigendecompositions of the twirling operators), which makes large-memory GPUs (e.g., A100 or H100) necessary in our experiments.

\begin{table}[htbp]
\centering
\caption{Compute environments by run group.}
\label{tab:env-by-group}
\begin{tabular}{lll}
\toprule
Group & Datasets & Hardware \\
\midrule
Classical Baselines   & All datasets & Apple M4 chip   \\
RP-EQGNN             & All datasets                    & NVIDIA A100 
\\
HyQuRP ($N{=}4$)      & All datasets                   & Apple M4 chip  \\
HyQuRP ($N{=}5$)    & All datasets                   & NVIDIA A100  \\
HyQuRP ($N{=}6$)    & All datasets                   & NVIDIA H100  \\
\bottomrule
\end{tabular}
\end{table}

We use Adam, batch size of $35$, and train for $1000$ epochs. For each model, the same learning rate $\eta$ is used for all datasets and $N \in \{4,5,6\}$, which is optimized for the ModelNet dataset with $N=4$ over $\eta \in \{10^{-2},10^{-3},10^{-4}\}$. We do not use schedulers, weight decay, dropout, or early stopping. Model selection uses the validation accuracy: we report the best top-1 test accuracy at the epoch achieving the best validation accuracy. Results are averaged over $7$ seeds. 
Table~\ref{tab:train} summarizes our training implementation. During training, some runs failed; therefore, we introduce an additional rule to handle these cases (see Appendix~\ref{app:summary_metrics}).

\begin{table}[htbp]
\centering
\caption{Training protocol.}
\label{tab:train}
\begin{tabular}{p{0.28\textwidth} p{0.62\textwidth}}
\toprule
Setting & Value \\
\midrule
Augmentation & random point permutation, rotation and jitter \\
Optimizer & Adam \\
Loss function & cross-entropy\\
Batch size & 35 \\
Epochs & 1000 \\
Learning rate & optimized on ModelNet with $N{=}4$ from $\{10^{-2},10^{-3},10^{-4}\}$ is used for all other datasets\\
Selection rule & best validation accuracy epoch \\
Seeds & 7 (121, 831, 1557, 2023–2026) \\
Regularization/Dropout/Gradient clipping/Scheduler & none \\
\bottomrule
\end{tabular}
\end{table}

Training-time augmentations are identical across all methods: SO(3) rotation, permutation, and jitter with standard deviation $\sigma_{jitter}$. In this paper, we set $\sigma_{jitter}$ to $0.02$, $\Theta$ to $1.7$, the block repetition $B$ to $12$, and use the aggregation $\{\mathrm{mean},\mathrm{max},\mathrm{min},\mathrm{sum},\mathrm{var},\mathrm{std}\}$. For neighborhood-based baselines (DGCNN, VN-PointNet, Point TF, PointMLP), we use $k\in\{1, 2,3\}$ when $N\in\{4,5,6\}$. More details of the implementation are available in the GitHub repository~\href{https://github.com/YonseiQC/equivariant_QML}{https://github.com/YonseiQC/equivariant\_QML}.

\FloatBarrier
\subsection{Baseline architectures}\label{app:baseline_arch}

This appendix details how we instantiate the Light and Mid variants for HyQuRP and each baseline family.
We follow a simple rule: scale widths and, where explicitly stated, the number of repeated blocks
until the target parameter count is reached, while avoiding structural omissions. 
In particular, we keep all architectural components of the original backbones 
(e.g., pooling operators, neighborhood definitions, positional-encoding schemes), 
except for operations that become ill-defined or redundant at our tiny point budgets 
(such as internal resampling layers); only layer widths (and the number of block repetitions, when noted) are reduced. For each family below, we briefly describe the original model, and the concrete width/depth settings
for the Light/Mid variants are summarized next to the comparison tables in this appendix
(see Tables~\ref{tab:pointnet},
~\ref{tab:tfn_family},
\ref{tab:dgcnn}, 
~\ref{tab:vnpointnet},
\ref{tab:pointtransformer}, \ref{tab:pointmlp}, \ref{tab:pointmamba}, \ref{tab:mamba3d_family}, \ref{tab:rpeqgnn_family},
\ref{tab:hyq_setmlp_light} and
\ref{tab:hyq_setmlp_mid}).

\subsubsection{MLP.}
A plain fully connected network that consumes flattened coordinates in $\mathbb{R}^{3N}$.
The vector is processed by a stack of dense layers with $\tanh$ activations, followed by a linear classifier.
No permutation-equivariant structure, attention, or graph operations are used; geometry is inferred from the raw coordinate vector.
Depth/width are scaled only to satisfy the Light/Mid capacity targets.

\subsubsection{Set-MLP.}
A shared per-point MLP maps $x_i \in \mathbb{R}^3$ to $\mathbb{R}^d$, 
followed by a symmetric aggregation (e.g., mean, max, or standard-deviation pooling) and a small fully connected head, 
so that the resulting set representation is permutation-invariant with respect to the input point order.
Light/Mid variants reduce the per-point tower and head widths while maintaining the aggregation and normalization of the original.

\subsubsection{PointNet~\citep{qi2017pointnet}.}
Our PointNet backbone follows the original design with input and feature transform networks (T-Nets) and a global set pooling stage. An input T-Net first predicts a $3\times 3$ affine transform that is applied to the raw coordinates $x_i \in \mathbb{R}^3$, roughly canonicalizing the point cloud and improving robustness to rigid transforms such as rotations; a shared per-point MLP then maps the transformed points to higher-dimensional features. A second feature T-Net predicts a feature-space transform that is applied before another shared MLP tower. The resulting per-point features are aggregated into a global shape descriptor via symmetric max pooling over the point dimension, and a shallow MLP classifier head maps this global feature to logits, yielding a permutation-invariant set representation.

For the Light/Mid variants, we uniformly narrow the widths of the per-point MLP towers and the classifier head, and proportionally scale down the T-Net MLPs to meet the $\sim$1.5K and $\sim$7K parameter budgets, while keeping the input-transform $\rightarrow$ per-point MLP $\rightarrow$ feature-transform  $\rightarrow$ global max-pooling $\rightarrow$ classifier-head computation pattern identical to the original. Concrete width/depth configurations for all Light/Mid variants are summarized in the PointNet family tables in this appendix (see Table~\ref{tab:pointnet}).

\subsubsection{Tensor Field Network ~\citep{thomas2018tensor}.}
Our Tensor Field Network (TFN) backbone follows the original equivariant point-convolution design of Tensor Field Network: for each pair of points, the network forms relative geometric quantities from their coordinate differences and applies rotation-equivariant filters constructed from learned radial functions and spherical harmonics. Point features are organized by rotation order, and each TFN module applies the available \(l=0\) and \(l=1\) convolution paths, concatenates the resulting features, mixes channels with a self-interaction layer, and then applies a rotation-equivariant nonlinearity. After repeating these modules, the network uses only the final \(l=0\) (scalar) features for classification and aggregates them over the point set to obtain a shape-level descriptor, yielding a rotation and permutation-invariant prediction.

For the Light/Mid variants, we preserve the original TFN computation pattern based on equivariant relative-geometry filtering, order-specific feature propagation, concatenation, self-interaction, and rotation-equivariant nonlinearity. To meet the \(\sim\)1.5K and \(\sim\)7K parameter budgets, we scale the channel widths through the layer dims configurations while keeping the number of modules fixed. Concrete Light/Mid settings are summarized in the TFN family tables in this appendix (see Table~\ref{tab:tfn_family}).

\subsubsection{DGCNN~\citep{wang2019dynamic}.}
Our DGCNN backbone follows the original dynamic-graph design: at each EdgeConv stage, a $k$-NN graph is constructed in the current feature space, and edge features are formed by concatenating central-node and offset features (e.g., $[h_i \,\|\, h_j - h_i]$) over neighbors $j \in \mathcal{N}(i)$. A shared MLP is applied to these edge features and aggregated over neighbors (typically by max pooling) to produce updated point features, and the $k$-NN graph is recomputed on the new features before the next EdgeConv stage.
After repeating these steps multiple times, we concatenate or stack multiscale point features, apply a global symmetric pooling over the point dimension to obtain a shape-level descriptor, and pass it through a shallow MLP classifier head, yielding a permutation-invariant set representation.

For the Light/Mid variants, we preserve dynamic $k$-NN graph updates, the EdgeConv formulation, and we only scale down channel widths in the EdgeConv MLPs and classifier head; when necessary to meet the $\sim$1.5K and $\sim$7K parameter budgets, we also reduce the number of EdgeConv stages while keeping the overall dynamic-graph $\rightarrow$ EdgeConv $\rightarrow$ global pooling $\rightarrow$ classifier-head computation pattern unchanged. 
Concrete width/depth configurations for all Light/Mid variants are summarized in the DGCNN family tables in this appendix (see Table~\ref{tab:dgcnn}).

\subsubsection{VN-PointNet \cite{deng2021vector}.}
Our VN-PointNet backbone follows the official Vector Neurons implementation. It retains the overall PointNet style shape encoding paradigm, while extending it to be SO(3) equivariant. In particular, the model adopts vector-neuron representations instead of standard scalar features. Accordingly, it employs vector neuron specific linear layers, nonlinearities, normalization, and canonicalization, while the input transformation network used in the original PointNet is omitted. As a result, VN-PointNet preserves PointNet's permutation-invariant shape modeling after symmetric aggregation, while additionally achieving rotation invariance through its equivariant vector neuron design. 

For the Light/Mid variants, we keep all components of the official VN-PointNet architecture unchanged except for dropout, which is omitted, and we reduce the channel widths to match the target parameter budgets. We do not alter the underlying VN operators, the use of the $k$-NN-based local graph construction, the feature-transform branch, the VNStdFeature canonicalization step, or the final global pooling pattern. Concrete Light/Mid settings are summarized in the VN-PointNet family tables in this appendix (see
Table~\ref{tab:vnpointnet}).

\subsubsection{PointTransformer~\citep{zhao2021point}.}
The original PointTransformer (Point TF) backbone first projects input features to a 32-dimensional latent space and then applies a stack of point-transformer blocks that perform self-attention over local $k$-NN.
Trainable relative positional encodings $\delta(p_i - p_j)$ are injected into both the attention generation branch and the feature transformation branch. Feature widths increase through multiple stages with hierarchical downsampling (e.g., FPS), followed by global average pooling and a fully connected classifier head. 

For the Light/Mid variants, we keep the explicit input projection, the relative positional-encoding scheme, vector attention method, FFN block structure and residual connections, but remove the multiscale hierarchy and any FPS-style resampling: a fixed small $k$-NN graph (defined in coordinate space) is used on the full point set (recomputed per block for simplicity but effectively identical for a given input), and two compact point-transformer blocks operate at reduced widths (tens of channels rather than hundreds).
A lightweight Conv1d classifier head maps the final per-point features to logits and then averages over points, yielding a permutation-invariant prediction. All modifications are implemented by shrinking the projection and block widths (and, where necessary, the number of blocks) to meet the $\sim$1.5K and $\sim$7K parameter budgets, with concrete configurations summarized in the Point Transformer family table (Table~\ref{tab:pointtransformer}).

\subsubsection{PointMLP~\citep{ma2022rethinking}.}
Our PointMLP backbone follows the residual MLP framework of Ma et al.: given an input point cloud, it progressively extracts local features over a sequence of stages built on $k$-NN.
At each stage, a lightweight geometric affine module first transforms local point coordinates/features within each neighborhood, after which residual per-point MLP (ResP) blocks process per-point features before and after a simple $k$-NN-based local aggregation. Between stages, the network performs point downsampling to gradually increase the receptive field and reduce the number of points. After the final stage, per-point features are aggregated by a global symmetric pooling over the remaining points to produce a shape-level descriptor, which is fed into a shallow MLP classifier head to output logits.

For the Light/Mid variants, we simplify the original hierarchical architecture to suit our experimental setting: We remove FPS-based resampling and have the network operate at a single resolution, using only $k$-NN to build local neighborhoods in the initial Geometric Affine encoder. 
We preserve the residual point-MLP block structure and the geometric affine/context-fusion design, but uniformly scale down channel widths in the stem, stages, and classifier head to meet the $\sim$1.5K and $\sim$7K parameter budgets; when necessary, we also shorten the number of stages. Global features are obtained by max pooling over all $N$ points. Concrete width/depth configurations for our Light/Mid PointMLP baselines are summarized in the PointMLP family tables in this appendix (see Table~\ref{tab:pointmlp}).

\subsubsection{PointMamba~\citep{liang2024pointmamba}.}
The original PointMamba backbone converts the unordered 3D point set into a 1D token sequence using a space-filling-curve (SFC) based tokenizer. Starting from the raw point cloud, it selects a subset of points as centers via farthest-point sampling (FPS) and serializes these centers using fixed SFC scans (e.g., Hilbert and Trans-Hilbert). For each serialized center, it forms a local patch via $k$-NN and applies a lightweight PointNet-style MLP to aggregate patch points into a patch token. To distinguish tokens from different scans, a learnable order indicator (e.g., per-scan scale/shift) is added, and the two token sequences are concatenated and fed into stacked Mamba blocks. This sequence is processed by a stack of Mamba state-space blocks that apply selective 1D sequence modeling: each block uses pre-LayerNorm, a linear projection into two branches, a depthwise 1D convolution for local mixing, and an input-conditioned selective SSM recurrence parameterized by $A,B,C,\Delta t,D$, followed by a gated output and residual connection. Finally, a global pooling plus linear classifier maps the final sequence features to class logits.

For the Light/Mid variants, we retain the selective Mamba block structure and a pooling head, but substantially reduce the model depth and channel widths. We keep the SFC-style point-to-sequence serialization based on Hilbert/Trans-Hilbert scans, but in the sparse-point regime, we directly tokenize each point and omit any FPS- or $k$-NN-based grouping (patching) altogether (compared to architectures that rely on $k$-NN as a core inductive bias, PointMamba does not inherently require them).
 To ensure a uniform implementation environment across all baselines, we do not rely on the official \texttt{mamba-ssm} library. Instead, we implement our own compact Mamba block in plain PyTorch that follows the same selective state-space formulation, but forgoes the fused CUDA kernels and other low-level optimizations of the original implementation.
All modifications are realized by shrinking the embedding, state, and block widths (and, where needed, the number of blocks) to fit the $\sim$1.5K and $\sim$7K parameter budgets, with concrete Light/Mid configurations summarized in the PointMamba family table (Table~\ref{tab:pointmamba}).

\subsubsection{Mamba3D~\citep{han2024mamba3d}.}
The original Mamba3D backbone is built around two key components---a local norm pooling (LNP)  and a bidirectional state-space module (bi-SSM) tailored for 3D point clouds. Given an unordered point set in $\mathbb{R}^3$, Mamba3D first applies FPS\ \&\ $k$-NN and obtains initial patch tokens via a lightweight PointNet-style embedding. The LNP block then aggregates local context over token neighborhoods via $k$-norm propagation and $k$-pooling (with a shared MLP for channel alignment). This sequence is subsequently processed by a bi-SSM block, which applies a Mamba-style selective state-space model along the token dimension (L$+$SSM) and a second selective SSM along the feature-channel dimension (C$-$SSM) using a channel flip, and fuses both directions through residual connections followed by a classification head. 

For the Light/Mid variants, we retain the LNP token-mixing and bi-SSM channel-mixing blocks, including the feature-channel flip for C-SSM, along with the original pre-norm residual block structure and positional encoding injected at each encoder layer. However, since our evaluation focuses on extremely small point regimes, we omit the FPS- and $k$-NN-based patch construction (compared to architectures that rely on $k$-NN as a core inductive bias, Mamba3D does not inherently require them) and instead operate directly on per-point tokens (with LNP applied over a global token neighborhood).
To ensure a uniform implementation environment across all baselines, we do not rely on the original Mamba3D codebase or the official \texttt{mamba\mbox{-}ssm} library. While the reference Mamba3D implementation builds its bidirectional C\mbox{-}SSM core using the Mamba selective state-space module from \texttt{mamba\mbox{-}ssm}, our Light/Mid variants replace this component with a custom PyTorch implementation of the same selective state-space update with the feature-channel flip used for C\mbox{-}SSM. The resulting compact backbone is mathematically consistent with the original formulation but omits fused CUDA kernels and other low-level engineering optimizations. All modifications are realized by shrinking the embedding, state, and block widths (and, where needed, the number of blocks) to fit the $\sim$1.5K and $\sim$7K parameter budgets, with concrete Light/Mid configurations summarized in the Mamba3D family table (Table~\ref{tab:mamba3d_family}).

\subsubsection{RP-EQGNN~\cite{liu2025rpeqgnn}.}
RP-EQGNN is a quantum graph neural network that exploits rotational and permutational symmetries for 3D graph data. In the paper, the model is organized into three quantum modules: (i) The rotation- and permutation-equivariant module $U_R$ comprises a geometric information encoding circuit $U_\varphi$ and a geometric information entanglement circuit $U_\phi$, where $U_\varphi$ encodes node coordinates using universal rotation gates, mapping them into a Hilbert space while $U_\phi$ entangles qubits to learn geometric relations. (ii) The convolution and entanglement module $U_N$ consists of a non-geometric information convolution circuit $U_C$ and a non-geometric information entanglement circuit $U_E$, and is responsible for extracting node features. (iii) A geometric de-entanglement module $U_R^\dagger$ is applied to preserve equivariance, after which the model output is produced via measurement and followed by classical post-processing for prediction. 

However, the description of RP-EQGNN within the original paper cannot produce simultaneously rotation- and permutation-invariant outputs for mathematical reasons.
To obtain an invariant output within their framework, the model would require an $\mathrm{SU}(2)$-invariant initial state and Hamiltonian, and an $\mathrm{SU}(2)$-equivariant encoding gate. However, they initialize the state as $\ket{0}$, which is not $\mathrm{SU}(2)$-invariant, and the paper provides no details on the specific Hamiltonian or the construction of the encoding gate. 
On the other hand, in the official implementation from GitHub%
\footnote{\href{https://github.com/wqs1999/RP-EQGNN_YiFanZhu}{GitHub repository}; commit \texttt{da7cba0}, accessed on 2025-12-31.}, they use a Pauli-$Z$ Hamiltonian and a direct coordinate encoding, which does not ensure $\mathrm{SU}(2)$ invariance.

For the Light/Mid variants, we inspected the authors' official public implementation.
We found that the released code does not provide a one-to-one realization of the above modular decomposition.
Specifically, the implementation consists of (i) coordinate encoding, (ii) node-feature encoding, and (iii) edge-conditioned entanglement, without the full structure described in the paper.
In our baseline, we implement a compact RP-EQGNN circuit that follows the overall design of the official implementation, while adapting the feature definitions to the point cloud setting. For node-feature encoding, we apply single-qubit $\mathrm{RY}$ and $\mathrm{RX}$ rotations parameterized by a scalar node feature defined as the point norm.
For edge-conditioned entanglement, we use the pairwise distance between two points and the angle between their position vectors (with respect to the origin) as edge features. The resulting architectural differences from the official implementation and our concrete Light/Mid configurations are summarized in Table~\ref{tab:rpeqgnn_family}.

\FloatBarrier

\begin{table}[t]
\centering
\footnotesize
\setlength{\tabcolsep}{6pt}
\renewcommand{\arraystretch}{1.15}
\caption{PointNet family: Original vs.\ Light vs.\ Mid.}
\label{tab:pointnet}
\begin{tabular}{p{0.25\textwidth} p{0.24\textwidth} p{0.24\textwidth} p{0.24\textwidth}}
\toprule
\textbf{Component} & \textbf{PointNet\_Original} & \textbf{PointNet\_Light} & \textbf{PointNet\_Mid} \\
\midrule
Input T-Net ($3\times3$) & $3\to64\to128\to1024$ & $3\to8\to10$ & $3\to8\to10$ \\
\midrule
Per-point stem & $3\to64\to64$ & $3\to9$ & $3\to4\to8$ \\
\midrule
Feature T-Net ($k\times k$) & $64\to64\to128\to1024\to k^2$ ($k{=}64$) & $9\to9\to10\to k^2$ ($k{=}9$) & $8\to16\to 20\to k^2$ ($k{=}8$) \\
\midrule
Feature blocks (per-point) & $64\to64\to128\to1024$ & (none) & $8\to32\to120\to8$ \\
\midrule
Pooling & $\to1024$ & $\to9$ & $\to8$ \\
\midrule
Classifier / Head & $1024\to512\to256\to K$ + Dropout(0.3) & $9\to16\to K$ (no dropout) & $8\to K$ (no dropout) \\
\midrule
Activation / Normalization & ReLU + BatchNorm & ReLU + BatchNorm & ReLU + BatchNorm \\
\midrule
Structural differences & Baseline architecture & T-Nets/feature blocks present (reduced); head reduced & T-Nets/feature blocks present (reduced); head reduced \\
\bottomrule
\end{tabular}
\end{table}

\begin{table*}[t]
\centering
\footnotesize
\setlength{\tabcolsep}{5pt}
\renewcommand{\arraystretch}{1.18}
\caption{TFN family: Original vs.\ Light vs.\ Mid.}
\label{tab:tfn_family}
\begin{tabular}{p{0.18\textwidth} p{0.26\textwidth} p{0.26\textwidth} p{0.26\textwidth}}
\toprule
\textbf{Component} & \textbf{TFN\_Original} & \textbf{TFN\_Light} & \textbf{TFN\_Mid} \\
\midrule

Input / Geometric Encoding
&
Point cloud coordinates $\mathbf{r}_i \in \mathbb{R}^3$ are used directly. A constant scalar field is embedded from ones into one $l=0$ channel. Relative geometry is encoded through pairwise offsets $\mathbf{r}_{ij}$, distances $d_{ij}$, and Gaussian radial basis functions.
&
Same as original.
&
Same as light. \\

\midrule
Channel Dimensions
&
Official shape-classification demo uses layer dims $= [1,4,4,4]$. Thus the retained orders evolve as
$l=0: 1 \rightarrow 4 \rightarrow 4 \rightarrow 4$,
$l=1: 0 \rightarrow 4 \rightarrow 4 \rightarrow 4$.
&
layer dims $= [1,4,8,16]$. 
&
layer dims $= [1,8,12,14]$. \\

\midrule
Radial / Filter Encoding &
Gaussian RBF encoding with $rbf\_count=4$; the radial function is a two-layer MLP whose hidden width defaults to the RBF input dimension, i.e., 4.
&
Uses the same RBF range and count as the original implementation, with radial hidden width 4.
&
Uses the same RBF range and count as the original implementation, but increases the radial hidden width to 39.
\\

\midrule
Equivariant Convolution / Self-Interaction
&
Each module applies the retained $l=0,1$ TFN convolution paths
($0\!\to\!0$, $0\!\to\!1$, $1\!\to\!1$, $1\!\to\!0$, $1\!\to\!1$),
followed by concatenation and order-wise self-interaction.
Only $l=0$ and $l=1$ outputs are retained.
&
Same as original.
&
Same as light. \\

\midrule
Feature Blocks / Depth
&
Three TFN modules. Module 1 starts from scalar-only input; Modules 2--3 process both scalar and vector fields.
Before self-interaction, the concatenated channel sizes are
$(l=0,l=1)=(1,1)$ in Module 1 and $(8,12)$ in Modules 2--3, then projected to width $4$.
&
Three TFN modules.
Before self-interaction, the concatenated channel sizes are
$(1,1)$ in Module 1,
$(8,12)$ in Module 2,
and $(16,24)$ in Module 3,
then projected to widths $4,8,16$.
&
Three TFN modules.
Before self-interaction, the concatenated channel sizes are
$(1,1)$ in Module 1,
$(16,24)$ in Module 2,
and $(24,36)$ in Module 3,
then projected to widths $8,12,14$. \\

\midrule
Pooling
&
Global pooling is applied only on the final $l=0$ scalar features to obtain a rotation-invariant descriptor.
&
Same as original.
&
Same as light. \\

\midrule
Classifier / Head
&
Single fully connected classifier $4 \rightarrow K$, where $K$ is the number of classes of the selected dataset.
&
Single fully connected classifier $16 \rightarrow K$.
&
Single linear head $14 \rightarrow K$. \\

\midrule
Activation / Normalization
&
Rotation-equivariant nonlinearity is applied after each self-interaction. The implementation uses ELU-style equivariant nonlinearity. No BatchNorm or LayerNorm is used.
&
Same as original.
&
Same as light. \\

\midrule
Structural Differences
&
Baseline architecture
&
Same TFN framework as the original, but with a slightly larger capacity.
&
Same framework as light, but with a larger capacity. \\

\bottomrule
\end{tabular}
\end{table*}

\begin{table}[t]
\centering
\footnotesize
\setlength{\tabcolsep}{6pt}
\renewcommand{\arraystretch}{1.15}
\caption{DGCNN family: Original vs.\ Light vs.\ Mid.}
\label{tab:dgcnn}
\begin{tabular}{p{0.25\textwidth} p{0.24\textwidth} p{0.24\textwidth} p{0.24\textwidth}}
\toprule
\textbf{Component} & \textbf{DGCNN\_Original} & \textbf{DGCNN\_Light} & \textbf{DGCNN\_Mid} \\
\midrule
Input Encoding & Edge feature MLPs in EdgeConv, from $(p_i,\,p_j,\,p_i{-}p_j)\to64$ & Edge feature MLPs with narrow channel width $(6\to7)$; we use the same formulation, but with edge features $(p_j - p_i,\, p_i)$ instead of $(p_i,\, p_j,\, p_i - p_j)$.
& Same as light, but with a narrow edge feature MLP mapping $(6\to8)$.

 \\
\midrule
Positional Encoding & No explicit PE; uses relative offsets $p_i{-}p_j$ within EdgeConv & No explicit PE; relative offsets used implicitly (same) & No explicit PE; relative offsets used implicitly (same) \\
\midrule
Stem / Input Embedding & EdgeConv-1: dynamic $k$-NN graph, $6\to64$ & EdgeConv-1: dynamic $k$-NN graph, $6\to7$ & EdgeConv-1: dynamic $k$-NN graph, $6\to8$ \\
\midrule
Feature Blocks and Depth & EdgeConv-2: $2{\times}64\to64$; EdgeConv-3: $2{\times}64\to128$; EdgeConv-4: $2{\times}128\to256$ & EdgeConv-2: $2{\times}7\to15$; EdgeConv-3: $2{\times}15\to8$ & EdgeConv-2: $2{\times}8\to32$; EdgeConv-3: $2{\times}32\to72$ \\
\midrule
Sampling / Grouping & $k$-NN with dynamic graph (per layer); no FPS & $k$-NN with dynamic graph; no FPS & $k$-NN with dynamic graph; no FPS \\
\midrule
Pooling & Global max pooling over concatenated multiscale edge features $\to512$ & Global max pooling $\to30$ & Global max pooling $\to112$ \\
\midrule
Classifier / Head & Fully connected $512\to256\to K$ & Fully connected $30\to13\to9\to K$ & Fully connected $112\to16\to9\to K$ \\
\midrule
Activation / Normalization & Leaky ReLU + BatchNorm & Leaky ReLU + BatchNorm & Leaky ReLU + BatchNorm \\
\midrule
Structural Differences  & Baseline architecture & Fewer stages; narrower channels; no extra $1{\times}1$ conv before pooling & Fewer stages; narrower channels; no extra $1{\times}1$ conv before pooling \\
\bottomrule
\end{tabular}
\end{table}

\begin{table}[t]
\centering
\footnotesize
\setlength{\tabcolsep}{6pt}
\renewcommand{\arraystretch}{1.15}
\caption{VN-PointNet family: Original vs.\ Light vs.\ Mid.}
\label{tab:vnpointnet}
\begin{tabular}{p{0.25\textwidth} p{0.24\textwidth} p{0.24\textwidth} p{0.24\textwidth}}
\toprule
\textbf{Component} & \textbf{VN-PointNet\_Original} & \textbf{VN-PointNet\_Light} & \textbf{VN-PointNet\_Mid} \\
\midrule

Local graph 
& Fixed $k$-NN ($k{=}20$)
& Same as original with small $k$.
& Same as light. \\

\midrule
VN stem + local pooling
& conv\_pos: $3 \rightarrow 21$, then symmetric neighbor pooling
(Mean or VN-Max)
& conv\_pos: $3 \rightarrow 2$, then Mean pool
& conv\_pos: $3 \rightarrow 4$, then Mean pool \\

\midrule
VN encoder trunk
& conv1: $21 \rightarrow 21$ $\to$
feature-transform branch $\to$
conv2: $42 \rightarrow 42$ $\to$
conv3: $42 \rightarrow 341$ + VNBatchNorm
& conv1: $2 \rightarrow 2$ $\to$
feature-transform branch $\to$
conv2: $4 \rightarrow 4$ $\to$
conv3: $4 \rightarrow 10$ + VNBatchNorm
& conv1: $4 \rightarrow 4$ $\to$
feature-transform branch $\to$
conv2: $8 \rightarrow 8$ $\to$
conv3: $8 \rightarrow 24$ + VNBatchNorm \\

\midrule
Feature-transform branch 
& stn\_conv1 $21 \rightarrow 21$,
stn\_conv2 $21 \rightarrow 42$,
stn\_conv3 $42 \rightarrow 341$,
stn\_fc1 $341 \rightarrow 170$,
stn\_fc2 $170 \rightarrow 85$,
stn\_fc3 $85 \rightarrow 21$
& stn\_conv1 $2 \rightarrow 2$,
stn\_conv2 $2 \rightarrow 4$,
stn\_conv3 $4 \rightarrow 10$,
stn\_fc1 $10 \rightarrow 5$,
stn\_fc2 $5 \rightarrow 2$,
stn\_fc3 $2 \rightarrow 2$
& stn\_conv1 $4 \rightarrow 4$,
stn\_conv2 $4 \rightarrow 8$,
stn\_conv3 $8 \rightarrow 24$,
stn\_fc1 $24 \rightarrow 12$,
stn\_fc2 $12 \rightarrow 6$,
stn\_fc3 $6 \rightarrow 4$ \\

\midrule
Global context + canonicalization
& Concatenate global mean to each point; VNStdFeature on $682$ channels; flatten to $2046$ scalars; global max over points
& Same pipeline; VNStdFeature on $20$ channels; flatten to $60$ scalars; global max over points
& Same pipeline; VNStdFeature on $48$ channels; flatten to $144$ scalars; global max over points \\

\midrule
Classifier / head
& Linear $2046 \rightarrow 512 \rightarrow 256 \rightarrow K$ + BN + ReLU;
dropout ($p{=}0.4$) before final FC
& Linear $60\to 6\to 5\to K$ + BN + ReLU 
& Linear $144\to 15\to 14\to K$ + BN + ReLU  \\

\midrule
Structural differences
& Baseline architecture
& Widths reduced and dropout removed
& Same as Light, with larger widths \\

\bottomrule
\end{tabular}
\end{table}

\begin{table}[t]
\centering
\footnotesize
\setlength{\tabcolsep}{6pt}
\renewcommand{\arraystretch}{1.15}
\caption{Point TF family: Original vs.\ Light vs.\ Mid.}
\label{tab:pointtransformer}
\begin{tabular}{p{0.25\textwidth} p{0.24\textwidth} p{0.24\textwidth} p{0.24\textwidth}}
\toprule
\textbf{Component} & \textbf{Point TF\_Original} & \textbf{Point TF\_Light} & \textbf{Point TF\_Mid} \\
\midrule
Input Encoding & Linear/Conv1d $C_{\mathrm{in}}\to32$ before Block 1 & Linear/Conv1d $C_{\mathrm{in}}\to5$ before Block 1 & Linear/Conv1d $C_{\mathrm{in}}\to8$ before Block 1 \\
\midrule
Positional Encoding & Relative PE $\delta(p_i{-}p_j)$ via small MLP; added to attention and feature bias & Relative PE retained; no learned absolute PE & Relative PE retained; no learned absolute PE \\
\midrule
Input Embedding & Block 1: Transformer block on 32-d features & Block 1: Transformer block on 5-d features & Block 1: Transformer block on 8-d features \\
\midrule
Feature Blocks and Depth & Blocks: $32$ (B1) $\to64$ (B2) $\to128$ (B3) $\to256$  (B4) $\to512$ (B5) & Blocks: $5$ (B1) $\to7$ (B2) $\to8$ & Blocks: $8$ (B1) $\to14$ (B2) $\to18$ \\
\midrule
Transformer Block & Block 1: $32\!\to\!32$; Attn: $32\!\to\!32$; Linear: $32\!\to\!32$; Residual$\times1$

Block 2: $64\!\to\!64$; Attn: $64\!\to\!64$; Linear: $64\!\to\!64$; Residual$\times1$

Block 3: $128\!\to\!128$; Attn: $128\!\to\!128$; Linear: $128\!\to\!128$; Residual$\times1$

Block 4: $256\!\to\!256$; Attn: $256\!\to\!256$; Linear: $256\!\to\!256$; Residual$\times1$

Block 5: $512\!\to\!512$; Attn: $512\!\to\!512$; Linear: $512\!\to\!512$; Residual$\times1$

& Block 1: $5\!\to\!7$; Attn: $7\!\to\!7$; Linear: $7\!\to\!28\!\to\!7$; Residual$\times2$

Block 2: $7\!\to\!8$; Attn: $8\!\to\!8$; Linear: $8\!\to\!32\!\to\!8$; Residual$\times2$
& Block 1: $8\!\to\!14$; Attn: $14\!\to\!14$; Linear: $14\!\to\!56\!\to\!14$; Residual$\times2$

Block 2: $14\!\to\!18$; Attn: $18\!\to\!18$; Linear: $18\!\to\!72\!\to\!18$; Residual$\times2$\\
\midrule
Sampling / Grouping & Hierarchical downsampling (e.g., FPS); local $k$-NN & No hierarchical downsampling; fixed small $k$-NN on full set & No hierarchical downsampling; fixed small $k$-NN on full set \\
\midrule
Classifier / Head & The paper does not provide classifier-layer details, and we could not verify an official classification implementation & Conv1d $8\to8\to K$; then mean over points & Conv1d $18\to18\to K$; then mean over points \\
\midrule
Activation / Normalization & The paper does not provide classifier-layer details, and we could not verify an official classification implementation & ReLU + LayerNorm & ReLU + LayerNorm \\
\midrule
Structural Differences  & Baseline architecture & Removes hierarchy; keeps explicit input projection and relative PE & Removes hierarchy; keeps explicit input projection and relative PE \\
\bottomrule
\end{tabular}
\end{table}

\begin{table}[t]
\centering
\footnotesize
\setlength{\tabcolsep}{6pt}
\renewcommand{\arraystretch}{1.15}
\caption{PointMLP Family: Original vs.\ Light vs.\ Mid.}
\label{tab:pointmlp}
\begin{tabular}{p{0.25\textwidth} p{0.24\textwidth} p{0.24\textwidth} p{0.24\textwidth}}
\toprule
\textbf{Component} & \textbf{PointMLP(elite)\_Original} & \textbf{PointMLP\_Light} & \textbf{PointMLP\_Mid} \\
\midrule
ResidualMLPBlock & Two FC--BN--Act; skip projection if in=out & Two FC--BN--Act; skip proj if in=out & Two FC--BN--Act; skip proj if in=out\\
\midrule
PointMLPStage & Pre residual (in$\to$in); Grouping + local MLP; Post residual (in$\to$out) & Pre residual(in$\to$in); Aggregation + MLP (no grouping); Post residual(in$\to$out) & Pre residual(in$\to$in); Aggregation + MLP (no grouping); Post residual(in$\to$out) \\
\midrule
Stem / Input & Linear $3\to32$; BN+ReLU & Linear $3\to6$; BN+ReLU & Linear $3\to8$; BN+ReLU \\
\midrule
Stage 1 & $32\to64$ & $6\to10$ & $8\to8$ \\
\midrule
Stage 2 & $64\to128$ & $10\to12$ & $8\to16$ \\
\midrule
Stage 3 & $128\to256$ & --- & $16\to32$ \\
\midrule
Stage 4 & $256\to256$ & --- & --- \\
\midrule
Encoding & Stage-wise encoders with FPS+$k$-NN; per-stage outputs: $32\to64$, $64\to128$, $128\to256$, $256\to256$ with down sampling & Once (outside stages): GAEncode with $k$-NN; $6\to6$ (no down sampling)& Once (outside stages): GAEncode with $k$-NN; $8\to8$ (no down sampling)\\
\midrule
Pooling & Max over 64 sampled points $\to256$ & Max $\to12$ & Max $\to32$ \\
\midrule
Classifier / Head & $256\to512\to256\to K$ & $12\to8\to K$ & $32\to16\to8\to K$ \\
\midrule
Activation / Normalization & ReLU + BatchNorm & ReLU + BatchNorm & ReLU + BatchNorm \\
\midrule
Structural differences & Baseline architecture & Hierarchy removed; widths reduced; single-scale & Hierarchy removed; widths reduced; single-scale \\
\bottomrule
\end{tabular}
\end{table}

\begin{table}[t]
\centering
\footnotesize
\setlength{\tabcolsep}{6pt}
\renewcommand{\arraystretch}{1.15}
\caption{PointMamba family: Original vs.\ Light vs.\ Mid.}
\label{tab:pointmamba}
\begin{tabular}{p{0.25\textwidth} p{0.24\textwidth} p{0.24\textwidth} p{0.24\textwidth}}
\toprule
\textbf{Component} & \textbf{PointMamba\_Original} & \textbf{PointMamba\_Light} & \textbf{PointMamba\_Mid} \\
\midrule
Input Encoding / Tokens
& Point cloud is partitioned into local point patches by FPS+$k$-NN; each patch is converted to relative coordinates and encoded by a PointNet-style MLP into a 384-dimensional patch token.
& No FPS or $k$-NN grouping; each 3D point is directly treated as a token and projected by a linear layer $3\!\to\!5$, yielding $N$ tokens per traversal (Hilbert/Trans-Hilbert), concatenated to $2N$ tokens width $=5$. 
& Same as Light, a linear point embedding $3\!\to\!5$ is used, producing $2N$ tokens. \\[2pt]
\midrule
SFC Ordering \& Order Indicator
& Two space-filling curves (Hilbert and Trans-Hilbert) and a learnable order-indicator module with per-order scale--shift parameters $(\gamma,\beta)$ are applied to input tokens and the two traversals are concatenated into a single sequence.
& Uses exactly the same pair of SFC traversals and the same order-indicator mechanism $(\gamma,\beta)$ as the original.
& Same SFC ordering and order-indicator mechanism as Light. \\[2pt]
\midrule
Mamba Blocks (Depth \& Width)
& Vanilla Mamba encoder with 12 Mamba blocks; each block operates on 384-dimensional tokens.
& Four compact Mamba blocks with $d_{\text{model}}{=}5$, inner width $d_{\text{inner}}{=}10$ (expand$=2$), state size $d_{\text{state}}{=}6$, and depthwise Conv1d kernel size $d_{\text{conv}}{=}2$.
& Nineteen compact Mamba blocks with $d_{\text{model}}{=}5$, $d_{\text{inner}}{=}10$ (expand$=2$), $d_{\text{state}}{=}6$, and depthwise Conv1d kernel size $d_{\text{conv}}{=}2$. \\[2pt]
\midrule
Selective SSM Implementation
& Selective SSM recurrence $(A,B,C,\Delta t,D)$ implemented via the official \texttt{mamba-ssm} library with fused CUDA kernels and other low-level optimizations.
& Implements the same selective SSM equations in plain PyTorch (state tensor $B\times d_{\text{inner}}\times d_{\text{state}}$), without fused kernels, targeting simple CPU-friendly execution.
& Same PyTorch Mamba implementation as Light; only the dimensionalities $(d_{\text{model}},d_{\text{inner}},d_{\text{state}},d_{\text{conv}})$ differ as above. \\[2pt]
\midrule
Pooling \& Classifier Head
& Global average pooling over the final token sequence, followed by a single linear classifier $384\!\to\!K$.
& LayerNorm on token features, then mean pooling over all $2N$ tokens, followed by linear classifier $5\!\to\!K$.
& LayerNorm on token features, then mean pooling over all $2N$ tokens, followed by linear classifier $5\!\to\!K$. \\[2pt]
\midrule
Activation / Normalization
& Pre-LayerNorm inside each Mamba block; SiLU activation after the depthwise Conv1d branch; sigmoid gate on the SSM output; residual connection $x{+}y$.
& Uses the same block-level structure as the original (pre-LN, SiLU after Conv1d, sigmoid gate, residual connection).
& Same as Light. \\[2pt]
\midrule
Structural Differences 
& Baseline architecture
& Operates directly on point tokens (no FPS, $k$-NN), with drastically reduced widths and depth to fit a $\sim$1.5K-parameter budget and to run with the plain PyTorch Mamba implementation instead of \texttt{mamba-ssm}.
& Same tokenizer and PyTorch Mamba design as Light, but deeper settings targeting a $\sim$7K-parameter budget. \\
\bottomrule
\end{tabular}
\end{table}

\begin{table}[t]
\caption{Mamba3D family: Original vs. Light vs. Mid.}
\label{tab:mamba3d_family}
\centering
\small
\begin{tabular}{p{0.19\textwidth}p{0.25\textwidth}p{0.25\textwidth}p{0.25\textwidth}}
\toprule
Component
& \textbf{Mamba3D\_Original}
& \textbf{Mamba3D\_Light}
& \textbf{Mamba3D\_Mid}
\\
\midrule
Input encoding
& Patch embeddings via FPS ($L$ centers) + $k$-NN (k points/patch) + light-PointNet, then add \texttt{[CLS]} and learnable positional encoding to obtain an $L+1$ token sequence.
& Per-point tokenization with a linear point embedding $3\!\to\!5$, plus a coordinate-based positional embedding added to each token; prepend a learnable \texttt{[CLS]} token (with its own positional parameter). No FPS/$k$-NN patching is used; tokens remain per-point.
& Same tokenizer and LNP as Light, with point embedding $3\!\to\!5$ and learnable \texttt{[CLS]} token.
\\
\addlinespace[0.2em]
\midrule
bi-SSM \& LNP blocks
& 12-layer encoder at width $384$; each layer applies LNP with k=4 and a bi-SSM (L+SSM and C-SSM) following the original setting.
& Two compact blocks (LNP + bi-SSM). L+SSM uses $(d_{\text{model}}, d_{\text{state}}, d_{\text{conv}}, \text{expand})=(5,6,2,2)$, while C-SSM uses $(d_{\text{model}}, d_{\text{state}}, d_{\text{conv}}, \text{expand})=(N\!+\!1,6,2,2)$ (with $N\in\{4,5,6\}$); LNP uses a global all-pairs neighborhood.
& Eight compact blocks with the same per-block settings as Light: L+SSM $(5,6,2,2)$ and C-SSM $(N\!+\!1,6,2,2)$ (with $N\in\{4,5,6\}$); LNP uses a global all-pairs neighborhood.
\\

\addlinespace[0.2em]
\midrule
Selective SSM implementation
& Uses the official \texttt{mamba-ssm} selective SSM module with fused CUDA kernels for both L+SSM and C$-$SSM.
& Reimplements the same selective SSM update in pure PyTorch for L+SSM and C$-$SSM with feature-channel flip; no fused CUDA kernels or low-level optimizations.
& Same PyTorch selective SSM implementation as Light; only depth differs.
\\
\addlinespace[0.2em]
\midrule
Pooling \& classifier head
& Classifies using the final \texttt{[CLS]} token (linear head on \texttt{[CLS]} after the encoder).
& LayerNorm + linear classifier $5\!\to\!K$ on the final \texttt{[CLS]} feature.
& Same as Light: LayerNorm + linear classifier $5\!\to\!K$ on the final \texttt{[CLS]} feature.
\\
\addlinespace[0.2em]
\midrule
Activation / normalization
& Layer normalization and SiLU activation inside each Mamba block with sigmoid gating on the SSM output and residual connections.
& Same block-level normalization and activations as Original (LayerNorm, SiLU, sigmoid gate, residual).
& Same as Light.
\\
\addlinespace[0.2em]
\midrule
Structural differences 
& Baseline architecture
& Keeps the original LNP+bi-SSM+\texttt{[CLS]} architecture but removes FPS/$k$-NN and shrinks embedding, state size and number of blocks to fit a $\sim$1.5K parameter budget; uses a compact PyTorch SSM without fused CUDA kernels.
& Same architecture as Light, with deeper depth, fitting a $\sim$7K parameter budget.
\\
\bottomrule
\end{tabular}
\end{table}

\clearpage
\begin{table}[t]
\caption{RP-EQGNN family: Original vs.\ Light vs.\ Mid.}
\label{tab:rpeqgnn_family}
\centering
\small
\begin{tabular}{p{0.19\textwidth}p{0.25\textwidth}p{0.25\textwidth}p{0.25\textwidth}}
\toprule
Component
& \textbf{RP-EQGNN\_Original}
& \textbf{RP-EQGNN\_Light}
& \textbf{RP-EQGNN\_Mid}
\\
\midrule
Coordinate encoder
& Geometric information encoding circuit (rotation encoding) maps node coordinates into a Hilbert space  (e.g., $\mathrm{RY}/\mathrm{RX}/\mathrm{RY}$ applied to $(x,y,z)$).
& Same as original.
& Same as original.
\\
\addlinespace[0.2em]
\midrule
Node-feature encoder
& The official implementation uses 11-dimensional per-node input features, injected through a feature-encoding circuit with $4\times\mathrm{RY}+7\times\mathrm{RX}$ rotations per qubit.
& Point cloud instantiation: we define a scalar node feature $f_i=\lVert p_i\rVert$ and apply $\mathrm{RY}$/$\mathrm{RX}$ rotations parameterized by $f_i$ with trainable scaling.
& Same as Light, with larger depth.
\\
\addlinespace[0.2em]
\midrule
Edge-conditioned entanglement
& Applies edge-conditioned two-qubit entangling gates (RXX), where a 1D scalar edge feature is used per edge to modulate the interaction strength.
& Complete-graph entanglement using coordinate-derived edge features: distance $d_{ij}=\lVert p_i-p_j\rVert$ and inter-vector angle $\alpha_{ij}$ (w.r.t.\ the origin) modulate two-qubit interactions (IsingXX).
& Same as Light, with larger depth.
\\
\addlinespace[0.2em]
\midrule
Measurement and classical head
& Produces outputs via a global symmetric quantum readout, specifically the expectation value of the observable $ Z^{\otimes N}$, followed by a small classical post-processing MLP($1\to3\to1$).
& Global symmetric readout yielding one scalar quantum feature per sample, followed by an MLP head ($1\!\to\!32\!\to\!32\!\to\!K$); uses the same readout observable $Z^{\otimes N}$ as the official implementation.
& Same readout; optionally a wider MLP head ($1\to32\to64\to 64\to32 \to K$) under a larger budget.
\\
\addlinespace[0.2em]
\midrule
Structural differences 
& Baseline architecture
& Adapts the official structure to point cloud classification by redefining node/edge features from coordinates, implemented in JAX+PennyLane; the same three components are repeated with depth $L{=}50$.
& Same as Light (JAX+PennyLane) with $L{=}50$, scaling capacity primarily via depth/width.

\\
\bottomrule
\end{tabular}
\end{table}
\clearpage

\begin{table}[t]
\centering
\footnotesize
\setlength{\tabcolsep}{6pt}
\renewcommand{\arraystretch}{1.15}
\caption{HyQuRP vs.\ Set-MLP (Light models).}
\label{tab:hyq_setmlp_light}
\begin{tabular}{p{0.26\textwidth} p{0.33\textwidth} p{0.33\textwidth}}
\toprule
\textbf{Component} & \textbf{HyQuRP\_Light} & \textbf{Set-MLP\_Light} \\
\midrule
High-level idea
& Hybrid quantum-classical model.
& Purely classical MLP with pooling, matching HyQuRP\_Light’s classical part. \\
\midrule
Input point shape
& Batches of $N$ 3D points ($N,3$). 
& Batches of $N$ 3D points ($N,3$). \\
\midrule
MLP input representation
& Quantum part produces two types of $\binom{N}{2}$ pairwise measurements ($\binom{N}{2},2$).
& For comparison with HyQuRP, a neural network transforms $(N,3) \to (N,2)$ \\
\midrule
Per-point / token MLP
& MyNN: shared $2\to4\to4$ MLP (with $\tanh$) on each Hamiltonian feature token.
& SimpleNN: shared $2\to4\to4$ MLP (with $\tanh$) on each per-point feature token. \\
\midrule
Pooling 
& Stats over $\binom{N}{2}$ tokens (mean / max / min / sum / std / var) on $4$-d features $\to$ $24$-d vector.
& Same symmetric stats over $N$ tokens on $4$-d features $\to$ $24$-d vector. \\
\midrule
Classifier / head
& $24\to24\to24\to K$ MLP (with $\tanh$) on the global descriptor.
& Same $24\to24\to24\to K$ MLP (with $\tanh$) head as HyQuRP\_Light. \\
\midrule
Explicit symmetry handling
& Rotation and permutation-invariant.
& Only permutation invariance from symmetric pooling. \\
\bottomrule
\end{tabular}
\end{table}

\begin{table}[t]
\centering
\footnotesize
\setlength{\tabcolsep}{6pt}
\renewcommand{\arraystretch}{1.15}
\caption{HyQuRP vs.\ Set-MLP (Mid models).}
\label{tab:hyq_setmlp_mid}
\begin{tabular}{p{0.26\textwidth} p{0.33\textwidth} p{0.33\textwidth}}
\toprule
\textbf{Component} & \textbf{HyQuRP\_Mid} & \textbf{Set-MLP\_Mid} \\
\midrule
High-level idea
& Same hybrid design as Light, but with a wider classical head for more capacity.
& Purely classical MLP with pooling, matching HyQuRP\_Mid’s classical part. \\
\midrule
Input point shape
& Same as HyQuRP\_Light: Batches of $N$ 3D points ($N,3$).
& Same as Set-MLP\_Light: Batches of $N$ 3D points ($N,3$). \\
\midrule
MLP input representation
& Same as HyQuRP\_Light.
& Same as Set-MLP\_Light. \\
\midrule
Per-point / token MLP
& MyNN: shared $2\to8\to16\to32$ MLP (with $\tanh$); higher per-token capacity.
& SimpleNN: shared $2\to8\to16\to32$ MLP (with $\tanh$); higher per-token capacity. \\
\midrule
Pooling 
& Stats over $\binom{N}{2}$ tokens on $32$-d features $\to$ $192$-d vector.
& Same stats over $N$ tokens on $32$-d features $\to$ $192$-d vector. \\
\midrule
Classifier / head
& $192\to32\to16\to8\to K$ MLP (with $\tanh$) on the global descriptor.
& $192\to32\to16\to8\to K$ MLP (with $\tanh$) head as HyQuRP\_Mid. \\
\midrule
Explicit symmetry handling
& Same as HyQuRP\_Light.
& Same as Set-MLP\_Light. \\
\bottomrule
\end{tabular}
\end{table}

\FloatBarrier

\FloatBarrier
\section{Additional Details}
\label{app:additional_details}

\subsection{Details of summary metrics}
\label{app:summary_metrics}

We follow the reporting convention in Table~\ref{tab:overall_results} and specify the exact aggregation rules used to compute $\pm\sigma$ and the average rank. $\sigma$ denotes the sample standard deviation across seeds, computed with Bessel's correction (division by $S-1$), where $S=7$. 
There were a small number of failed runs (3 out of 1,008) where the training loss did not decrease, likely due to loose hyperparameter tuning and instability of neighborhood construction in the sparse-point regime. For these runs, we retried training with the adjacent learning rates on the predefined grid 
$\{10^{-1},10^{-2},10^{-3},10^{-4},10^{-5}\}$, using the lower neighbor first and the higher neighbor second. With this fixed rule, we observed no further failures.
For each setting (dataset, point budget, capacity), we rank models by the mean test accuracy; ties are broken by smaller standard deviation. We then average the ranks across all settings. For the pairwise performance gaps reported in the subsection~\ref{subsec:res_ana}, we report the mean and sample standard deviation of seed-matched accuracy differences across seeds.

\subsection{Logit comparisons under rotations and permutations}
\label{app:logit_comparisons}
Let $z, z' \in \mathbb{R}^{C}$ denote the logit vectors produced by the same model
for an input $x$ and its transformed version $g\!\cdot\!x$ (e.g., rotation/permutation), respectively.
We define:
\begin{equation}
s_{\mathrm{cosine}}(z,z')
:= \frac{z^\top z'}{\|z\|_2 \, \|z'\|_2}.
\end{equation}
\begin{equation}
s_{\mathrm{norm}}(z,z')
:= \frac{\|z'\|_2}{\|z\|_2}.
\end{equation}
Note that $s_{\mathrm{norm}}(z,z')$ may exceed 1.

To probe logit-level invariance, we compare the logits for an input and those for its
rotated or permuted counterpart using the two quantities defined above. For selected baselines that do not enforce both symmetries by construction (MLP, Set-MLP, and PointNet), logits were extracted after
10 epochs of training to examine whether approximate invariance can be learned from
data. For VN-PointNet, TFN, and HyQuRP, the same quantities were evaluated without
training, since invariance is built into the architecture and does not depend on learned
weights. Accordingly, Table~\ref{tab:model_logit_comparison} reports logit-level invariance comparisons rather than predictive performance.

\par\medskip
\refstepcounter{table}\label{tab:model_logit_comparison}
\begin{center}
\setlength{\tabcolsep}{6pt}
\renewcommand{\arraystretch}{1.08}
\begin{tabular}{lcc}
\toprule
Model & cosine similarity & $\ell_2$-norm ratio \\
\midrule
MLP         & 0.37 & 0.66 \\
Set-MLP     & 0.69 & 1.18 \\
PointNet    & 0.98 & 0.63 \\
VN-PointNet & 1.00 & 1.00 \\
TFN         & 1.00 & 1.00 \\
HyQuRP      & 1.00 & 1.00 \\
\bottomrule
\end{tabular}

\vspace{0.5em}

\parbox{0.92\linewidth}{\small
\textbf{TABLE~\thetable.} \textbf{Logit Comparisons under Rotations and Permutations.}
The table reports the cosine similarity and the $\ell_2$-norm ratio between the logits
produced for an input and those produced for its transformed counterpart. Values closer
to 1 indicate stronger invariance.}
\end{center}
\par\medskip

\end{document}